\newtheorem{example}{Example} 
\newtheorem{theorem}{Theorem}
\newtheorem{lemma}[theorem]{Lemma} 
\newtheorem{proposition}[theorem]{Proposition} 
\newtheorem{remark}[theorem]{Remark}
\newtheorem{definition}[theorem]{Definition}
\definecolor{mygreen}{RGB}{28,172,0} % color values Red, Green, Blue
\definecolor{mylilas}{RGB}{170,55,241}
\definecolor{mygray}{gray}{0.95}
\newcommand{\mscr}[1]{\mathscr{#1}}
\newcommand{\ZZ}{\mathbb{Z}}
\newcommand{\RR}{\mathbb{R}}
\newcommand{\EE}{\mathbb{E}}
\newcommand{\de}{\delta}
\newcommand{\ep}{\epsilon}
\newcommand{\De}{\Delta}
\newcommand{\defeq}{\vcentcolon=}
\DeclareMathOperator{\var}{Var}
\begin{document}

\title{Optimizing Noise for \texorpdfstring{$f$}{f}-Differential Privacy\\
via Anti-Concentration and Stochastic Dominance}

\author{\name Jordan Awan \email jawan@purdue.edu \\
       \addr Department of Statistics\\
       Purdue University\\
       West Lafayette, IN 47907, USA
       \AND
       \name Aishwarya Ramasethu \email aramaset@purdue.edu \\
       \addr Department of Statistics\\
       Purdue University\\
       West Lafayette, IN 47907, USA}

\editor{}

\maketitle
\thispagestyle{empty}
\begin{abstract}%   <- trailing '%' for backward compatibility of .sty file
In this paper, we establish anti-concentration inequalities for additive noise mechanisms which achieve $f$-differential privacy ($f$-DP), a notion of privacy phrased in terms of a tradeoff function $f$ which limits the ability of an adversary to determine which individuals were in the database. We show that \emph{canonical noise distributions} (CNDs), proposed by \citet{awan2023canonical}, match the anti-concentration bounds at half-integer values, indicating that their tail behavior is near-optimal. We also show that all CNDs are sub-exponential, regardless of the $f$-DP guarantee. In the case of log-concave CNDs, we show that they are the stochastically smallest noise compared to any other noise distributions with the same privacy guarantee. In terms of integer-valued noise, we propose a new notion of discrete CND and prove that a discrete CND always exists, can be constructed by rounding a continuous CND, and that the discrete CND is unique when designed for a statistic with sensitivity 1. We further show that the discrete CND at sensitivity 1 is stochastically smallest compared to other integer-valued noises. Our theoretical results shed light on the different types of privacy guarantees possible in the $f$-DP framework and can be incorporated in more complex mechanisms to optimize performance.

\end{abstract}

\begin{keywords}
canonical noise distribution, total variation, discrete noise, log-concave distribution, sub-exponential distribution
\end{keywords}

\section{Introduction}
Differential privacy (DP), introduced by \citet{dwork2006calibrating}, is the state-of-the-art framework for formal privacy protection. DP methods require the introduction of noise into data analyses, which obscures the contribution of any particular individual.  Since its inception, DP has grown in popularity and is now employed by leading tech giants like Google \citep{erlingsson2014rappor} and Apple \citep{tang2017privacy}, as well as by the US Census Bureau \citep{abowd2018us}.

While there have been several variants of DP proposed, they all quantify the privacy risk in terms of a similarity measure between the distributions of outputs, when the mechanism is run on two databases differing in a single individual's contribution. The differences between these variants are primarily in how the ``similarity'' is measured. Recently, \citet{dong2022gaussian} proposed $f$-DP, which is rooted in hypothesis testing. The $f$ parameter in $f$-DP is a function which offers a more expressive quantification of the privacy risk compared to other notions of DP. The $f$-DP framework has the benefit of lossless conversions to the popular $(\ep,\de)$-DP, as well as divergence-based notions of DP \citep{bun2016concentrated,mironov2017renyi}.

The most basic technique for achieving DP is independent noise addition to a statistic of interest. Popular examples of such noise distributions are Laplace and Gaussian. In this paper, we investigate the limitation of what noise can be used to satisfy $f$-DP, and optimize the noise distributions in various settings.\\ %We find that there is a deep connection between our investigation and the canonical noise distributions (CND) investigated by \citet{awan2023canonical}. While \citet{awan2023canonical} showed that CNDs lead to optimal hypothesis tests in $f$-DP, they did not quantify their optimality in any other sense.

\noindent \textbf{Our Contributions:} 
To understand the limits on the magnitude of noise required to satisfy $f$-DP, we develop an anti-concentration inequality, which gives an upper limit on the concentration of the noise distribution about its center. This upper bound is determined by the total variation distance between the noise distribution and a shifted version of itself, which can be bounded in terms of the $f$-DP guarantee.

We show that canonical noise distributions (CNDs), proposed by \citet{awan2023canonical} match our anti-concentration bound, which implies that their tail behavior is near-optimal. We also show that all CNDs are sub-exponential, indicating that it is never necessary to use heavier-tailed distributions to achieve $f$-DP. In the special case of log-concave CNDs, a subclass previously investigated by \citet{awan2022log}, we show that the log-concave CND is stochastically smaller than any other additive noise with the same privacy guarantee.

Lastly, we propose a new concept of discrete CND, which generalizes CNDs to integer-valued noises. Unlike continuous CNDs, which are solely defined in terms of a tradeoff function $f$, a discrete CND is also defined for a specific sensitivity. We show that a discrete CND always exists for any $f$ and any sensitivity, and that one can be constructed by rounding a continuous CND to integer values. In the case that the sensitivity is 1, we prove that there is a unique discrete CND for each tradeoff function and that this noise mechanism is stochastically smaller than any other integer-centered discrete noise satisfying $f$-DP.\\

\noindent \textbf {Related Work: }\label{s:related}
Various ways to optimize noise for differential privacy have been previously explored. 
\citet{ghosh2012universally} show that for a wide variety of loss functions, the geometric mechanism (discrete Laplace) minimizes the expected Bayesian loss under $\epsilon$-DP, regardless of what prior is used. \citet{gupte2010universally}  prove a similar result, showing that the geometric mechanism optimizes the minimax loss under $\epsilon$-DP.  \citet{qin2022differential2} consider necessary conditions for general discrete distributions to satisfy either $(\ep,0)$-DP or $(\ep,\de)$-DP. In the extended version of their paper,  they optimize the Wasserstein distance between the mechanism's input and output distributions, and show that a discrete staircase distribution is optimal \citep{qin2022differential}. 

\citet{geng2015optimal} focus on the minimization of loss functions in $(\ep,0)$-DP, with particular emphasis on the mean absolute error and mean squared error losses. They established an optimal noise-adding mechanism which they call the staircase mechanism. \citet{geng2015approx} consider optimizing the same loss functions in $(\ep,\de)$-DP, and found that in the high privacy regime, the discrete Laplace and uniform noise were nearly optimal for the mean absolute error and mean squared error losses.
\citet{SORIACOMAS2013200} construct an optimal data-independent noise-adding mechanism under $\ep$-DP, where the optimality is measured by moving the most probability mass towards zero. Their solution results in a staircase distribution, similar to \citet{geng2015optimal}.
In the setting of $\ep$-local-DP, a stronger form of differential privacy than the one considered in this paper, \citet{kairouz2016extremal} showed that for any sub-linear utility function, there exists a staircase mechanism which maximizes this utility.

The literature above is focused on finding the optimal noise under either $(\ep,0)$-DP or $(\epsilon,\delta)$-DP. $f$-Differential Privacy \citep{dong2022gaussian} is a more generalized privacy framework. \citet{awan2023canonical} was the first work to build a noise-adding mechanism for arbitrary $f$-DP, which they called a canonical noise distribution (CND). 
%This work attempts to find optimal noise under the $f$-Differential Privacy framework in which is not explored in the literature. 
CNDs were designed to tightly match the $f$-DP privacy bound and were shown to lead to the most powerful hypothesis tests on binary data \citep{awan2023canonical};  They have also seen applications in other DP testing problems \citep{awan2022differentially,kazan2023test}. \citet{awan2022log} extended the results on CNDs by establishing the existence and construction of both log-concave CNDs and multivariate CNDs.\\

\noindent \textbf{Organization:} 
 In Section \ref{s:background}, we review the basics of $f$-differential privacy  and define the notion of optimality considered in this paper. In Section \ref{s:anti}, we develop our general anti-concentration inequality for additive noise. In Section \ref{s:continuous}, we consider the case of continuous noise. We review background on CNDs in Section \ref{s:CNDback}, show that CNDs are near-optimal in Section \ref{s:nearCND}, establish that CNDs are sub-exponential in Section \ref{s:subExp}, and derive optimality results for log-concave CNDs in Section \ref{s:logCND}. In Section \ref{s:discrete}, we introduce the concept of discrete CNDs. In Section \ref{s:discreteBasics}, we establish existence and construction of discrete CNDs, as well as the uniqueness of the discrete CND at sensitivity 1. In Section \ref{s:discreteOpt}, we prove the optimality of the discrete CND at sensitivity 1. We conclude with discussion in Section \ref{s:discussion}. Technical details and proofs are postponed to the appendix.\\

%%%%%%%%%%%%%%%%%%%%%%%%%%%%%%%%%%%%%
%%%   BACKGROUND
%%%%%%%%%%%%%%%%%%%%%%%%%%%%%%%%%%%%%
\section{Differential Privacy Background}\label{s:background}

Differential privacy is a framework that ensures that an adversary cannot accurately determine whether an individual's data is present in a database, based on the output of a privacy mechanism. A privacy mechanism $M$ is a randomized algorithm that takes as input a database $D$ and outputs a random variable $M(D)$. We write $M:\mscr D\rightarrow \mscr Y$ to indicate that it takes in a database $D\in \mscr D$ and the random variable $M(D)$ takes values in $\mscr Y$, where $\mscr D$ is the space of possible databases. A mechanism $M$ satisfies differential privacy if for two databases $D$ and $D'$,  differing in one entry, formalized in terms of an \emph{adjacency metric} $d(D,D')\leq 1$, the distributions of $M(D)$ and $M(D')$ are similar. While the similarity of $M(D)$ and $M(D')$ was previously formalized in terms of divergences on probability measures, \citet{dong2022gaussian} established that it is most natural to use the language of hypothesis testing, using the concept of a \emph{tradeoff function}.

For two distributions $P$ and $Q$, the \emph{tradeoff function} between $P$ and $Q$ is $T(P,Q):[0,1]\rightarrow [0,1]$, where $T(P,Q)(\alpha)=\inf \{1-\EE_Q (\phi) \mid \EE_P(\phi)\leq 1-\alpha\}$, where the infimum is over all measurable tests $\phi$. The tradeoff function returns the smallest type II error for testing $H_0=P$ versus $H_1=Q$ at specificity $\geq \alpha$ (i.e., $1-\alpha$ is an upper bound on the type I error), and captures the difficulty of distinguishing between $P$ and $Q$. The tradeoff function is closely related to the \emph{receiver operator characteristic curve} (ROC curve), which  evaluates sensitivity/power (one minus type II error) as a function of type I error.\footnote{In \citet{dong2022gaussian}, the tradeoff function was originally defined as the smallest type II error as a function of type I error. Our choice to flip the tradeoff function along the $x$-axis follows that of \citet{awan2022log} and is for mathematical convenience. There is a typo in \citet{awan2022log} where they mistakenly have the inequality in the wrong direction in the definition of $T(P,Q)$.}
%%% end footnote

We say that a function $f:[0,1]\rightarrow [0,1]$ is a tradeoff function if there exist distributions
$P$ and $Q$ such that $f(\alpha)=T(P,Q)(\alpha)$ for all $\alpha\in [0,1]$. A function $f:[0,1]\rightarrow [0,1]$ is a tradeoff function if and only if $f$ is convex, continuous, non-decreasing, and $f(x) \leq x$ for all $x \in [0,1]$ \citep[Proposition 2.2]{dong2022gaussian}.  We say that a tradeoff function $f$ is \emph{nontrivial} if $f(\alpha)<\alpha$ for some $\alpha\in (0,1)$.%; that is if $f(\alpha)$ is not identically equal to $\alpha$. 

\begin{definition}[$f$-DP: \citealp{dong2022gaussian}]\label{def:fDP}
 Let $f$ be a tradeoff function. A mechanism $M$ satisfies $f$-DP if $T(M(D),M(D'))\geq f,$  for all $D,D'\in \mscr D$ such that $d(D,D')\leq 1$.
\end{definition}

If $M$ satisfies $f$-DP, where $f=T(P,Q)$, then intuitively determining whether the database is $D$ or $D'$ is at least as hard as distinguishing between $P$ and $Q$. Thus, if $P$ and $Q$ are difficult to distinguish, then the individual's private data is unlikely to be compromised. Since the differential privacy guarantee is symmetric (i.e., both $T(M(D),M(D'))$ and $T(M(D'),M(D))$ must be greater than $f$), it suffices to restrict the $f$-DP guarantee to \emph{symmetric} tradeoff functions, meaning that if $f=T(P,Q)$, then $f=T(Q,P)$. Due to this, we limit the scope of the paper to symmetric tradeoff functions.

While $f$-DP consists of a wide variety of privacy guarantees, depending on the tradeoff function chosen, two important sub-families are worth pointing out. The common $(\ep,\de)$-DP version of differential privacy is a special case of $f$-DP: let $\ep\in [0,\infty)$ and $\de\in [0,1]$, then $M$ satisfies $(\ep,\de)$-DP if and only if it satisfies $f_{\ep,\de}$-DP, where \[f_{\ep,\de}(\alpha)= \max\{0,1-\de-\exp(\ep)+\exp(\ep)\alpha,\exp(-\ep)(\alpha-\de)\}.\] When $\de=0$, we call $(\ep,0)$-DP \emph{pure DP}, which was the original definition of differential privacy. When $\ep=0$, then $f_{0,\de}$-DP can be thought of as $\de$-Total Variation DP, as it is equivalent to bounding the total variation distance between $M(D)$ and $M(D')$ by $\delta$; recall that for two distributions $P$ and $Q$, the \emph{total variation distance} is $\mathrm{TV}(P,Q) = \sup_{A} |P(A)-Q(A)|$.

Another sub-family of $f$-DP is Gaussian-DP (GDP). We say that $M$ satisfies $\mu$-GDP if it satisfies $G_\mu$-DP, where $G_\mu = T(N(0,1),N(\mu,1))$. Besides offering the intuitive appeal of being defined in terms of shifted Gaussians, GDP also has a number of useful technical properties: the family $G_\mu$-DP is closed under group privacy and composition, and these operations commute under $G_\mu$-DP. Furthermore, \citet{dong2022gaussian} established a ``central limit theorem for composition,'' which shows that under mild conditions the composition of many DP mechanisms asymptotically satisfies $\mu$-GDP for some $\mu$.  

For a symmetric tradeoff function $f$, denote by $c_f$ the unique value such that $f(1-c_f)=c_f$, which can be visualized as the intersection of $f$ with the line $1-\alpha$. The value $c_f$ is a central concept to many of the results in this paper; we highlight a few basic properties here.  

\begin{restatable}{lemma}{lemCfzero}\label{lem:Cfzero}
Let $f$ be a symmetric tradeoff function. Then,
\begin{enumerate}
    \item $c_f\in [0,1/2]$ and if $f$ is nontrivial then $c_f\in [0,1/2)$,
    \item if $f=T(P,Q)$, then $\mathrm{TV}(P,Q)=1-2c_f$,
    \item $f_{0,(1-2c_f)}\leq f\leq f_{\ep_f,0}$, where $\ep_f=\log\left(\frac{1-c_f}{c_f}\right)$.
\end{enumerate}
\end{restatable}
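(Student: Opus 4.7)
The plan is to prove the three parts in the order (1), (3), (2), since the lower bound in (3) provides exactly the inequality needed for (2).

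For (1), I first note that a tradeoff function satisfies $f(0)=0$ and $f(1)=1$ (take $\phi\equiv 1$ and $\phi\equiv 0$ in the definition), is non-decreasing with $f(\alpha)\leq \alpha$, and is convex. Setting $h(\alpha)=f(\alpha)-(1-\alpha)$, we get $h(0)=-1<0$ and $h(1)=1>0$; continuity plus strict monotonicity of $h$ gives the unique $c_f$ with $f(1-c_f)=c_f$. Then $c_f=f(1-c_f)\leq 1-c_f$ yields $c_f\leq 1/2$, and $c_f\geq 0$ follows from $f\geq 0$. For nontriviality: if $c_f=1/2$, then $f(1/2)=1/2$, and convexity combined with $f(\alpha)\leq\alpha$ forces $f(\alpha)=\alpha$ on $[0,1/2]$ (otherwise, if $f(\alpha_0)<\alpha_0$ for some $\alpha_0<1/2$, the chord through $(\alpha_0,f(\alpha_0))$ and $(1/2,1/2)$ has slope $>1$, and convexity pushes $f(\alpha)>\alpha$ for some $\alpha>1/2$, contradiction). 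By the symmetry of the graph about $y=1-x$, this extends to all of $[0,1]$, giving $f(\alpha)=\alpha$, the trivial tradeoff function.

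For (3), I use that symmetric tradeoff functions have the property that the graph is invariant under the reflection $(\alpha,\beta)\mapsto(1-\beta,1-\alpha)$ across $y=1-x$: this follows because if $\phi^{*}$ achieves $T(P,Q)$ at specificity $\alpha$ with type II error $\beta$, then $1-\phi^{*}$ certifies that $(1-\beta,1-\alpha)$ lies on the graph of $T(Q,P)=f$. The upper bound $f\leq f_{\ep_f,0}$ is immediate from convexity: $f$ passes through $(0,0)$, $(1-c_f,c_f)$ and $(1,1)$, and convexity forces $f$ to lie below each chord, which are exactly the two linear pieces $\alpha\mapsto \alpha c_f/(1-c_f)$ and $\alpha\mapsto c_f+(\alpha-(1-c_f))(1-c_f)/c_f$ of $f_{\ep_f,0}$. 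For the lower bound $f\geq f_{0,1-2c_f}$, the key step is showing that slope $1$ lies in the subdifferential of $f$ at $1-c_f$. A direct computation shows that reflecting the support line $y=c_f+m(\alpha-(1-c_f))$ across $y=1-x$ yields the support line $y=c_f+(1/m)(\alpha-(1-c_f))$; hence the subdifferential $[f'_-(1-c_f),f'_+(1-c_f)]$ is invariant under $m\mapsto 1/m$, which forces it to contain the fixed point $m=1$. Convexity then gives $f(\alpha)\geq c_f+(\alpha-(1-c_f))=\alpha-(1-2c_f)$, and combining with $f\geq 0$ yields $f(\alpha)\geq\max\{0,\alpha-(1-2c_f)\}=f_{0,1-2c_f}(\alpha)$.

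For (2), I use the variational formula $\mathrm{TV}(P,Q)=\sup_{\phi\in[0,1]}[\EE_P\phi-\EE_Q\phi]$. Partitioning the sup over $\phi$ by the value $\alpha=1-\EE_P\phi$ and using the definition of $f$ gives
\[
\mathrm{TV}(P,Q)=\sup_{\alpha\in[0,1]}\bigl[\alpha-f(\alpha)\bigr].
\]
Applying the lower bound from (3), $\alpha-f(\alpha)\leq 1-2c_f$ for all $\alpha$, with equality attained at $\alpha=1-c_f$ since $f(1-c_f)=c_f$. Therefore $\mathrm{TV}(P,Q)=1-2c_f$.

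The main obstacle is the lower bound in (3): everything else is standard convexity and the Neyman–Pearson correspondence, but the slope-$1$ subgradient requires carefully translating the symmetry condition on $f$ (inherited from $T(P,Q)=T(Q,P)$) into the geometric statement that the graph is invariant under reflection across $y=1-x$, and then verifying that this reflection acts on support lines by $m\mapsto 1/m$.
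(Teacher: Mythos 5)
Your proposal is correct in substance but takes a genuinely different, more self-contained route than the paper. The paper proves (2) first, by invoking the known equivalence between total variation and the tightest $f_{0,\delta}$ lower bound and then simply asserting, from symmetry, that the slope-one line supports $f$ at $(1-c_f,c_f)$; it then asserts (3) as an extremal statement about tradeoff functions with fixed point $c_f$. You instead prove the key geometric fact -- that slope $1$ lies in the subdifferential of $f$ at $1-c_f$ -- via the reflection across $y=1-x$, and then deduce (2) from the variational identity $\mathrm{TV}(P,Q)=\sup_\alpha[\alpha-f(\alpha)]$ combined with the lower bound in (3). What your route buys is that the one step the paper glosses over is made explicit: the paper's stated justification for the lower bound in (3) (``convex, non-decreasing, $f(0)=0$'') is not by itself sufficient (e.g.\ $h(\alpha)=c_f\alpha/(1-c_f)$ is a tradeoff function with fixed point $c_f$ that falls below $f_{0,1-2c_f}$ near $\alpha=1$), so symmetry must enter exactly where you put it, in the slope-one subgradient claim. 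What the paper's route buys is brevity, by outsourcing (2) to the standard $\mathrm{TV}$--$f_{0,\delta}$ correspondence.

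Two slips should be repaired, though neither is fatal. First, $f(1)=1$ is false in general (e.g.\ $f_{0,\delta}(1)=1-\delta$): in (1) you only need $h(1)=f(1)\geq 0$, with the boundary case $f(1)=0$ forcing $f\equiv 0$ and $c_f=0$; and in the upper bound of (3) the chord from $(1-c_f,c_f)$ to $(1,f(1))$ with $f(1)\leq 1$ still lies below the second linear piece of $f_{\ep_f,0}$, so the argument survives. Second, the graph of a symmetric $f$ is \emph{not} globally invariant under $(\alpha,\beta)\mapsto(1-\beta,1-\alpha)$ when $f(1)<1$: the segment where $f=0$ reflects to a vertical segment at $\alpha=1$ that is not part of the graph (the $1-\phi^*$ construction only certifies $f(1-f(\alpha))\leq 1-\alpha$). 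What you actually need, and what does hold, is local symmetry at the interior fixed point when $c_f>0$: symmetry forces $f$ to be strictly increasing in a neighborhood of $1-c_f$ (a flat stretch at the positive level $c_f$ would make the generalized inverse, hence $f$ itself, jump at an interior point), and there the reflection argument gives $f'_+(1-c_f)=1/f'_-(1-c_f)$, whence $f'_-(1-c_f)\leq 1\leq f'_+(1-c_f)$ and the slope-one support line follows by convexity; the case $c_f=0$ is trivial since then $f\equiv 0$ and $f_{0,1}\equiv 0$. With these repairs your argument is complete.
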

Property 2 of Lemma \ref{lem:Cfzero} makes an explicit connection between the tradeoff function and the total variation distance, which will be heavily used throughout the paper. By property 3 of Lemma \ref{lem:Cfzero}, for a given symmetric tradeoff function $f$, we can construct nontrivial upper and lower bounds for $f$ using only the value $c_f$. In particular, it shows that for any $f$, there is a pure-DP guarantee that implies $f$-DP, namely $(\epsilon_f,0)$-DP where $\ep_f = \log((1-c_f)/c_f)$.

Similar to other notions of differential privacy, $f$-DP has a number of important properties:\\

\noindent\textbf{Group Privacy:} The guarantee of $f$-DP is implicitly for \emph{groups of size 1}, since the adjacent databases differ in one entry. However, if $M$ satisfies $f$-DP, then it also satisfies $f^{\circ k}$-DP for groups of size $k$ (where $d(D,D')\leq k$), where $f^{\circ k}\defeq f\circ \cdots \circ f$ and $f$ appears $k$ times. Note that while this guarantee is tight in general, for a specific $f$-DP mechanism there may exist a stronger guarantee for groups. \\

\noindent\textbf{Composition:} Another property of $f$-DP is that it can quantify the cumulative privacy cost of multiple privacy mechanisms. Specifically, if $M_1$,\ldots, $M_k$ each satisfy $f_1$-DP,\ldots, $f_k$-DP, respectively,  and $(M_j(D))_j$ are mutually independent given $D$, then the joint release of $(M_1(D),\ldots, M_k(D))$ satisfies $f_1\otimes \cdots \otimes f_k$-DP, where ``$\otimes$'' is the \emph{tensor product} of tradeoff functions: if $f=T(P_1,Q_1)$ and $g=T(P_2,Q_2)$, then $f\otimes g=T(P_1\times P_2,Q_1\times Q_2)$. See \citet{dong2022gaussian} for a more complex sequential composition result, where each $M_i(D)$ may depend on the results of $M_1(D),\ldots, M_{i-1}(D)$.\\

\noindent\textbf{Invariance to Postprocessing:} If $M: \mscr D\rightarrow \mscr Y$ satisfies $f$-DP and $T:\mscr Y\rightarrow \mscr Z$ is a (possibly randomized) function, then $T\circ M:\mscr D\rightarrow Z$ also satisfies $f$-DP. This property is important to ensure that the privacy guarantee cannot be ``undone'' by a data-independent attack. \\

\noindent\textbf{Additive Noise Mechanisms:} The class of privacy mechanisms that we study in this paper are \emph{ additive noise mechanisms}, which add noise scaled to the \emph{sensitivity} of a statistic. A statistic $S:\mscr D\rightarrow \RR$ has sensitivity $\Delta>0$ if $|S(D)-S(D')|\leq \Delta$ for all $d(D,D')\leq 1$. An additive $f$-DP mechanism at sensitivity $\Delta$ is a random variable $N\sim F$ which satisfies $T(S(D)+N,S(D')+ N)\geq f$ for all $d(D,D')\leq 1$  and all statistics $S$ with sensitivity $\Delta$. More succinctly, this is equivalent to $T(N,N+m)\geq f$ for all $m\in [-\Delta,\Delta]\cap \mathrm{Range}(S)$. If $N$ is a continuous noise, then we can rescale both $S$ and $N$ by $\Delta$. 

Two of the most common  additive noise mechanisms are the Laplace and Gaussian mechanisms: adding $\mathrm{Laplace}(0,\Delta/\ep)$ to a real-valued statistic $S$ with sensitivity $\Delta$ satisfies $(\ep,0)$-DP, and adding $N(0, \Delta^2/\mu^2)$ to the same statistic satisfies $\mu$-GDP. 

 While elementary, additive noise mechanisms are widely used in DP, either by themselves or as part of more complex mechanisms. For example, a count statistic has sensitivity 1, the sample mean of $n$ data points lying in $[a,b]$ has sensitivity $(b-a)/n$, and the sample variance of $n$ data points lying in $[a,b]$ has sensitivity $(b-a)^2/n$ \citep{du2020differentially}; all of these statistics can be privatized to achieve $f$-DP by using an additive noise mechanism scaled to the sensitivity.  \\

\noindent\textbf{Stochastic Dominance and Optimal Noise Mechanisms:} Let $X$ and $Y$ be two random variables. Stochastic dominance of random variables is a partial order, which asserts that one variable takes larger values compared to another. The following are equivalent definitions of (first order) stochastic dominance \citep{levy1992stochastic}: 1) $X$ stochastically dominates $Y$, 2) $F_X(t)\leq F_Y(t)$ for all $t\in \RR$, 3) for every non-decreasing function $\phi$, $\EE \phi(X)\geq \EE\phi(Y)$, 4) there exists a random variable $W$ (not necessarily independent of $Y$) such that $X\overset d = Y+W$. 

If $N$ and $N'$ are two additive noise mechanisms, we say that $N$ is \emph{stochastically smaller} than $N'$ if $|N'|$ stochastically dominates $|N|$: $P(|N|\leq t)\geq P(|N'|\leq t)$ for all $t\in \RR^{\geq 0}$. If we have that $P(|N|\leq t)\geq P(|N'|\leq t)$ for all $N'$ in some class of noise distributions $\mscr A$, we say that $N$ is the \emph{stochastically smallest} noise distribution in $\mscr A$ or simply that $N$ is \emph{optimal} in $\mscr A$.%A similar notion of stochastic ordering of distributions was considered in \citet{awan2020structure}, 

%\todo{Add examples of using additive noise: counts, sample mean, sample variance. Minimax results, more complex mechanisms such as objective perturbation and stochastic gradient descent}

\section{A General Anti-Concentration Inequality}\label{s:anti}
Concentration inequalities such as Markov, Chebyshev, and Bernstein, give upper bounds on the amount of mass in the tails of a distribution. In this section, we establish \emph{anti-concentration inequalities} for additive $f$-DP mechanisms, which instead give upper bounds on how much mass is near the center of the distribution. Our inequalities establish these bounds in terms of total variation distance and $f$-DP tradeoff functions. %In the subsequent sections, we give special cases of our bounds for log-concave and integer-valued noise. We then show that CNDs (both continuous, log-concave, and discrete) match the anti-concentration bounds, which we show implies properties of stochastic optimality. Our results also imply that all CNDs are sub-exponential.
 The most basic version of our anti-concentration inequality is in Lemma \ref{lem:anti}, which bounds the amount of mass in the center of the distribution in terms of the total variation distance between the random variable and a shifted version of itself. 
%For a symmetric nontrivial tradeoff function $f$, we denote $c_f\in [0,1/2)$, the unique point such that $f(1-c_f)=c_f$. Note also that if $f=T(P,Q)$, then $1-2c_f=\mathrm{TV}(P,Q)$, the total variation distance between $P$ and $Q$.  Our concentration inequalities are expressed in terms of these $c_f$'s, and we require a lemma to work with these quantities.

\begin{figure}
    \centering
    \includegraphics[width=.8\linewidth]{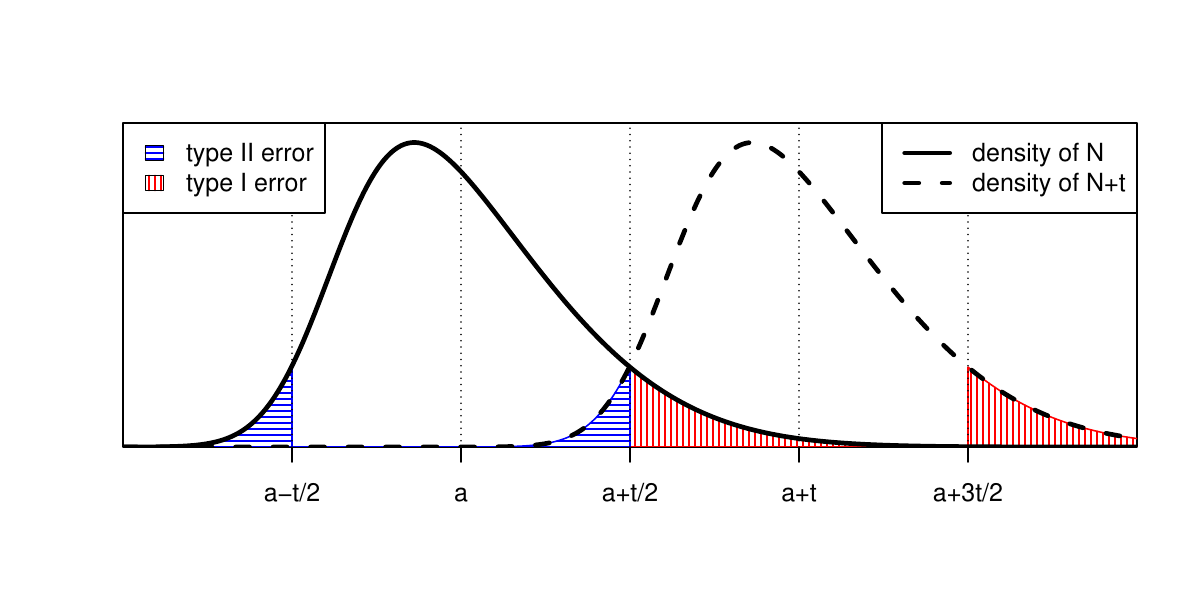}
    \caption{Illustration of Lemma \ref{lem:anti}, and the connection to hypothesis testing. When testing $H_0: N$ versus $H_1: N+t$, and using the rejection region $(a+t/2,\infty)$, the type I error probability is illustrated by the vertically shaded red region and the type II error probability is given by the horizontally shaded blue region. }
    \label{fig:antiLemma}
\end{figure}

\begin{restatable}{lemma}{lemanti}
    [Anti-Concentration Inequality]\label{lem:anti}
    Let $N$ be a real-valued random variable. Then for all $t\in \RR^{>0}$,
            $\sup_{a\in \RR} P(-t/2<N-a\leq t/2)\leq \mathrm{TV}(N,N+t).$
    %$c\in [0,1]$, $a\in \RR$ and $t\in \RR^{>0}$, 
    %\[(1-c)P(N-a=-t/2)+P(-t/2<N-a<t/2)+cP(N-a=t/2)
    %\leq \mathrm{TV}(N,N+t).\]
\end{restatable}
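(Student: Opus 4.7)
The plan is to unwind the definition of total variation distance and exhibit a single set $A \subseteq \mathbb{R}$ whose probability gap between the laws of $N$ and $N+t$ equals the centered mass $P(-t/2 < N-a \leq t/2)$. Since
\[ \mathrm{TV}(N,N+t) = \sup_{A} \bigl| P(N\in A) - P(N+t\in A)\bigr|, \]
any particular choice of $A$ yields a lower bound on the TV distance, which is exactly the direction we need.

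Guided by the hypothesis-testing picture in Figure \ref{fig:antiLemma}, I would take the one-sided rejection region $A = (a+t/2, \infty)$. Writing $F$ for the (right-continuous) CDF of $N$, a direct computation gives
\[ P(N \in A) = 1 - F(a+t/2), \qquad P(N+t \in A) = P(N > a - t/2) = 1 - F(a-t/2). \]
Subtracting and taking absolute values, $\bigl| P(N\in A) - P(N+t\in A)\bigr| = F(a+t/2) - F(a-t/2)$, where the sign is fixed because $F$ is non-decreasing.

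The left-hand side of the claimed inequality rewrites as
\[ P(-t/2 < N - a \leq t/2) = P(a - t/2 < N \leq a + t/2) = F(a+t/2) - F(a-t/2), \]
which matches the expression we just obtained for the $A$-probability gap. Hence
\[ P(-t/2 < N - a \leq t/2) = \bigl| P(N\in A) - P(N+t\in A)\bigr| \leq \mathrm{TV}(N,N+t), \]
and taking the supremum over $a \in \mathbb{R}$ on the left finishes the argument.

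Because the whole proof reduces to spotting the correct set and a one-line CDF calculation, there is no real obstacle; the main ``choice'' is recognizing that a one-sided tail set is what converts the TV sup into the symmetric centered-mass quantity. The hypothesis-testing interpretation makes this transparent: the test that rejects when the observation exceeds the midpoint $a + t/2$ has type I and type II errors whose sum-to-one complement is precisely the central mass on $(a - t/2, a + t/2]$.
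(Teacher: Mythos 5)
Your proof is correct and takes essentially the same route as the paper: both arguments hinge on the one-sided set $(a+t/2,\infty)$ at the midpoint and identify the centered mass $F(a+t/2)-F(a-t/2)$ with the probability gap on that set. The only difference is cosmetic — you invoke the sup-over-sets definition of $\mathrm{TV}$ directly, while the paper phrases the same bound through the equivalent testing inequality $(\text{type I})+(\text{type II})\geq 1-\mathrm{TV}(N,N+t)$ applied to the rejection region $(a+t/2,\infty)$.
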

\begin{proof}[Proof Sketch]
We consider a hypothesis test of $H_0: N$ versus $H_1: N+t$, using the rejection region $(a+t/2,\infty)$. The type I and type II errors of this test correspond to the tail probabilities $P(N-a> t/2)$ and $P(N-a\leq t/2)$; see Figure \ref{fig:antiLemma}. Combining this with the inequality $(\text{type I})+(\text{type II})\leq 1-\mathrm{TV}(N,N+t)$ gives the result.
\end{proof}

\begin{remark}
    Since total variation is a symmetric and translation-invariant relation, it follows that $\mathrm{TV}(N,N+t)=\mathrm{TV}(N+t,N)=\mathrm{TV}(N,N-t)$, which are alternative expressions for the right side of Lemma \ref{lem:anti}. 
\end{remark}

If a random variable $N$ is an additive noise mechanism at sensitivity 1 for $f$-DP, then $T(N,N+1)\geq f$. The following theorem establishes that in this case, we can express the upper bound $\mathrm{TV}(N,N+t)$ from Lemma \ref{lem:anti} in terms of evaluations of $f$. 

\begin{restatable}{theorem}{thmanti}
    [Anti-Concentration for Additive Noise]\label{thm:anti}
    Let $N$ be a random variable such that $T(N,N+1)\geq f$ for a symmetric nontrivial tradeoff function $f$. Then for $t\in \ZZ^{>0}$, 
    \begin{align*}
       \sup_{a\in \RR} P(-t/2<N-a\leq t/2)&\leq \begin{cases}
            1-2f^{\circ k}(c_f)&\text{ if }t=2k+1,\\
            1-2f^{\circ k}(1/2)&\text{ if } t=2k,
        \end{cases}
    \end{align*}
    where $k\in \ZZ^{\geq 0}$ and the notation $f^{\circ k}$ represents $f\circ \cdots \circ f$, where $f$ appears $k$ times.
\end{restatable}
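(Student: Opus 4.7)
The plan is to translate the $f$-DP hypothesis into pointwise inequalities on the CDF $F$ of $N$, iterate them, and finish with Jensen's inequality. From the Neyman--Pearson test of $H_0\colon N$ versus $H_1\colon N+1$ using rejection region $(x,\infty)$, the assumption $T(N,N+1)\geq f$ forces $F(x-1)\geq f(F(x))$ for every $x\in\RR$, and the symmetric version $1-F(x+1)\geq f(1-F(x))$ follows in the same way. Iterating $k$ times gives the two workhorses
\[
F(x-k)\geq f^{\circ k}(F(x)), \qquad 1-F(x+k)\geq f^{\circ k}(1-F(x)).
\]

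For the even case $t=2k$, I would apply these at $x=a$ to obtain $F(a+k)-F(a-k)\leq 1-f^{\circ k}(F(a))-f^{\circ k}(1-F(a))$. Since $f^{\circ k}$ is convex and non-decreasing, Jensen applied to the pair $(F(a),1-F(a))$ with midpoint $1/2$ yields $f^{\circ k}(F(a))+f^{\circ k}(1-F(a))\geq 2f^{\circ k}(1/2)$, producing the claimed bound $1-2f^{\circ k}(1/2)$ uniformly in $a$.

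For the odd case $t=2k+1$, I would anchor instead at the two half-integer points $x=a\pm 1/2$ and link them by the single-step bound $F(a-1/2)\geq f(F(a+1/2))$. Writing $v=F(a+1/2)$, this yields
\[
F(a+k+\tfrac12)-F(a-k-\tfrac12)\leq 1-f^{\circ k}(1-v)-f^{\circ k}(f(v)).
\]
Jensen again reduces the task to locating the midpoint $\tfrac12((1-v)+f(v))$, and the crux is to show it is always at least $c_f$, i.e.\ $v-f(v)\leq 1-2c_f$. This is exactly what property 3 of Lemma~\ref{lem:Cfzero} supplies: the dominance $f\geq f_{0,1-2c_f}$, together with $f_{0,\de}(v)=\max(0,v-\de)$, forces $f(v)\geq v-(1-2c_f)$ for every $v$. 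Monotonicity of $f^{\circ k}$ then turns Jensen into $f^{\circ k}(1-v)+f^{\circ k}(f(v))\geq 2f^{\circ k}(c_f)$, giving the stated bound $1-2f^{\circ k}(c_f)$.

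The main obstacle will be precisely this last step in the odd case: unlike the even case, the pair $(1-v,f(v))$ is not obviously centered at $1/2$, so the natural Jensen target has to be replaced by $c_f$. Property 3 of Lemma~\ref{lem:Cfzero}, which packages the total-variation content of $f$, is what identifies $c_f$ as the correct center and closes the argument.
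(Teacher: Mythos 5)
Your proof is correct, but it takes a genuinely different route from the paper's. The paper first reduces the probability to a total variation distance via Lemma \ref{lem:anti}, lower-bounds $T(N,N+t)\geq f^{\circ t}$ by composition (Lemma \ref{lem:group2}), converts total variation to $1-2c_{f^{\circ t}}$ via property 2 of Lemma \ref{lem:Cfzero}, and finally evaluates $c_{f^{\circ t}}$ through the constructed CND of Proposition \ref{prop:CNDsynthetic} (Lemma \ref{lem:Cf}), which is where the parity split appears. You instead work directly with the cdf: threshold tests give $F(x-1)\geq f(F(x))$ and $1-F(x+1)\geq f(1-F(x))$, iteration and monotonicity give the $k$-step versions, and convexity of $f^{\circ k}$ (a composition of convex non-decreasing maps, so again convex and non-decreasing) lets Jensen center the even case at $1/2$ and the odd case at $c_f$, the latter via $f\geq f_{0,1-2c_f}$ from property 3 of Lemma \ref{lem:Cfzero}, which indeed yields $v-f(v)\leq 1-2c_f$ and hence $\tfrac{1}{2}\bigl((1-v)+f(v)\bigr)\geq c_f$. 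Your argument is more elementary and self-contained---it bypasses the total-variation reformulation and all CND machinery---and it makes the origin of the parity distinction transparent (anchoring at $F(a)$ versus the pair $F(a\pm 1/2)$); the paper's route, in exchange, factors the work into Lemma \ref{lem:anti} and Lemma \ref{lem:Cf}, which are reused elsewhere (e.g., Theorem \ref{thm:stochLog}). One step you should state explicitly rather than saying it follows in the same way: the inequality $1-F(x+1)\geq f(1-F(x))$ is the threshold-test bound for $T(N+1,N)$, not for $T(N,N+1)$ (a one-sided test applied to $T(N,N+1)$ only yields $1-F(x)\geq f(1-F(x+1))$). It is nonetheless valid here because $f$ is symmetric: the reversal $T(P,Q)\mapsto T(Q,P)$ corresponds to the functional inverse on tradeoff functions, which is order-preserving and fixes symmetric $f$, so $T(N,N+1)\geq f$ implies $T(N+1,N)\geq f$; adding that sentence closes the only loose end.
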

\begin{proof}[Proof Sketch]
By Lemma \ref{lem:group2}, we have that $T(N,N+t)\geq f^{\circ t}$. Then we apply the formula $\mathrm{TV}(N,N+t)\leq 1-2 c_{f^{\circ t}}$ to the result of Lemma \ref{lem:anti}. Finally, Lemma \ref{lem:Cf} gives the more explicit formula for $c_{f^{\circ t}}$, which depends on the parity of $t$.  The dependence on the parity comes from the expression $c_{f^{\circ t}}=F_f(-t/2)$, where $F_f$ is the CND for $f$ from Proposition \ref{prop:CNDsynthetic} below, which has a different recursive expression depending on whether $t/2$ is an integer or half-integer.  Note that Lemmas \ref{lem:Cf} and \ref{lem:group2} appear in the appendix.
\end{proof}

Theorem \ref{thm:anti} establishes that at all half-integer values, an additive noise distribution can only be so concentrated around zero. This result provides a minimum on the magnitude of noise that must be added to achieve $f$-DP. Lemma \ref{lem:anti} and Theorem \ref{thm:anti} are similar to anti-concentration inequalities of L\'evy's concentration function \citep{krishnapur2016anti}, which are of the form $\sup_{a\in \RR} P(|X-a|\leq t)\leq g(t)$ for some function of $t$. Note that when $N$ is continuous, the results of Lemma \ref{lem:anti} and Theorem \ref{thm:anti} simplify to this form.

\begin{remark}
In Lemma \ref{lem:anti} and Theorem \ref{thm:anti}, we could have written the probabilities as $P(-t/2\leq N-a<t/2)$. In fact, a more general version of these anti-concentration inequalities is as follows: for any $c\in [0,1]$ and any $a\in \RR$,
\[c P(N-a=-t/2)+P(-t/2<N-a<t/2)+(1-c) P(N-a=t/2)\leq \mathrm{TV}(N,N+1),\] where the ``$c$'' corresponds to the randomized test $\phi(x) = I(x>a+t/2)+cI(x=a+t/2)$ of $H_0:N$ versus $H_1:N+t$.
%\[\phi(x) = I(x>a+t/2)+cI(x=a+t/2).\]
%\[\phi(x) = \begin{cases}
%    1&x>a+t/2\\
%    c&x=a+t/2\\
%    0&x<a+t/2.
%\end{cases}\]
\end{remark}

In the following sections, we will see that different canonical noise distributions match the bounds of Lemma \ref{lem:anti} and Theorem \ref{thm:anti}, indicating that they are near-optimal to achieve $f$-DP.

%%%%%%%%%%%%%%%%%%%%%%%%%%%%%%%%%%%%%%%%%%%%%%%%%%%%%
%%%   Continuous NOise
\section{Continuous Canonical Noise Distributions}\label{s:continuous}
While $f$-DP was introduced by \citet{dong2022gaussian}, the first method of constructing a privacy mechanism for an arbitrary $f$-DP guarantee was proposed by \citet{awan2023canonical}. \citet{awan2023canonical} proposed the concept of a \emph{canonical noise distribution} (CND), which captured the idea that the noise distribution satisfies $f$-DP and the privacy guarantee is tight. They also gave a construction for a CND for an arbitrary tradeoff function, and showed that CNDs are connected to certain optimal hypothesis tests.  
In this section, we review background on CNDs, show that they match the inequality of Theorem \ref{thm:anti}, prove that CNDs are sub-exponential, and establish optimality properties of log-concave CNDs.

\subsection{Background on Canonical Noise Distributions}\label{s:CNDback}
In this section, we review some of the key results of \citet{awan2023canonical}. A canonical noise distribution is an additive mechanism, which tightly matches the $f$-DP bound and has other desirable properties such as symmetry and a monotone likelihood ratio for testing $T(N,N+1)$. \citet{awan2023canonical} proposed the following definition to capture these desirable properties:

\begin{definition}[Canonical Noise Distribution: \citet{awan2023canonical}]\label{def:CND}
  Let $f$ be a symmetric tradeoff function, and let $N$ be a continuous random variable with cumulative distribution function (cdf) $F$. Then, $F$ is a \emph{canonical noise distribution} (CND) for $f$ if 
  \begin{enumerate}
      \item %given a statistic $S(X)$ with sensitivity $\Delta>0$, the mechanism $S(X) + \Delta N$ satisfies $f$-DP. Equivalently, 
      For any $m\in [0,1]$, $T(N,N+m)\geq f$,
      \item $f(\alpha)=T(N,N+1)(\alpha)$ for all $\alpha \in (0,1)$,
      \item $T(N,N+1)(\alpha) = F(F^{-1}(\alpha)-1)$ for all $\alpha \in (0,1)$,
      \item $F(x) = 1-F(-x)$ for all $x\in \RR$; that is, $N$ is symmetric about zero.
  \end{enumerate}
\end{definition}

In Definition \ref{def:CND}, the four properties can be interpreted as follows: 1) adding $\Delta N$ to a statistic of sensitivity $\Delta$ satisfies  $f$-DP, 2) if $S(D)$ and $S(D')$ differ by exactly the sensitivity $\Delta$, then the tradeoff function $T(S(D)+\Delta N,S(D')+\Delta N)=f$ showing that the privacy guarantee is tight, 3) for  $S(D)$ and $S(D')$ that differ by exactly the sensitivity $\Delta$, an optimal rejection region is of the form $[x,\infty)$, which implies a monotone likelihood ratio property, and 4) since the privacy guarantee $f$-DP is symmetric, it is sensible to restrict attention to symmetric distributions.

The following recurrence is an important technical property of CNDs:
\begin{lemma}[\citet{awan2023canonical}]\label{lem:recurrence}
  Let $f$ be a symmetric nontrivial tradeoff function and let $F$ be a CND for $f$. Then $F(x)=1-f(1-F(x-1))$ when $F(x-1)>0$ and $F(x)=f(F(x+1))$ when $F(x+1)<1$. 
\end{lemma}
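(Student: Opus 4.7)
The plan is to derive both recurrences by combining properties 2, 3, and 4 of Definition \ref{def:CND}. Properties 2 and 3 together give the single identity
\[
f(\alpha) \;=\; F\bigl(F^{-1}(\alpha)-1\bigr) \quad \text{for all } \alpha\in(0,1),
\]
and essentially all of the work is in substituting the right values of $\alpha$ into this identity.

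For the forward recurrence $F(x)=f(F(x+1))$ when $F(x+1)<1$, I would set $\alpha = F(x+1)$. Whenever $F(x+1)\in(0,1)$, continuity of $F$ gives $F(F^{-1}(\alpha)) = \alpha$ and, barring flat regions, $F^{-1}(F(x+1)) = x+1$, so the core identity collapses to $f(F(x+1)) = F(x)$. The boundary case $F(x+1)=0$ is immediate from $f(0)=0$ (which holds for any tradeoff function) and the fact that $F(x)\leq F(x+1)=0$ forces $F(x)=0$.

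For the backward recurrence $F(x)=1-f(1-F(x-1))$ when $F(x-1)>0$, I would combine the forward recurrence with the symmetry property (property 4). By symmetry, $F(x)=1-F(-x)$, and the hypothesis $F(x-1)>0$ is equivalent to $F(-x+1)=1-F(x-1)<1$. So the forward recurrence applies at the point $-x$, yielding $F(-x)=f(F(-x+1))=f(1-F(x-1))$, and subtracting from $1$ gives the stated identity. Thus once the forward recurrence is in hand, the backward recurrence is essentially free.

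The main obstacle I anticipate is handling flat regions of $F$, where the generalized inverse is not a genuine right inverse and so the step $F^{-1}(F(x+1))=x+1$ can fail. I would resolve this by invoking the standard convention $F^{-1}(\alpha)=\inf\{u : F(u)\geq \alpha\}$: if $F$ is constant at height $\alpha$ on an interval $[a,b]$ containing $x+1$, continuity of $F$ still gives $F(F^{-1}(\alpha))=\alpha$, and one argues directly that the right side $F(F^{-1}(\alpha)-1)$ agrees with $F(x)$ by using property 1 of Definition \ref{def:CND} (that $T(N,N+m)\geq f$ for all $m\in[0,1]$) to rule out a strict inequality at the flat endpoints. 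Apart from this bookkeeping, the proof is a direct algebraic consequence of the CND axioms.
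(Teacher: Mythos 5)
The paper does not actually prove Lemma \ref{lem:recurrence}; it is imported from \citet{awan2023canonical}, so there is no in-paper proof to compare against. Your overall route is the standard derivation and is mostly fine: combine properties 2 and 3 of Definition \ref{def:CND} into $f(\alpha)=F(F^{-1}(\alpha)-1)$ on $(0,1)$, substitute $\alpha=F(x+1)$, dispose of the boundary case $F(x+1)=0$ via $f(0)=0$ and monotonicity, and obtain the backward recurrence from the forward one by the symmetry $F(x)=1-F(-x)$ (that last reduction is correct and clean).

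The step that does not hold up is your treatment of flat regions, which you correctly identify as the only real obstacle. You propose to close it using property 1 of Definition \ref{def:CND}, but property 1 only yields inequalities of the form $F(c-m)\geq f(F(c))$ (the type II error of a threshold test for $T(N,N+m)$ must be at least $f$ of its specificity), i.e., \emph{lower} bounds on $F$ to the left of a threshold. When $x+1$ lies in the interior of a flat piece of $F$ at height $\alpha=F(x+1)$, the lower bound $F(x)\geq F(F^{-1}(\alpha)-1)=f(\alpha)$ is already trivial from monotonicity of $F$; what you need is the reverse, upper bound $F(x)\leq f(\alpha)$, and property 1 does not supply it (nor does routing it through the symmetry property, which just reproduces the same one-sided bound at the mirrored point). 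The clean fix stays entirely within properties 2--3 and continuity: since $F(x+1)<1$, take any $\alpha'\in(\alpha,1)$; then $F(u)\leq\alpha<\alpha'$ for all $u\leq x+1$, so $F^{-1}(\alpha')\geq x+1$ and hence $f(\alpha')=F\bigl(F^{-1}(\alpha')-1\bigr)\geq F(x)$. Letting $\alpha'\downarrow\alpha$ and using continuity of the tradeoff function $f$ gives $f(F(x+1))\geq F(x)$, which combined with the trivial direction yields $F(x)=f(F(x+1))$ even on flat pieces (equivalently: a maximal flat interval $[a,b]$ of $F$ at height $\alpha_0\in(0,1)$ forces $F(b-1)=f(\alpha_0)=F(a-1)$). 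With that replacement, your argument is complete; as written, the flat-region case is asserted rather than proved, and the tool you name for it points in the wrong direction.
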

Lemma \ref{lem:recurrence} implies that if a CND is specified on an interval of length 1, then it is fully determined by the recurrence.  \citet{awan2023canonical} showed that the above recurrence relation can be used to construct a CND for any nontrivial symmetric tradeoff function, by starting with the linear function  $c_f(1/2-x)+(1-c_f)(x+1/2)$ on the interval $[-1/2,1/2]$. This linear function is chosen to ensure that the resulting distribution is symmetric about zero and has a continuous cdf. 

\begin{proposition}[CND construction: \citet{awan2023canonical}]\label{prop:CNDsynthetic}
  Let $f$ be a symmetric nontrivial tradeoff function. We define $F_f:\RR\rightarrow \RR$ as 
  \[ F_f(x) = \begin{cases}
  f(F_f(x+1))&x<-1/2,\\
  c_f(1/2-x) + (1-c_f)(x+1/2)&-1/2\leq x\leq 1/2,\\
  1-f(1-F_f(x-1))&x>1/2.\\
  \end{cases}\]
  Then $N\sim F_f$ is a canonical noise distribution for $f$.
\end{proposition}

Proposition \ref{prop:CNDsynthetic} demonstrates that CNDs exist for any symmetric nontrivial tradeoff function, and gives an explicit construction. Furthermore, \citet{awan2023canonical} give an algorithm to sample from the constructed CND, enabling it to be employed in practice,  which we have implemented in R code for this paper.

%Besides demonstrating the existence and construction of CNDs, \citet{awan2023canonical} also proved that CNDs are central to the construction of optimal hypothesis tests of binary data, as well as the construction of ``free'' $p$-values. 

\subsection{Near-Optimality of CNDs}\label{s:nearCND}
While \citet{awan2023canonical} demonstrated that CNDs are essential to constructing optimal hypothesis testing procedures, they did not give any direct results to show that CNDs add a minimal amount of noise. In this section, we show the concentration of a CND about zero is equal to the upper bound given in Theorem \ref{thm:anti} at all $t\in \ZZ^{\geq 0}$. This implies that at half-integer values, the CND is more concentrated than any alternative additive noise satisfying $f$-DP. We also show that the mass of any other $f$-DP noise distribution is no more than a radius of 1/2 closer to its center than a CND.

Lemma \ref{lem:CNDconcentration} gives an explicit formula for the concentration of a CND, which we see matches the bound in Theorem \ref{thm:anti}.

\begin{restatable}{lemma}{lemCNDconcentration}
    \label{lem:CNDconcentration}
    Let $f$ be a symmetric, nontrivial tradeoff function, and let $N\sim F_N$ be a CND for $f$. Then for all $t\in \ZZ^{\geq 0}$,
    \[P(|N|\leq t/2)= \begin{cases} 1-2f^{\circ k}(c_f)&\text{ if }t=2k+1,\\
    1-2f^{\circ k}(1/2)&\text{ if }t=2k,\end{cases}\]
    where $k\in \ZZ^{\geq 0}$
\end{restatable}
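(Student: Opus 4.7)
The plan is to use the symmetry of the CND together with the recurrence from Lemma~\ref{lem:recurrence} to evaluate $F_N$ at integer and half-integer points. Since $N$ is continuous and symmetric about zero, $P(|N|\leq t/2)=F_N(t/2)-F_N(-t/2)=2F_N(t/2)-1$, so the claim reduces to showing
\[
F_N(k)=1-f^{\circ k}(1/2)\quad\text{and}\quad F_N(k+1/2)=1-f^{\circ k}(c_f)
\]
for every $k\in\ZZ^{\geq 0}$, corresponding to the even and odd cases of $t$ respectively.

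For the base cases, $F_N(0)=1/2=1-f^{\circ 0}(1/2)$ is immediate from symmetry. The value $F_N(1/2)=1-c_f$ requires a bit more work: rearranging property~3 of Definition~\ref{def:CND} as $f(F_N(y))=F_N(y-1)$ and specializing to $y=1/2$, I would combine this with symmetry to get $f(F_N(1/2))=F_N(-1/2)=1-F_N(1/2)$. Since $c_f$ is by definition the unique solution of $f(1-v)=v$, this forces $F_N(1/2)=1-c_f=1-f^{\circ 0}(c_f)$.

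For the inductive step, I would apply the recurrence $F_N(x)=1-f(1-F_N(x-1))$ from Lemma~\ref{lem:recurrence}, which is valid whenever $F_N(x-1)>0$. Because $F_N$ is a nondecreasing CDF with $F_N(0)=1/2$, we have $F_N(x-1)\geq 1/2>0$ for every $x\geq 1$, so the recurrence applies freely at both $x=k$ and $x=k+1/2$ for $k\geq 1$. Plugging in the inductive hypothesis yields
\[
F_N(k)=1-f\bigl(f^{\circ(k-1)}(1/2)\bigr)=1-f^{\circ k}(1/2),
\]
and the parallel computation with $c_f$ in place of $1/2$ handles the half-integer case.

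The only nontrivial obstacle is the base identity $F_N(1/2)=1-c_f$: unlike $F_N(0)=1/2$, it genuinely invokes the tightness property (property~3) of a CND and is what makes the formula match the anti-concentration bound of Theorem~\ref{thm:anti} exactly rather than just up to an inequality. Once that identity is established, the induction is purely a matter of unwinding the recurrence.
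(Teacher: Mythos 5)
Your argument follows essentially the same route as the paper's: reduce $P(|N|\le t/2)$ to a single cdf evaluation via symmetry and continuity, iterate the CND recurrence of Lemma~\ref{lem:recurrence} (you walk up the positive half-line, the paper walks down the negative one), and anchor the odd case on the identity $F_N(1/2)=1-c_f$, which the paper isolates as Lemma~\ref{lem:CNDcf}. The induction, the verification that the recurrence's hypothesis $F_N(x-1)\ge 1/2>0$ holds at every inductive step, and the parity bookkeeping are all correct.

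The one genuine gap is in your base case $F_N(1/2)=1-c_f$. Rearranging property~3 of Definition~\ref{def:CND} into $f(F_N(y))=F_N(y-1)$ is precisely the content of Lemma~\ref{lem:recurrence}, and at $y=1/2$ that step is only licensed when $F_N(1/2)<1$ (equivalently $F_N(-1/2)>0$; one also needs $F_N(1/2)\in(0,1)$ for the substitution $\alpha=F_N(1/2)$ and the inversion $F^{-1}(F_N(y))=y$ to make sense). You never verify this, and it can genuinely fail: ``nontrivial'' only requires $f(\alpha)<\alpha$ somewhere, so $f\equiv 0$ is nontrivial, and its CNDs (e.g.\ the uniform distribution on $[-1/2,1/2]$) have $F_N(1/2)=1$. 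In that degenerate situation the identity $f(F_N(1/2))=F_N(-1/2)$ is unavailable, and you need a separate argument that $F_N(1/2)=1$ still forces $F_N(1/2)=1-c_f$, i.e.\ that $c_f=0$. The paper supplies exactly this in Case~1 of Lemma~\ref{lem:CNDcf}: if $F_N(-1/2)=0$, the rejection region $[1/2,\infty)$ for testing $N$ versus $N+1$ has both type~I and type~II error equal to zero, so by property~2 the tradeoff function passes through $(1,0)$, giving $T(N,N+1)\equiv 0$ and hence $c_f=0=F_N(-1/2)$. Adding that short case distinction (or simply invoking Lemma~\ref{lem:CNDcf}) closes the gap; everything else in your proof goes through.
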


Combining Lemma \ref{lem:CNDconcentration} with Theorem \ref{thm:anti}, we get the following corollary, indicating that at half-integer values, CNDs are more concentrated than any alternative $f$-DP noise mechanism. 
\begin{restatable}{corollary}{corCNDoptimal}\label{cor:CNDoptimal}
    Let $f$ be a symmetric, nontrivial tradeoff function. Let $N\sim F$ be a CND for $f$, and let $N'$ be a random variable such that $T(N',N'+1)\geq f$. Then for every $t\in \ZZ^{\geq 0}$,
    \[P(|N|\leq t/2)\geq \sup_{a\in \RR} P(-t/2<N'-a\leq t/2).\]
\end{restatable}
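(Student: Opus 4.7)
The plan is to prove the corollary by directly juxtaposing Lemma \ref{lem:CNDconcentration}, which gives an exact formula for the concentration of a CND about zero, with Theorem \ref{thm:anti}, which provides an upper bound on anti-concentration for any additive noise satisfying $T(N',N'+1)\geq f$. Both results are stated as the same piecewise expression indexed by the parity of $t$, so once they are placed side by side the inequality is immediate.

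First I would dispatch the trivial case $t=0$: the interval $\{x : -0 < x \leq 0\}$ is empty, so the right-hand side is zero, and since $N$ is continuous we also have $P(|N|\leq 0) = 0$. For $t\in \ZZ^{>0}$, I would invoke Lemma \ref{lem:CNDconcentration} to obtain
\[
P(|N|\leq t/2) \;=\; \begin{cases} 1-2f^{\circ k}(c_f) & t=2k+1,\\ 1-2f^{\circ k}(1/2) & t=2k,\end{cases}
\]
and then apply Theorem \ref{thm:anti} to $N'$, whose hypothesis $T(N',N'+1)\geq f$ is exactly what is assumed, to conclude that $\sup_{a\in \RR} P(-t/2 < N'-a \leq t/2)$ is bounded above by precisely the same piecewise quantity. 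Matching the two cases yields the desired inequality.

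A minor detail I would verify in passing is that the continuity of the CND $N$ lets one identify $P(|N|\leq t/2)$ with $P(-t/2 < N \leq t/2)$, so the Lemma's formula is genuinely comparable to the half-open interval probability on the right-hand side of the corollary (centered at $a=0$, which is admissible in the supremum). There is no substantive obstacle in the proof: the technical work has already been carried out in the construction of the CND recurrence underlying Lemma \ref{lem:CNDconcentration} and in the group-privacy and total-variation arguments behind Theorem \ref{thm:anti}, so the corollary is effectively a bookkeeping step that compares an exact value to a matching upper bound.
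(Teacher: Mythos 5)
Your proof is correct and is essentially the paper's own argument: the corollary follows by combining the exact CND concentration formula of Lemma \ref{lem:CNDconcentration} with the anti-concentration bound of Theorem \ref{thm:anti}, the two piecewise expressions matching case by case (and your separate treatment of $t=0$ is a harmless extra check, since the right-hand side is then trivially zero).
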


It is important to acknowledge the limitations of Corollary \ref{cor:CNDoptimal}, as the inequality need not hold at values which are not half-integers. See Example \ref{ex:tulapLaplace} below:

\begin{figure}
    \centering
    \includegraphics[width=.48\linewidth]{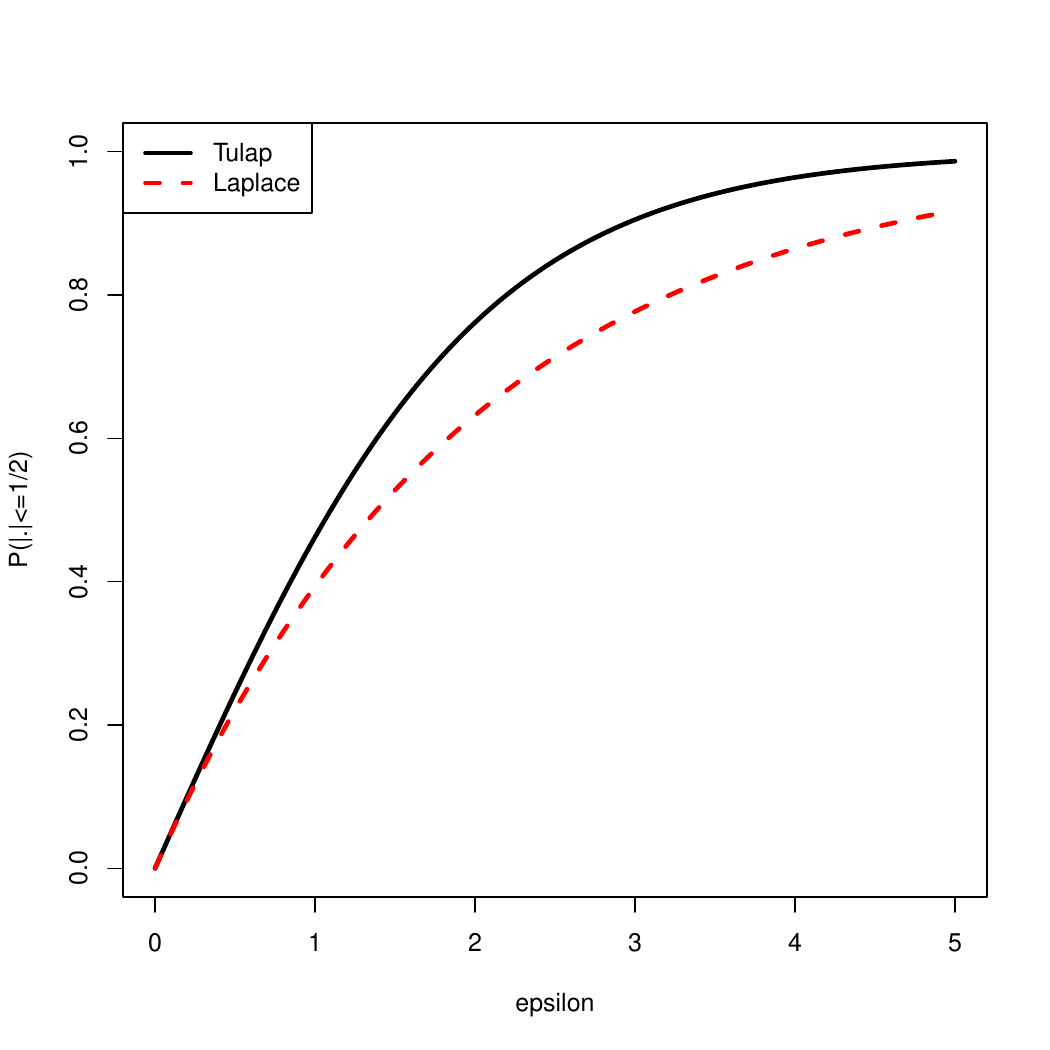}
        \includegraphics[width=.48\linewidth]{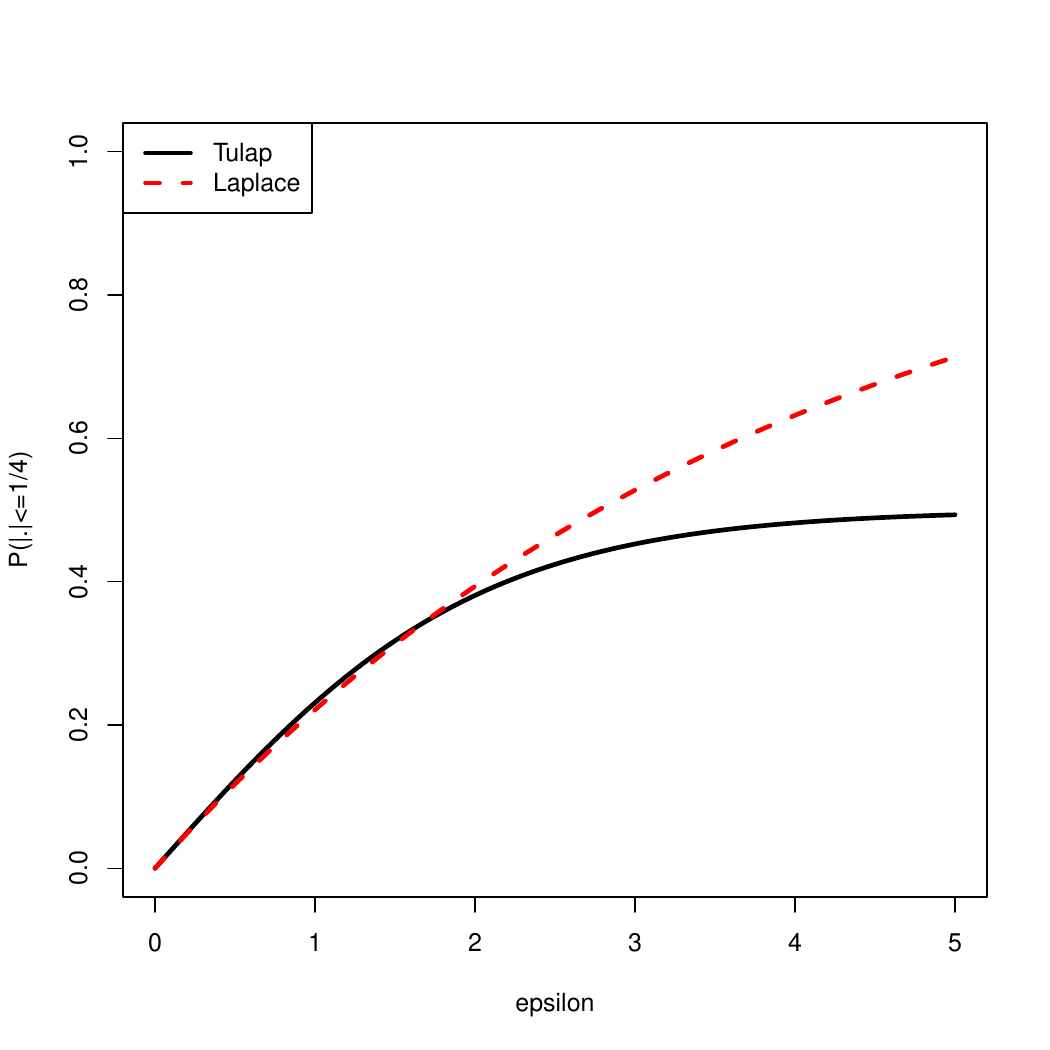}
    \caption{Comparison of central probabilities for Tulap and Laplace distributions, as $\epsilon$ varies. Left: $P(|\cdot|\leq 1/2)$, Right: $P(|\cdot|\leq 1/4)$. See Example \ref{ex:tulapLaplace} for details. }
    \label{fig:tulapLaplace}
\end{figure}

\begin{example}\label{ex:tulapLaplace}
    Recall from \citet{awan2022log} that $\mathrm{Tulap}(0,\exp(-\ep),0)$ is the unique CND for $(\ep,0)$-DP,  which can be constructed using Proposition \ref{prop:CNDsynthetic} \citep{awan2023canonical} ( the name is an abbreviation of truncated-uniform-Laplace, which references an alternative construction of the distribution). It is also common knowledge that $\mathrm{Laplace}(0,1/\ep)$ is another additive noise that can be used to satisfy $(\ep,0)$-DP. Let $N\sim \mathrm{Tulap}(0,\exp(-\ep),0)$ and let $L\sim \mathrm{Laplace}(0,1/\ep)$. We see in the left plot of Figure \ref{fig:tulapLaplace} that $P(|N|\leq 1/2)\geq P(|L|\leq 1/2)$ for all values of $\epsilon$. On the other hand, the right plot of Figure \ref{fig:tulapLaplace} shows that while $P(|N|\leq 1/4)\geq P(|L|\leq 1/4)$ for small values of $\epsilon$, when $\epsilon> 2$, $P(|N|\leq 1/4)\leq P(|L|\leq 1/4)$. This indicates that the result of Corollary  \ref{cor:CNDoptimal} cannot be extended for non-integer values of $t$. 
\end{example}

A consequence of Corollary \ref{cor:CNDoptimal} is that CNDs are nearly optimal compared to another $f$-DP noise distribution $N'$, in the sense that at half-integer values, the cdf of $|N|$ dominates the cdf of $|N'-a|$. Corollary \ref{cor:CNDratio} provides another perspective, demonstrating that $|N'-a|$ stochastically dominates $|N|-1/2$. This result can also be interpreted as saying that the mass of a competing $f$-DP noise distribution is no more than a radius of 1/2 closer to its center than a CND.

\begin{restatable}{corollary}{corCNDratio}\label{cor:CNDratio}
    Let $f$ be a symmetric, nontrivial tradeoff function. Let $N\sim F$ be a CND for $f$, and let $N'$ be a random variable such that $T(N',N'+1)\geq f$. Then for every $t\in \RR^{\geq 0}$,
    \[P(|N|\leq t+1/2)\geq \sup_{a\in \RR} P(-t<N'-a\leq t).\]
    If $N'$ is a continuous random variable, this simplifies to 
    $P(|N|\leq t+1/2)\geq \sup_{a\in \RR}P(|N'-a|\leq t).$
\end{restatable}

\begin{remark}
    While Lemma \ref{lem:CNDconcentration} establishes that the CND for a tradeoff function $f$ matches the anti-concentration bounds of Theorem \ref{thm:anti}, these are not the only distributions that can do so. For example, we see in Theorem \ref{thm:anti} that the bounds only depend on the values $f^{\circ k}(c_f)$ and $f^{\circ k}(1/2)$, which often do not fully determine the tradeoff function $f$. Thus, if $g\geq f$ satisfies $c_g=c_f$, $g^{\circ k}(c_f)=f^{\circ k}(c_f)$ and $g^{\circ k}(1/2)=f^{\circ k}(1/2)$ for all $k\geq 0$, then a CND for $g$ satisfies $f$-DP and also matches the anti-concentration bounds of Theorem \ref{thm:anti}.
\end{remark}

\subsection{CNDs are Sub-Exponential}\label{s:subExp}
In addition to comparing CNDs with other distributions, we can also directly analyze the tail behavior of CNDs. In the following result, we establish that all CNDs are sub-exponential, implying that their tail behavior has at slowest an exponential decay. As demonstrated in Example \ref{ex:cauchy} this result indicates that even if the target $f$-DP guarantee is designed in terms of a heavier tailed distribution (such as Cauchy), a CND only requires sub-exponential tails. 

Tail behavior of a noise mechanism is important to understand the dispersion, the probability of obtaining extreme values, and whether moments exist. With heavy-tailed distributions, such as the Cauchy distribution, it is possible that the moments do not exist. On the other hand, for sub-exponential random variables, all moments are guaranteed to be finite.

A random variable $X$ with mean $\mu$ is \emph{$(\sigma^2,\alpha)$-sub-exponential} if $\EE \exp(\lambda(X-\mu))\leq \exp(\lambda^2 \sigma^2/2)$ for all $|\lambda|<1/\alpha$. The bound on the moment generating function, implies an exponential decaying bound on the tails of the distribution by the Chernoff bound. In the following theorem, $\lfloor t\rfloor\defeq \max\{n\in \ZZ|n\leq t\}$ is the floor function.

\begin{restatable}{theorem}{thmsubexponential}\label{thm:sub-exponential}
    Let $f$ be a symmetric nontrivial tradeoff function and let $N\sim F$ be a CND for $f$. Set $\ep_f = \log\left((1-c_f)/c_f\right)$. Then the following hold:
\begin{enumerate}
    \item $P(|N|>t)\leq \exp(-\ep_f\lfloor t\rfloor)\leq \exp(-\ep_f (t-1))$ for $t\geq 0$,
    \item $\EE|N|^n\leq\ep_f^{-n}\exp(\ep_f)n!$ for $n\in \ZZ^{>0}$,
    \item $N$ is $(4\exp(\ep_f)/\ep_f^2,2/\ep_f)$-sub-exponential.
\end{enumerate}
\end{restatable}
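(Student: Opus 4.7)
The plan is to prove the three parts in sequence, with each building on the previous. The central tool is the CND recurrence from Lemma \ref{lem:recurrence} combined with the bound $f \leq f_{\ep_f,0}$ from part 3 of Lemma \ref{lem:Cfzero}, which converts the abstract tradeoff function $f$ into a concrete exponential contraction at the tail.

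For part 1, I start from the symmetry identity $F(0) = 1/2$ and iterate the tail form of the recurrence $1 - F(x) = f(1 - F(x-1))$ at integer values. The key observation is that whenever $1 - F(k-1) \leq 1/2$, Lemma \ref{lem:Cfzero} gives $1 - F(k-1) \leq 1 - c_f$ (since $c_f \leq 1/2$), placing the argument in the linear regime of $f_{\ep_f,0}$ where $f_{\ep_f,0}(\alpha) = \exp(-\ep_f)\alpha$. This yields $1 - F(k) \leq \exp(-\ep_f)(1 - F(k-1))$, and a straightforward induction gives $1 - F(k) \leq \exp(-\ep_f k)/2$ for every $k \in \ZZ^{\geq 0}$. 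Symmetry of $N$ turns this into $P(|N| > k) \leq \exp(-\ep_f k)$, and monotonicity of the tail together with $\lfloor t\rfloor > t - 1$ extends the bound to arbitrary real $t \geq 0$.

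For part 2, I apply the layer-cake identity $\EE|N|^n = \int_0^\infty n t^{n-1} P(|N| > t)\, dt$, substitute the continuous-in-$t$ form $P(|N| > t) \leq \exp(\ep_f)\exp(-\ep_f t)$ from part 1, and recognize the resulting integral as a Gamma integral that evaluates to $\ep_f^{-n}\exp(\ep_f)\, n!$. For part 3, I expand the moment generating function as $\EE \exp(\lambda N) = \sum_{k \geq 0} \lambda^k \EE N^k/k!$. Since $N$ is symmetric about zero, all odd moments vanish, so only the even-order moments contribute, and these are controlled via part 2. The resulting series is geometric in $(\lambda/\ep_f)^2$ and converges for $|\lambda| < \ep_f$, yielding the bound $\EE \exp(\lambda N) \leq 1 + \exp(\ep_f)(\lambda/\ep_f)^2/\bigl(1 - (\lambda/\ep_f)^2\bigr)$. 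Restricting to $|\lambda| < \ep_f/2$ bounds the denominator below, and applying $1 + x \leq \exp(x)$ converts the expression into an exponential of a quadratic in $\lambda$ of the sub-exponential form.

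The main obstacle I anticipate is in part 3, where recovering the specific constants $(2\exp(\ep_f)/\ep_f, 2/\ep_f)$ rather than a slightly looser pair requires care in how the linear-in-$\lambda^2$ approximation is extracted from the geometric series; the radius of $\lambda$ must be traded against the size of $1 - (\lambda/\ep_f)^2$. Part 1 is the conceptual crux, since it turns the abstract recurrence into a concrete exponential decay rate; once that is available, parts 2 and 3 reduce to routine calculus.
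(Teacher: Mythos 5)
Your parts 1 and 2 are essentially the paper's own argument: iterating the recurrence of Lemma \ref{lem:recurrence} from $F(0)=1/2$ while dominating $f$ by $f_{\ep_f,0}$ (whose lower linear branch $\alpha\mapsto e^{-\ep_f}\alpha$ is active on $[0,1-c_f]$) is exactly how the paper obtains $F(-z)\le \tfrac12 e^{-\ep_f z}$, and your layer-cake/Gamma-integral computation for the moments is the same calculation the paper phrases via $\EE X^{n-1}$ with $X\sim\mathrm{Exp}(\ep_f)$. Part 3 is where you genuinely diverge: the paper feeds the factorial moment bound into a Bernstein-condition lemma (its Lemma \ref{lem:Bernstein}, quoted from the concentration literature), whereas you expand the MGF directly, kill the odd terms by symmetry, and sum a geometric series in $(\lambda/\ep_f)^2$ before applying $1+x\le e^x$. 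Both routes are sound ways to pass from $\EE|N|^n\le \ep_f^{-n}e^{\ep_f}n!$ to sub-exponentiality; the Bernstein route is shorter, needs no symmetry, and hides the series bookkeeping in a citable lemma, while your route is self-contained and exploits the symmetry of CNDs (you should also note, as you implicitly do, that the term-by-term expansion is justified by Tonelli since part 2 makes $\sum|\lambda|^k\EE|N|^k/k!$ finite for $|\lambda|<\ep_f$). On the anticipated obstacle about constants: your calculation yields, for $|\lambda|<\ep_f/2$, the bound $\EE e^{\lambda N}\le \exp\bigl(\tfrac{4e^{\ep_f}}{3\ep_f^2}\lambda^2\bigr)$, i.e.\ a $\bigl(\tfrac{8e^{\ep_f}}{3\ep_f^2},\tfrac{2}{\ep_f}\bigr)$ pair, which matches the displayed $\bigl(\tfrac{2e^{\ep_f}}{\ep_f},\tfrac{2}{\ep_f}\bigr)$ only when $\ep_f\gtrsim 4/3$; but be aware that the paper's own matching of $\EE|N|^n\le \tfrac12 n!\sigma^2 b^{n-2}$ with $b=1/\ep_f$ forces $\sigma^2=2e^{\ep_f}/\ep_f^2$ and hence delivers $\bigl(4e^{\ep_f}/\ep_f^2,2/\ep_f\bigr)$, so the $1/\ep_f^2$ scaling you get is the same as (in fact slightly better than) what the paper's argument actually proves, and the discrepancy with the stated first parameter for small $\ep_f$ is inherited from the theorem's displayed constant rather than a defect specific to your approach.
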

\begin{proof}[Proof Sketch]
    Using part 3 of Lemma \ref{lem:Cfzero}, we upper bound $f$ with a piece-wise linear tradeoff function $f_{\ep_f,0}$. Using this along with the recurrence in Lemma \ref{lem:recurrence} we obtain the inequality in part 1. Part 2 is obtained by integrating the result of part 1, and part 3 is obtained from part 2 by using the Bernstein condition for sub-exponential random variables.
\end{proof}
An interesting consequence of Theorem \ref{thm:sub-exponential} is that there is no privacy-relevant reason to add noise with tails heavier than exponential. We explore this in terms of ``Cauchy-DP.''
\begin{example}[Cauchy-DP]\label{ex:cauchy}
    For $m\geq 0$, let 
    $C_m=T(\mathrm{Cauchy}(0,1),\mathrm{Cauchy}(m,1))$. Intuition  suggests that $C_m$-DP is a stronger notion of DP than $(\ep,0)$-DP, since Cauchy noise has heavier tails than Laplace or Tulap noise. However in this example, we establish that $C_m$-DP is equivalent to $(\ep,0)$-DP, up to a change of variables. 
    
    \citet[Example 6]{reimherr2019elliptical} previously proved that adding Cauchy noise to a statistic of sensitivity $m$ satisfies $(\ep_L(m),0)$-DP, where 
    \[\ep_L(m)=\log\left(\frac{4+(m+\sqrt{m^2+4})^2}{4+(m-\sqrt{m^2+4})^2}\right)\]
    (while their result is stated for multivariate $t$-distributions with degrees of freedom $\nu>1$, the argument also works for the Cauchy distribution, which is a univariate $t$-distribution with 1 degree of freedom). This implies that $C_m\geq f_{\ep_L(m),0}$. This means that ``Cauchy-DP'' is a stronger notion of privacy than $(\ep,0)$-DP,  which agrees with our intuition.
    
    On the other hand, using part 3 of Lemma \ref{lem:Cfzero}, we have that $C_m\leq f_{\ep_U(m),0}$, where $\ep_U(m) = \log\left((1-c_m)/c_m\right)$. We know that $1-2c_m=\mathrm{TV}(\mathrm{Cauchy}(0,1),\mathrm{Cauchy}(m,1))$ $= (2/\pi) \arctan\left(\frac{m}{2}\right)$ \citep{nielsen2022f}, where $c_m$ is shorthand for $c_{C_m}$. It follows that 
    \[\ep_U(m) = \log \left(\frac{3\pi-\arctan(m/2)}{2\arctan(m/2)-\pi}\right).\]
    In Figure \ref{fig:CauchyDP}, we plot $C_1$ as well as $f_{\ep_U,0}$ and $f_{\ep_L,0}$. 
%
    %While Cauchy noise has heavy tails and all of its moments are undefined, 
    Theorem \ref{thm:sub-exponential} tells us that any CND for $C_m$ will have sub-exponential tails, indicating that a CND can obtain the same privacy guarantee as the Cauchy distribution, but with less noise. 
    
    In the right plot of Figure \ref{fig:CauchyDP} we compare $P(|N_i|\leq t)$ where $N_2\sim \mathrm{Cauchy}(0,1)$ and $N_1$ is drawn from the CND for $C_1$ constructed in Proposition \ref{prop:CNDsynthetic}. Empirically, for $t\geq 1/2$, $P(|N_1|\geq t)\geq P(|N_2|\geq t)$ and the difference is at most $\approx.00425$ for $t\leq 1/2$. Thus, the CND would be preferred over the Cauchy distribution in most applications. To illustrate the difference in the tails, we simulated 100 random variables from both distributions\footnote{ The code for this paper, available at \url{https://github.com/JordanAwan/OptimizingNoiseForFDP}, gives an implementation to simulate from the CND constructed in Proposition \ref{prop:CNDsynthetic}.}. The maximum absolute value of the CND variables was 4.28 whereas the maximum absolute value of the Cauchy random variables was 118.43. We see that the subexponential tails of the CND significantly reduce the chances of observing extreme events.
\end{example}

\begin{figure}
    \centering
    \includegraphics[width=.48\linewidth]{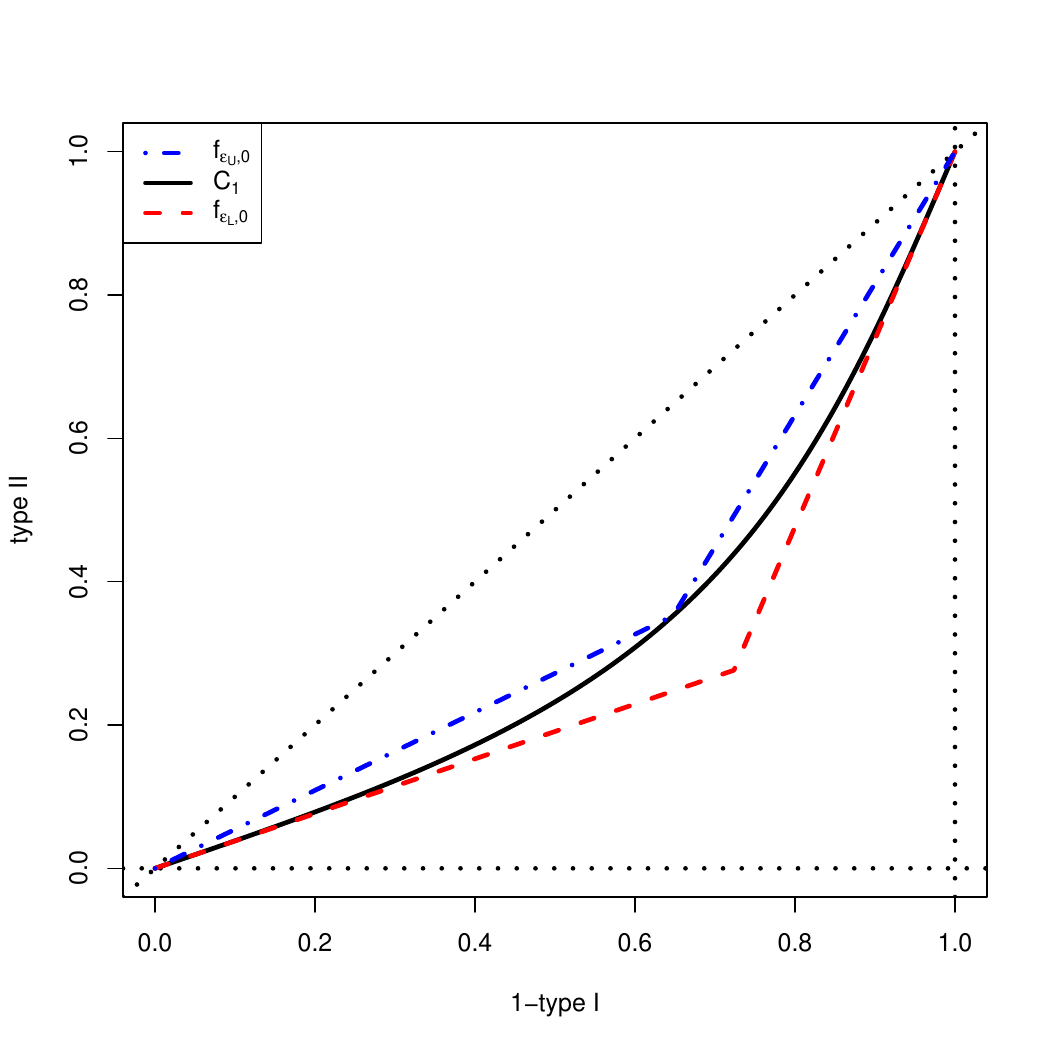}
    \includegraphics[width=.48\linewidth]{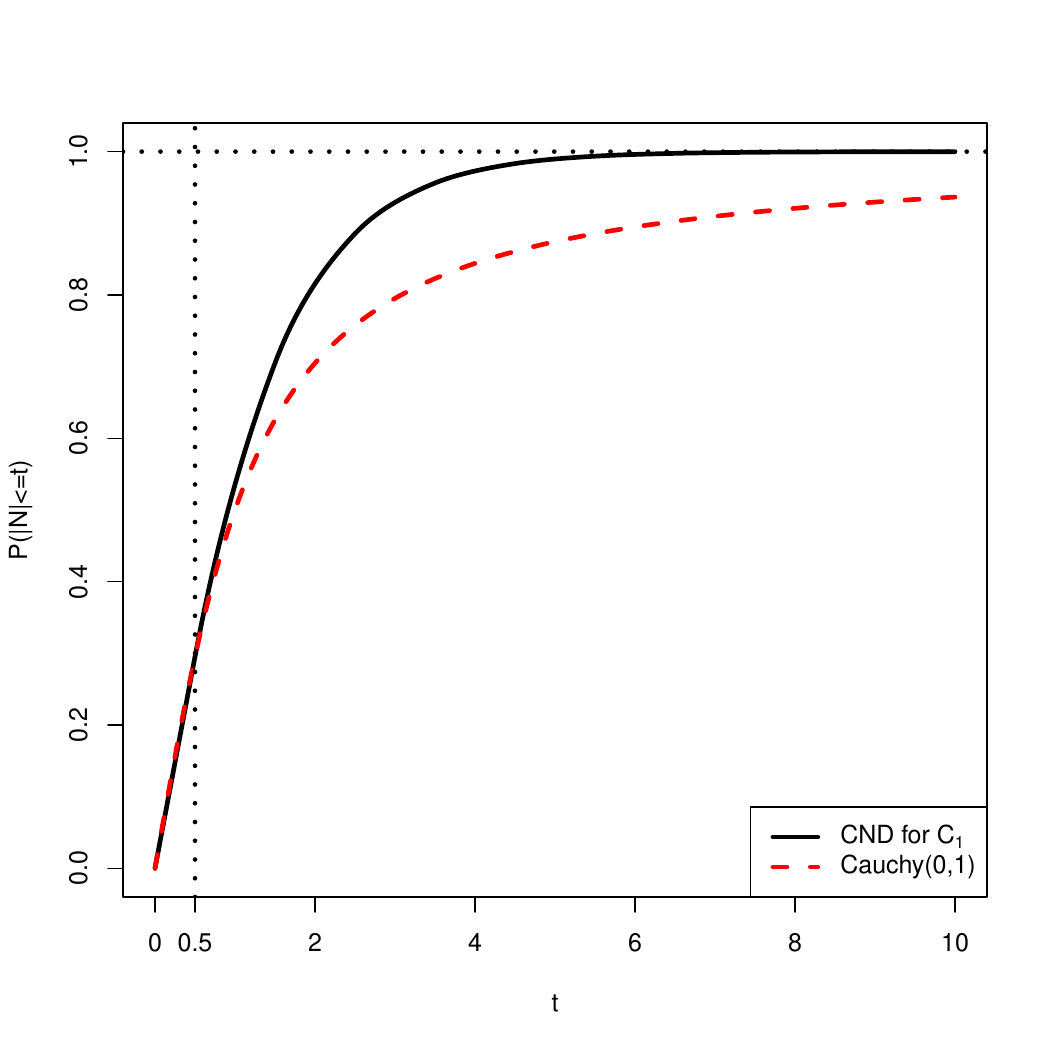}
    \caption{Left: Tradeoff function $C_1=T(\mathrm{Cauchy}(0,1),\mathrm{Cauchy}(1,1))$ as well as $f_{\ep_U,0}$ and $f_{\ep_L,0}$, from Example \ref{ex:cauchy}.  Right: Comparison of $P(|N|\leq t)$ where $N\sim \mathrm{Cauchy}(0,1)$ or $N$ is a CND for $C_1$. Vertical line is at $t=1/2$ and horizontal line is at 1.}
    \label{fig:CauchyDP}
\end{figure}

\subsection{Optimality of Log-Concave CNDs}\label{s:logCND}
%While general continuous CNDs are not stochastically optimal compared to other noises, in the special case of log-concave CNDs we are able to establish a result similar to Theorem \ref{thm:stochDisc}. 
Among the additive noise distributions log-concave distributions form an important subclass. Log-concave CNDs are also continuous CNDs and because of this they inherit all of the properties
of Sections \ref{s:nearCND} and \ref{s:subExp}. In this section, we give a new construction for the (unique) log-concave CND, when it exists, and we also show that log-concave CNDs are stochastically smallest among additive noises with the same privacy guarantee.

A continuous random variable is log-concave if its density can be expressed in the form $g(x)\propto \exp(C(x))$, where $C$ is a concave function. Log-concave distributions have many nice properties. In particular, a random variable $N$ with cdf $F$ satisfies $T(N,N+t)(\alpha)=F(F^{-1}(\alpha)-t)$ for every $t>0$ if and only if it is log-concave \citep{dong2022gaussian}. This is equivalent to stating that log-concave distributions have the monotone likelihood ratio property when testing between any two shifted versions of $N$.

 \citet{awan2023canonical} established conditions for the existence and construction of log-concave CNDs in terms of whether a tradeoff function is \emph{infinitely divisible}. A collection of tradeoff functions $\{f_t|t\geq0\}$ is infinitely divisible if it has the following properties: 1) $f_s\circ f_t=f_{s+t}$ for all $s,t\geq 0$, 2) $f_s$ is nontrivial for all $s>0$, and 3) $f_{s}\rightarrow f_0=\mathrm{Id}$ as $s\downarrow 0$, where $\mathrm{Id}(\alpha)=\alpha$. Infinite divisibility formalizes whether the tradeoff functions in the collection can be achieved by group privacy. \citet{awan2022log} showed that a symmetric nontrivial tradeoff function has a log-concave CND if and only if it belongs to an infinitely divisible collection of tradeoff functions. Furthermore, \citet{awan2022log} gave a construction for the unique log-concave CND of an infinitely divisible collection of tradeoff functions, which was expressed as a limit of CNDs constructed via Proposition \ref{prop:CNDsynthetic}. In the following lemma, we give a more explicit construction of the same log-concave distribution. Interestingly, the log-concave CND only depends on a single value from each of the $f_t$'s: $f_t(1/2)=c_{f_{2t}}$.  Lemma \ref{lem:logConcaveConstruction} also offers another way of understanding why the log-concave CND for $f_1$ is unique.

\begin{restatable}{lemma}{lemlogConcaveConstruction}\label{lem:logConcaveConstruction}
    Let $\{f_t|t\geq 0\}$ be an infinitely divisible collection of tradeoff functions, and let $F$ be the log-concave CND for $f_1$. Then, for $t\in \RR^{>0}$,
    $F(-t)=f_t(1/2)=c_{f_{2t}}.$ 
\end{restatable}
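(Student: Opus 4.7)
The plan is to combine the log-concavity characterization $T(N,N+t)(\alpha)=F(F^{-1}(\alpha)-t)$ with the infinite divisibility of $\{f_t\}$, and then close the second equality using the symmetry of $f_t$.

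First, I would upgrade the CND identity $T(N,N+1)=f_1$ to $T(N,N+t)=f_t$ for every $t\geq 0$. Since $F$ is log-concave, the quoted equivalence gives $g_t(\alpha):=T(N,N+t)(\alpha)=F(F^{-1}(\alpha)-t)$, and a direct substitution using $F\circ F^{-1}=\mathrm{id}$ shows $g_s\circ g_t=g_{s+t}$, so $\{g_t\}_{t\geq 0}$ is itself an infinitely divisible collection with $g_1=f_1$. Invoking the structural results of \citet{awan2022log}, which pin down the log-concave CND as the unique noise tied to an infinitely divisible collection containing $f_1$, this forces $g_t=f_t$ for all $t\geq 0$.

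Second, symmetry of $F$ about $0$ gives $F(0)=1/2$ and hence $F^{-1}(1/2)=0$, so
\[f_t(1/2)=g_t(1/2)=F(F^{-1}(1/2)-t)=F(-t),\]
which yields the first equality. Third, because $f_t$ is a symmetric tradeoff function in this paper's convention, it satisfies $f_t(1-f_t(y))=1-y$ for all $y\in[0,1]$ (this is the image under the $\alpha\mapsto 1-\alpha$ relabeling of the involution condition $g=g^{-1}$ that characterizes symmetric tradeoff functions in Dong's convention). Setting $y=1/2$ gives $f_t(1-f_t(1/2))=1/2$, and applying $f_t$ once more, together with the composition identity $f_t\circ f_t=f_{2t}$ from infinite divisibility, yields $f_{2t}(1-f_t(1/2))=f_t(1/2)$. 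Since Lemma \ref{lem:Cfzero} asserts that $c_{f_{2t}}$ is the unique solution of $f_{2t}(1-c)=c$, we conclude $f_t(1/2)=c_{f_{2t}}$.

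The main obstacle is the first step: showing that the infinitely divisible collection $\{g_t\}$ generated by $F$ must coincide with the given $\{f_t\}$ is not a one-line deduction from the definitions and appears to require either a semigroup-style uniqueness argument or direct invocation of the construction in \citet{awan2022log}. A minor additional subtlety is verifying that $F$ is strictly increasing on the interior of its support, so that $F^{-1}(1/2)$ is unambiguously zero; this follows from log-concavity of the density combined with the symmetry of $F$, provided $f_1$ is nontrivial.
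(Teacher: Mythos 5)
Your derivation of the second equality contains an unacknowledged error. The identity $f_t(1-f_t(y))=1-y$ for all $y\in[0,1]$ is \emph{not} a consequence of symmetry: in Dong's convention symmetry means $f=f^{-1}$ with the \emph{generalized} inverse, and the pointwise involution identity can fail wherever $f_t$ has a flat zero piece. This failure occurs inside legitimate infinitely divisible families, exactly at the point $y=1/2$ where you use it. For instance, take the family $f_t=f_{0,\min(t/2,1)}$ (so $f_1=f_{0,1/2}$, whose log-concave CND is $U(-1,1)$): at $t=3/2$ one has $f_t(1/2)=0$ but $f_t(1-f_t(1/2))=f_t(1)=1/4\neq 1/2$. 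The lemma's conclusion still holds there ($F(-3/2)=0=f_{3/2}(1/2)=c_{f_3}$), but your chain of equalities does not prove it. This is precisely why the paper's proof splits into two cases: when $f_t(1/2)=0$ it argues directly that $f_{2t}(1)=f_t^{\circ 2}(1)\leq f_t(1/2)=0$, hence $c_{f_{2t}}=0$; when $f_t(1/2)>0$ it obtains $f_{2t}(1-f_t(1/2))=f_t(1/2)$ by applying the CND recurrence (Lemma \ref{lem:recurrence}) twice to $F_t(\cdot)=F(t\cdot)$, using $F_t(1)=1-f_t(1/2)<1$ to justify the recurrence. Your argument needs the same case distinction (or the extra hypothesis $f_t(1/2)>0$) to be valid.

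Your first step also leans on a citation that does not deliver what you need, as you partly acknowledge: uniqueness of the log-concave CND \emph{for the single tradeoff function} $f_1$ does not imply that the infinitely divisible family passing through $f_1$ is unique, which is what the inference from $g_1=f_1$ to $g_t=f_t$ requires. The result actually available in \citet{awan2022log}, and the one the paper invokes, is the scaling statement that $F(t\cdot)$ is a CND for $f_t$ for every $t>0$; from it, $F(-t)=f_t(F(0))=f_t(1/2)$ follows in one line from the recurrence of Lemma \ref{lem:recurrence}, with no need for the log-concave MLR formula, the auxiliary semigroup $\{g_t\}$, or strict monotonicity of $F$. Citing that scaling result directly would both close this gap and shorten your argument considerably.
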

%\begin{proof}[Proof Sketch]
%Recall that $F_t(\cdot)\defeq F(t\cdot)$ is a CND for $f_t$. Since $F(0)=1/2$, we have that $F(-t)=F_t(-1)=f_t(1/2)$, by Lemma \ref{lem:recurrence}. To establish the connection with $c_{f_{2t}}$, we show that $f_{2t}(1-f_t(1/2))=f_t(1/2)$ using infinite divisibility and properties of CNDs.    
%\end{proof}

Most importantly, Lemma \ref{lem:logConcaveConstruction} can be used to construct the log-concave CND $F$ from the family of tradeoff functions, but it can also be used to compute the value $c_{f_t}$, when $f_t$ is log-concave:

\begin{example}
    [Laplace $c_f$]
    Let $f_t(\alpha)=T(\mathrm{Laplace}(0,1),\mathrm{Laplace}(t,1))$, which is  infinitely divisible, with log-concave CND $\mathrm{Laplace}(0,1)$ for $f_1$. Then, we can calculate 
    $c_{f_t}=F(-t/2)=(1/2)\exp(-t/2),$ 
    which is simply an evaluation of the Laplace cdf. 
\end{example}

%\todo{Conversion from pure DP to GDP. Also appeared in ``Differentially Private Multivariate Statistics with an
%Application to Contingency Table Analysis''}
%A consequence of the monotone likelihood ratio property is that the bound in Lemma \ref{lem:anti} holds with equality for all $t\in \RR^{\geq 0}$ (see Lemma \ref{lem:logconcave}).

In Theorem \ref{thm:stochLog}, we establish that the log-concave CND $N$ is stochastically smallest, compared to any other noise distribution which satisfies $T(N',N'+t)\geq f_t$ for all $t\geq 0$. %In order to formalize this, we need to review the concept of \emph{stochastic dominance}.

%Let $X$ and $Y$ be two random variables. Stochastic dominance of random variables is a partial order, which asserts that one variable takes larger values compared to another. The following are equivalent definitions of (first order) stochastic dominance \citep{levy1992stochastic}: 1) $X$ stochastically dominates $Y$, 2) $F_X(t)\leq F_Y(t)$ for all $t\in \RR$, 3) for every non-decreasing function $\phi$, $\EE \phi(X)\geq \EE\phi(Y)$, 4) there exists a random variable $W$ (not necessarily independent of $Y$) such that $X\overset d = Y+W$. 

\begin{restatable}{theorem}{thmstochLog}\label{thm:stochLog}
    Let $\{f_t|t\geq 0\}$ be an infinitely divisible collection of tradeoff functions, and let $F$ be the log-concave CND for $f_1$. Let $N\sim F$ and let $N'$ be any random variable such that $T(N',N'+t)\geq f_t$ for all $t\geq0$. Then $|N'-a|$ stochastically dominates $|N|$ for all $a\in \RR$. It follows that 
    \begin{itemize}
        \item $P(|N|\leq t)\geq P(|N'-a|\leq t)$ for all $a\in \RR$ and all $t\in \RR^{\geq 0}$,
        \item For any non-decreasing function $\phi$, $\EE_N\phi(|N|)\leq \EE_{N'}\phi(|N'-a|)$ for all $a\in \RR$,
        \item There exists a non-negative, random variable $W$ such that $|N'|\overset d =|N|+W$. If $N'$ is symmetric about zero, then $N'\overset d = N+\mathrm{sign}(N)\cdot W$. 
    \end{itemize}
\end{restatable}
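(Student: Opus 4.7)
The plan is to reduce the stochastic dominance claim to the single pointwise inequality $P(|N'-a|\le t)\le P(|N|\le t)$ for every $a\in\RR$ and $t\ge 0$, and then deduce the three bullets from standard equivalent characterizations of first-order stochastic dominance.

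For the left side, since $F$ is continuous and symmetric about zero, $P(|N|\le t) = 1 - 2F(-t)$, and Lemma \ref{lem:logConcaveConstruction} identifies $F(-t) = c_{f_{2t}}$. Hence $P(|N|\le t) = 1 - 2c_{f_{2t}}$, which by property 2 of Lemma \ref{lem:Cfzero} equals $\mathrm{TV}(N, N+2t)$, so $N$ itself saturates the anti-concentration bound of Lemma \ref{lem:anti} at every $t$.

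For the right side, I would apply Lemma \ref{lem:anti} to $N'$ with shift $2t$ to obtain $\sup_a P(-t<N'-a\le t) \le \mathrm{TV}(N', N'+2t)$. Writing this total variation as $1-2c_{T(N',N'+2t)}$ via Lemma \ref{lem:Cfzero} and using the hypothesis $T(N',N'+2t)\ge f_{2t}$ together with the monotonicity $g\ge f\Rightarrow c_g\ge c_f$ for symmetric tradeoff functions (a short graphical argument: $g$ lies on or above the anti-diagonal at $\alpha=1-c_f$, forcing its crossing with $\beta=1-\alpha$ to occur at $\alpha\le 1-c_f$), I get $\mathrm{TV}(N',N'+2t)\le 1-2c_{f_{2t}} = P(|N|\le t)$. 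To upgrade the half-open inequality to the closed interval $[-t,t]$, I would check that $N'$ is atomless: an atom of mass $p>0$ at some $x_0$ would force $\mathrm{TV}(N',N'+s)\ge p$ for all sufficiently small $s$ outside a countable set of exceptions, contradicting $\mathrm{TV}(N',N'+s)\le 1-2c_{f_s}\to 0$ as $s\downarrow 0$, which follows from infinite divisibility since $f_s\to\mathrm{Id}$ forces $c_{f_s}\to 1/2$. With $N'$ atomless, $P(|N'-a|\le t)=P(-t<N'-a\le t)\le P(|N|\le t)$, establishing the pointwise bound.

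The first bullet is exactly this bound, and the second is its standard expectation-form equivalent. For the third, a Strassen/quantile coupling produces $X\overset d=|N|$ and $Y\overset d=|N'|$ on a common probability space with $Y\ge X$ almost surely; setting $W=Y-X\ge 0$ gives $|N'|\overset d=|N|+W$. For the refined identity under symmetric $N'$, I would adjoin an independent fair sign $\varepsilon$ and use symmetry of $N$ to write $N\overset d=\varepsilon X$, so that $N+\mathrm{sign}(N)W = \varepsilon(X+W) = \varepsilon Y$, which has the same distribution as $N'$ by symmetry of $N'$.

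The only non-routine step is the atomlessness verification, which is what allows the half-open anti-concentration bound to be promoted to the closed interval defining $|N'-a|\le t$. Once that is in hand, the argument is a direct assembly of Lemmas \ref{lem:anti}, \ref{lem:Cfzero}, and \ref{lem:logConcaveConstruction} together with a standard stochastic-dominance coupling.
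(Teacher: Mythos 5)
Your argument is correct, and its core is the same chain the paper uses: apply Lemma \ref{lem:anti} with shift $2t$, bound $\mathrm{TV}(N',N'+2t)\leq 1-2c_{f_{2t}}$ from the hypothesis $T(N',N'+2t)\geq f_{2t}$, and identify $1-2c_{f_{2t}}=1-2F(-t)=P(|N|<t)$ via Lemma \ref{lem:logConcaveConstruction} and the symmetry/continuity of $F$. Where you genuinely diverge is the upgrade from the half-open interval to $P(|N'-a|\leq t)$: the paper does this with an elementary limiting argument, $P(|N'-a|\leq t)=\lim_{s\downarrow t}P(|N'-a|<s)\leq \lim_{s\downarrow t}P(|N|<s)=P(|N|\leq t)$, which needs no structural information about $N'$, whereas you prove that $N'$ must be atomless using infinite divisibility ($f_s\to\mathrm{Id}$ forces $c_{f_s}\to 1/2$, so $\mathrm{TV}(N',N'+s)\to 0$, incompatible with an atom). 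Your route is valid and yields the stronger by-product that any noise satisfying the family of constraints is atomless, but it is heavier machinery than needed; the paper's monotone-limit step is the cheaper way in. Two small polish points: invoking property 2 of Lemma \ref{lem:Cfzero} to write $\mathrm{TV}(N',N'+2t)=1-2c_{T(N',N'+2t)}$ is slightly off since that property is stated for \emph{symmetric} tradeoff functions and $T(N',N'+2t)$ need not be symmetric --- the clean fix (and what the paper implicitly does) is $T(N',N'+2t)\geq f_{2t}\geq f_{0,1-2c_{f_{2t}}}$, which directly bounds the total variation; and your explicit quantile coupling plus independent sign for the third bullet is fine, where the paper simply cites standard properties of stochastic dominance.
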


While the condition $T(N',N'+t)\geq f_t$ for all $t\geq 0$ is not required to achieve $f$-DP (it is only required that $T(N',N'+t)\geq f$ for $t\in [-1,1]$), this stronger condition is related to \emph{individual privacy accounting} \citep{rogers2016privacy,feldman2021individual,koskela2023individual}, which tracks each individual's privacy budget along several sequential compositions, such as in DP stochastic gradient descent. By using infinitely divisible tradeoff functions along with the condition $T(N',N'+t)\geq f_t$ for all $t\geq 0$, we can use $f_t$ as a lower bound for privacy accounting within the $\{f_t|t\geq 0\}$ family. Theorem \ref{thm:stochLog} shows that among these types of noise distributions, the log-concave CND is optimal. 

Stochastic dominance is a very strong ordering of random variables. While most prior works focus on a specific objective criterion to optimize (e.g., mean absolute and mean squared error \citep{geng2015approx}, optimal hypothesis testing \citet{awan2018differentially}, Wasserstein distance \citep{qin2022differential2}), stochastic dominance implies that the mechanism optimizes all symmetric objectives centered at the non-private value, which are non-decreasing away from the center.

\begin{example}
    Examples of log-concave distributions include Gaussian, Laplace, Logistic, Uniform, and Beta, as well as truncated versions of these distributions. Of these, the Gaussian distribution is the log-concave CND for GDP, Laplace is the log-concave CND for Laplace-DP \citep{awan2022log}, and uniform is the log-concave CND for $(0,\delta)$-DP \citep{awan2022log}. \citet{dong2020gaussian} showed that the logistic tradeoff function is a lower bound on the privacy guarantee for exponential mechanisms \citep{mcsherry2007mechanism}. 
\end{example}

%%%%%%%%%%%%%%%%%%%%%%%%%%%%%%%%%%%%%%%%%%%%%%%%%%%%%%
%%%   Discrete CNDS
%%%%%%%%%%%%%%%%%%%%%%%%%%%%%%%%%%%%%%%%%%%%%%%%%%%%%%%
\section{Discrete Canonical Noise Distributions}\label{s:discrete}
In Section \ref{s:continuous}, the canonical noise distributions we considered were all continuous, as were those considered in the previous literature \citep{awan2023canonical,awan2022log}. However, there are also important use-cases where integer-valued noise is preferable. For example, the US Census Bureau used the discrete Gaussian mechanism \citep{canonne2020discrete} to privatize the 2020 US Census products. Other papers have advocated for discrete noise such as the Skellam distribution \citep{agarwal2021skellam}, binomial distribution \citep{dwork2006our,agarwal2018cpsgd}, and discrete Laplace/geometric mechanism \citep{ghosh2012universally}. Even when one may theoretically prefer a continuous noise distribution, there may still be a benefit of using a discrete noise distribution, to ensure that an implementation on finite computers has guaranteed privacy properties (e.g., floating point calculations are vulnerable to privacy attacks \citep{mironov2012significance}).

In this section, we propose a definition for a ``discrete CND,'' which captures the idea that 1) it is an integer-valued distribution, 2) for integer-valued statistics with finite sensitivity, it can be used to achieve $f$-DP,  3) the privacy guarantee is ``tight,'' and 4) it satisfies a monotone-likelihood ratio property analogous to property 3 in Definition \ref{def:CND}.

\begin{definition}\label{def:dCND}
Let $\Delta \in \ZZ^{>0}$, let $f$ be a symmetric tradeoff function, and let $N$ be a random variable with cdf $F$. Then $F$ is a \emph{discrete CND} for $f$ at sensitivity $\Delta$ if 
\begin{enumerate}
    \item  $T(N,N+t)\geq f$, for all $t\in \{-\Delta, \ldots, \Delta\}$,
    \item $f(F(t+\Delta))=F(t)$ for all $t \in \ZZ$, such that $F(t+\Delta)<1$,
    \item $N$ takes values in $\ZZ$, and is symmetric about zero.
\end{enumerate}
\end{definition}

Property 1 of Definition \ref{def:dCND} implies that for an integer-valued statistic $S(D)$, with sensitivity $\Delta$, $T(S(D)+N,S(D')+N)\geq f$ for any adjacent databases $D$, and $D'$.
Property 2 implies a monotone likelihood ratio property for $N$, and that the tradeoff function for the discrete CND is ``tight" for $f$. Essentially using the rejection set  $(t,\infty)$, where $t\in \ZZ$, the type I error for $T(N-\Delta, N)$ is $1-F(t+\Delta)$ and type II error is $F(t)$. 

Note that in property 2) of Definition \ref{def:dCND}, the tradeoff function $T(N,N+\Delta)$ only matches $f$ at rejection regions of the form $(t,\infty)$, where $t\in \ZZ$. In general, this means that the tradeoff function $T(N,N+\Delta)$ agrees with $f$ at the ``sharp points,'' but may be greater than $f$ at other values of $\alpha$. See Figure \ref{fig:discreteFDP} for an illustrative example. 

\begin{figure}
    \centering
    \includegraphics[width=.5\linewidth]{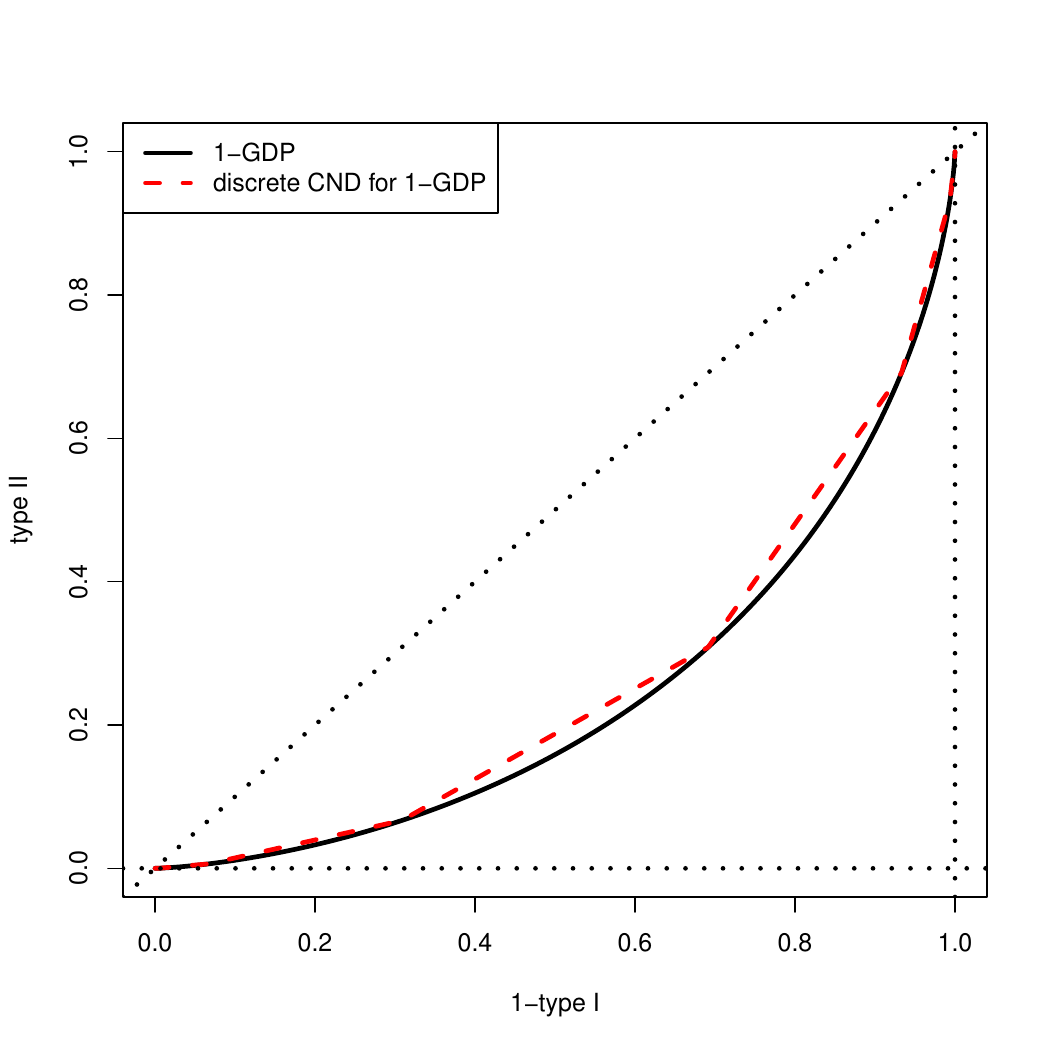}
    \caption{Tradeoff function of $1$-GDP as well as the tradeoff function $T(N,N+1)$, where $N$ is the unique discrete CND for $1$-GDP at sensitivity 1.}
    \label{fig:discreteFDP}
\end{figure}

\begin{remark}
    One key difference between Definition \ref{def:dCND} for discrete CNDs and Definition \ref{def:CND} for continuous CNDs is the dependence on the sensitivity. For continuous CNDs, the sensitivity was not included in the definition, as the noise can simply be scaled up or down according to the sensitivity. However, in the discrete case, scaling affects the support of the distribution. For example, if $N$ takes values on the integers and we try to scale it by 2 to adjust for statistics of sensitivity 2, then $2N$ takes values on the even numbers. In this case, when shifting by a smaller amount, we fail to guarantee $f$-DP:  $T(2N,2N+1)=0$. Instead, the discrete CND must be designed for a specific sensitivity, as indicated in Definition \ref{def:dCND}.
\end{remark}

\subsection{Existence, Construction, and Uniqueness of Discrete CNDs}\label{s:discreteBasics}
%Definition \ref{def:dCND} may be viewed as a list of desirable properties, which would make an integer-valued distribution a natural analogue of the continuous CND, in Definition \ref{def:CND}. However, we do not know yet whether there exist any distributions which satisfy the properties of Definition \ref{def:dCND}, and if they exist how to construct them. 
In Proposition \ref{prop:round}, we demonstrate that a discrete CND exists for any nontrivial symmetric tradeoff function and any sensitivity, and that we can construct a discrete CND by rounding a re-scaled continuous CND. In the case of sensitivity 1, we show in Proposition \ref{prop:unique} that the discrete CND is \emph{unique}.

In the following proposition, the round function is formally defined as $\mathrm{round}(t)=\lfloor t+1/2\rfloor$. %\mathrm{sign}(t)\lfloor |t|+1/2\rfloor$, where $\mathrm{sign}$ returns 1 for non-negative values and -1 for negative values; this definition of rounding maps $t$ to the nearest integer, and rounds away from zero for $t-1/2\in \ZZ$.}
\begin{restatable}[Existence and Construction via Rounding]{proposition}{propround}\label{prop:round}
Let $N_c$ be a (continuous) CND for $f$ with cdf $F$, and let $\Delta\in \ZZ^{>0}$.  Then $N=\mathrm{round}(\Delta N_c)$, which has cdf $F_N(t) = F_{N_c}(\lfloor t+1/2\rfloor/\Delta)$, is a discrete CND for $f$ at sensitivity $\Delta$. 
\end{restatable}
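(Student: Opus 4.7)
The plan is to verify each of the three defining properties of Definition~\ref{def:dCND} for the rounded random variable $N = \mathrm{round}(\Delta N_c)$, leveraging the corresponding properties of the continuous CND $N_c$ together with the post-processing property of $f$-DP. A useful preliminary is the cdf identity at integer points: since $N_c$ is continuous, $P(N \le k) = P(\Delta N_c \le k + 1/2) = F_{N_c}((k+1/2)/\Delta)$ for every $k \in \mathbb{Z}$, up to the measure-zero ambiguity at half-integers in the rounding convention.

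I would handle Property~3 first. Integer-valuedness is immediate from the construction. For symmetry about zero, Property~4 of Definition~\ref{def:CND} gives $N_c \overset{d}{=} -N_c$, and rounding satisfies $\mathrm{round}(-x) = -\mathrm{round}(x)$ off a discrete set of half-integer points, which is hit with probability zero by the continuous variable $\Delta N_c$.

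Next I would establish Property~1. For integer $t$ with $|t| \le \Delta$, one has $N + t = \mathrm{round}(\Delta N_c) + t = \mathrm{round}(\Delta (N_c + t/\Delta))$, so $(N, N+t)$ is obtained from $(N_c,\, N_c + t/\Delta)$ by the deterministic post-processing $x \mapsto \mathrm{round}(\Delta x)$. Property~1 of Definition~\ref{def:CND} (applied with $m = |t|/\Delta \in [0,1]$, then combined with the symmetry of $f$ and translation invariance of $T$ to handle the sign of $t$) gives $T(N_c,\, N_c + t/\Delta) \ge f$, and the post-processing property of $f$-DP transfers this to $T(N, N+t) \ge f$.

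Finally, for Property~2, I would combine the cdf identity with Property~3 of Definition~\ref{def:CND}, which reads $f(\alpha) = F_{N_c}(F_{N_c}^{-1}(\alpha) - 1)$ for $\alpha \in (0,1)$. For any $t \in \mathbb{Z}$ with $F_N(t+\Delta) < 1$, setting $\alpha = F_N(t+\Delta) = F_{N_c}((t + \Delta + 1/2)/\Delta) \in (0,1)$ ensures that $F_{N_c}^{-1}(\alpha) = (t + \Delta + 1/2)/\Delta$ is well-defined, and the continuous CND recurrence then gives $f(F_N(t+\Delta)) = F_{N_c}((t+1/2)/\Delta) = F_N(t)$. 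There is no substantial obstacle here: the argument is essentially a mechanical translation between the continuous and discrete settings, and the only delicate points are handling the rounding convention at the measure-zero set of half-integers and verifying that $F_{N_c}^{-1}$ is well-defined in the regime of interest, which is exactly what the hypothesis $F_N(t+\Delta) < 1$ guarantees.
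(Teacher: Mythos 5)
Your overall plan coincides with the paper's proof: verify the three properties of Definition \ref{def:dCND}, getting property 1 by viewing $(N,N+t)=(\mathrm{round}(\Delta N_c),\mathrm{round}(\Delta(N_c+t/\Delta)))$ as a post-processing of $(N_c,N_c+t/\Delta)$ with $|t|/\Delta\le 1$, property 3 from the symmetry of $N_c$, and property 2 from the integer-point cdf identity $F_N(t)=F_{N_c}((t+1/2)/\Delta)$ together with the CND recurrence. Your treatments of properties 1 and 3 are fine, and in fact slightly more explicit than the paper's.

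The gap is in how you justify property 2. You re-derive the recurrence from property 3 of Definition \ref{def:CND} by writing $f(\alpha)=F_{N_c}(F_{N_c}^{-1}(\alpha)-1)$ and then asserting that $F_{N_c}^{-1}(\alpha)=(t+\Delta+1/2)/\Delta$ for $\alpha=F_N(t+\Delta)$. That identity requires $F_{N_c}$ to be injective at this point: Definition \ref{def:CND} does not by itself rule out flat stretches of the cdf at levels in $(0,1)$, and if $F_{N_c}$ were flat at level $\alpha$ there, the generalized inverse would return the left endpoint of the flat piece, so $F_{N_c}(F_{N_c}^{-1}(\alpha)-1)$ need not visibly equal $F_{N_c}((t+1/2)/\Delta)$. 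In addition, the hypothesis $F_N(t+\Delta)<1$ does not give $\alpha\in(0,1)$: the case $F_N(t+\Delta)=0$ must be handled separately (trivially, since then $f(0)=0=F_N(t)$ by monotonicity of $F_N$). Both issues disappear if you simply invoke Lemma \ref{lem:recurrence}, which is exactly the recurrence you are re-deriving, valid for every CND and every real argument: $F_{N_c}(x)=f(F_{N_c}(x+1))$ whenever $F_{N_c}(x+1)<1$. Applying it at $x=(t+1/2)/\Delta$ and combining with your cdf identity gives $f(F_N(t+\Delta))=F_N(t)$ whenever $F_N(t+\Delta)<1$, which is precisely how the paper argues.
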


In particular, combining Proposition \ref{prop:round} with Proposition \ref{prop:CNDsynthetic}, we have an explicit construction for a discrete CND, and by using Proposition F.6 from \citet{awan2023canonical} (reprinted as Lemma \ref{lem:F6} in the appendix), we also have a sampling algorithm for this discrete CND. 

\begin{example}
    [Staircase Distribution]
    In the case of $(\ep,\de)$-DP, the discrete CND constructed by rounding the constructed CND of Proposition \ref{prop:CNDsynthetic} has a staircase shape, which has been identified in other DP literature as a near-optimal distribution \citep{geng2015optimal,qin2022differential}. The pmf of this distribution is illustrated in Figure \ref{fig:staircase}. 
\end{example}

\begin{example}
    [Discrete CNDs Not by Rounding]
    While Proposition \ref{prop:round} tells us that we can construct a discrete CND by rounding a continuous CND, there are also discrete CNDs that do not arise in this manner. Recall that for $(\ep,0)$-DP, the unique (continuous) CND is the Tulap distribution \citep{awan2022log}. It follows that for each $\ep$ and $\De$, there is a unique discrete CND for $(\ep,0)$-DP generated by rounding the continuous CND. However, in the case of $\ep=1$ and $\Delta=2$, we demonstrate that there is an infinite family of discrete CNDs:

    In the case of $\Delta=2$, constraints 2 and 3 of Definition \ref{def:dCND} leave only one degree of freedom: the choice of $F(0)$; this is because symmetry enforces that $F(-1)=1-F(0)$ and the recurrence of 2 fixes all other values of $F$. One can then verify that for 
    \[0.5938455 \approx \frac{2\exp(\ep)}{3\exp(\ep)+1}\leq F(0)\leq \frac{\exp(\ep)+1}{\exp(\ep)+3}\approx 0.6502446,\]
    one obtains a discrete CND for $(1,0)$-DP. Since there is more than one discrete CND for $(1,0)$-DP at $\Delta=2$, and there is only one continuous CND for $(1,0)$-DP, it follows that not all discrete CNDs are a rounding of continuous CNDs.
\end{example}

Interestingly, for the special case of $\Delta=1$, Proposition \ref{prop:unique} below establishes that there is a \emph{unique} discrete CND for a nontrivial symmetric tradeoff function.

\begin{figure}
    \centering
    \includegraphics[width=.5\linewidth]{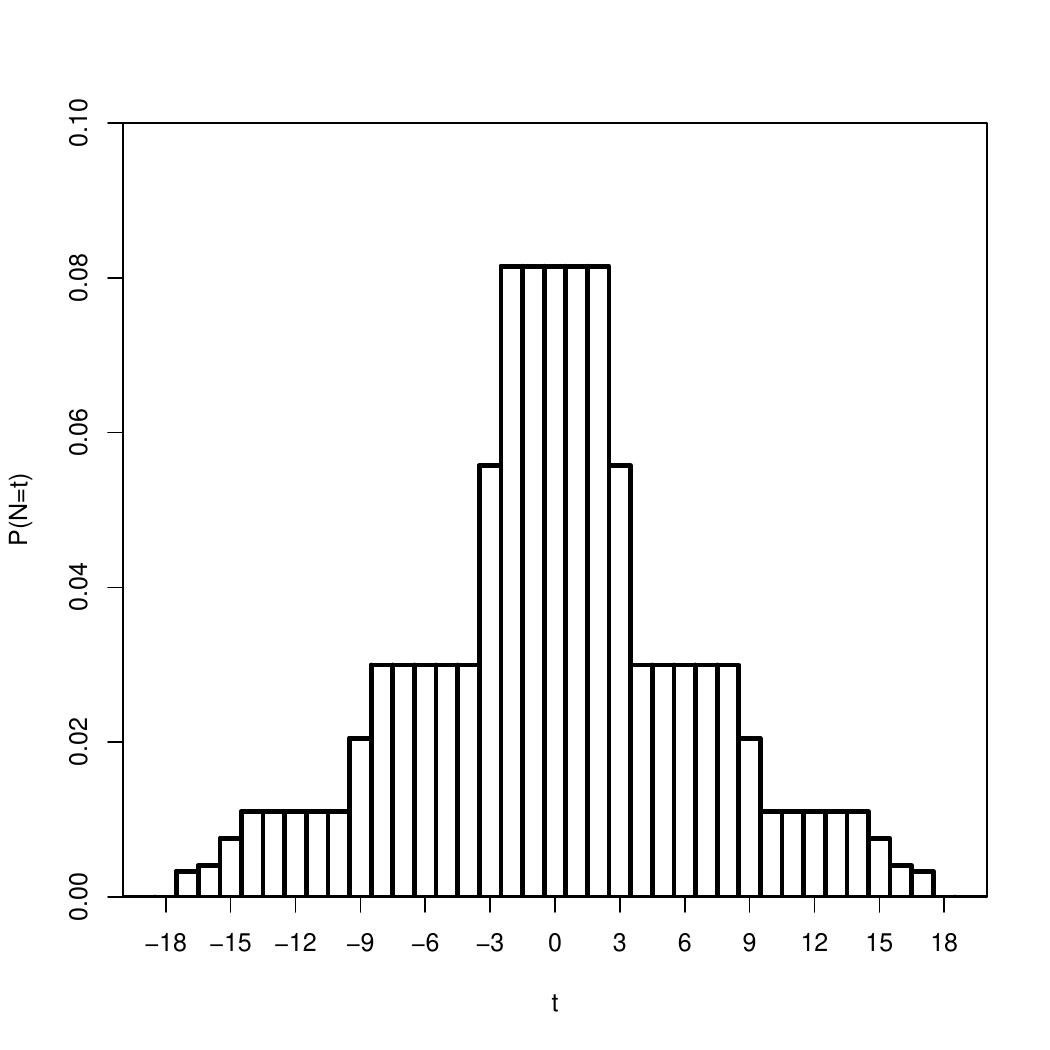}
    \caption{Discrete staircase distribution, which is a discrete CND for $(1,.05)$-DP at $\Delta=6$. }
    \label{fig:staircase}
\end{figure}

\begin{restatable}[Uniqueness]{proposition}{propunique}\label{prop:unique}
If $\Delta=1$, then there is a unique discrete CND for a symmetric nontrivial tradeoff function $f$. Furthermore, it can be realized by rounding any (continuous) CND for $f$, and has pmf
$P(N=x) = f^{\circ |x|}(1-c_f)-f^{\circ |x|}(c_f),\quad x\in \ZZ.$
\end{restatable}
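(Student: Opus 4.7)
The plan is to leverage that existence of a discrete CND at $\Delta=1$ is already guaranteed by Proposition~\ref{prop:round} (rounding any continuous CND for $f$ produces one), so the real work is the uniqueness claim; the pmf formula will then drop out of the uniqueness argument. The key observation is that when $\Delta=1$, properties~2 and 3 of Definition~\ref{def:dCND} alone pin down the CDF $F$ at every integer.

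First I would translate symmetry about zero of the integer-valued $N$ into the CDF identity $F(-x-1)=1-F(x)$ for all $x\in\ZZ$, which at $x=0$ gives $F(-1)=1-F(0)$. The recurrence in property~2, evaluated at $t=-1$, gives $f(F(0))=F(-1)$. Combining these two equations yields $f(F(0))=1-F(0)$. Since $f$ is a symmetric tradeoff function, the unique solution of $f(\alpha)=1-\alpha$ is $\alpha=1-c_f$ (this is exactly the defining relation $f(1-c_f)=c_f$ from Lemma~\ref{lem:Cfzero}). Hence $F(0)=1-c_f$ and $F(-1)=c_f$ are forced.

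From there I would propagate: iterating property~2 downward gives $F(-k-1)=f^{\circ k}(c_f)$ for every $k\geq 0$, and reapplying symmetry gives $F(k)=1-f^{\circ k}(c_f)$ for $k\geq 0$. This determines $F$ on all of $\ZZ$, proving uniqueness. Differencing yields $P(N=\pm k)=f^{\circ(k-1)}(c_f)-f^{\circ k}(c_f)$ for $k\geq 1$ and $P(N=0)=1-2c_f$; using $f(1-c_f)=c_f$ to rewrite $f^{\circ(k-1)}(c_f)=f^{\circ k}(1-c_f)$ collapses both cases to the uniform expression $P(N=x)=f^{\circ|x|}(1-c_f)-f^{\circ|x|}(c_f)$. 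Finally, since Proposition~\ref{prop:round} shows that rounding any continuous CND produces a discrete CND at $\Delta=1$, uniqueness forces all such roundings to coincide with this distribution.

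The only subtle step is verifying that the recurrence is actually applicable throughout the propagation---one needs $F(-k)<1$ at each stage so that property~2 triggers---which follows because $F(-k)\leq F(-1)=c_f<1/2$ for all $k\geq 1$ by nontriviality of $f$ (Lemma~\ref{lem:Cfzero}, part~1). One should also check that the derived values form a genuine CDF, but since $f$ is non-decreasing and satisfies $f(\alpha)\leq\alpha$, the sequence $f^{\circ k}(c_f)$ is nonincreasing and nonnegative, so $F$ is monotone with values in $[0,1]$, and the convergence $f^{\circ k}(c_f)\to 0$ (again by nontriviality) supplies the correct limits at $\pm\infty$.
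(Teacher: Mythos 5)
Your proposal is correct and follows essentially the same route as the paper: existence comes from Proposition \ref{prop:round}, uniqueness from the observation that symmetry pins down $F(0)=1-c_f$ (equivalently $F(-1)=c_f$) and the $\Delta=1$ recurrence in property~2 of Definition \ref{def:dCND} then propagates the cdf to all integers, and the pmf follows by differencing together with $f^{\circ k}(1-c_f)=f^{\circ(k-1)}(c_f)$. The only cosmetic difference is that you derive the integer cdf values intrinsically from the discrete axioms, whereas the paper reads them off the rounded continuous CND via Lemmas \ref{lem:CNDcf} and \ref{lem:recurrence}; your added checks (applicability of the recurrence and that the forced values form a cdf) are fine but not essential, since existence already guarantees consistency.
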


This result follows from the observation that once $F(0)$ is chosen, the recurrence of property 2 of Definition \ref{def:dCND} fully determines the cdf, and the symmetry via property 3 of Definition \ref{def:dCND} removes the freedom in the choice of $F(0)$. 
%The intuition behind Proposition \ref{prop:unique} is that the relation 2) of Definition \ref{def:dCND} along with the requirement that the discrete CND is symmetric removes all degrees of freedom. 
Note further that combining Propositions \ref{prop:round} and \ref{prop:unique} tells us that rounding any CND (including the one constructed in Proposition \ref{prop:CNDsynthetic}) will result in the unique discrete CND at sensitivity 1. 

\begin{example}[Discrete Laplace]\label{ex:dLap}
The unique discrete CND for $f_{\ep,0}$, at sensitivity 1 is the discrete Laplace / geometric mechanism, with pmf
\[P(N=x) = \frac{\exp(\ep)-1}{\exp(\ep)+1} \exp(-\ep |x|),\quad x\in \ZZ.\]
This distribution is well-known in the differential privacy literature, and has been identified as the optimal noise-adding mechanism in pure differential privacy \citep{ghosh2012universally} for count statistics (which have sensitivity 1). 
\end{example}

\begin{example}
    [Rounded Gaussian versus Discrete Gaussian]\label{ex:dGauss1}
    Proposition \ref{prop:round} tells us that rounding a Gaussian distribution results in a discrete CND for Gaussian-DP (since Gaussian noise is a CND). Furthermore, by Proposition \ref{prop:unique}, the rounded Gaussian is the unique discrete CND at sensitivity 1. The rounded Gaussian has been referred to as the discrete Normal distribution, and has been independently studied in the statistics literature \citep{roy2003discrete}; it has pmf $P(N=x)=\Phi((x+1/2)/\sigma)-\Phi((x-1/2)/\sigma)$. On the other hand, the Discrete Gaussian distribution  \citep{canonne2020discrete} (also sometimes called the discrete Normal distribution \citep{kemp1997characterizations}) has gained popularity as a privacy mechanism, and has been employed in the 2020 Decennial Census. The Discrete Gaussian with parameter $\mu=0$ and scale parameter $\sigma$ has pmf 
    \[P(N=x) = \frac{\exp(-x^2/(2\sigma^2))}{\vartheta_3(0,\exp(-1/(2\sigma^2)))},\]
    where $\vartheta_3(0,q) = \sum_{k=-\infty}^\infty q^{k^2}$ is a Jacobi theta function \citep{szablowski2001discrete}.
    
    In this example, we demonstrate that while the Discrete Gaussian distribution with parameter $\sigma^2=2/\rho$ satisfies $\rho$-zCDP for integer-valued statistics of sensitivity 1 (the same zCDP guarantee as the continuous Gaussian mechanism with the same scale parameter, \citealp{canonne2020discrete}), the discrete Gaussian with $\sigma=1/\mu$ does not satisfy $\mu$-GDP. For example, with $\mu=\sigma=1$, and setting $f_{DG} = T(N,N+1)$, where $N$ is a discrete Gaussian with $\sigma=1$, we have by the monotone likelihood ratio property of the discrete Gaussian that \[c_{f_{DG}}=(1/2)(1-P(N=0))=(1/2)(1-[\vartheta_3(0,\exp(-1/2(\sigma^2)))]^{-1})\approx 0.301,\]
    which is smaller than $c_{G_1}=\Phi(-1/2)\approx 0.309$, which implies that $f_{DG}\not\geq G_1$; see  Figure \ref{fig:discreteGaussian}.
    \end{example}

    \begin{figure}
        \centering
        \includegraphics[width=.5\linewidth]{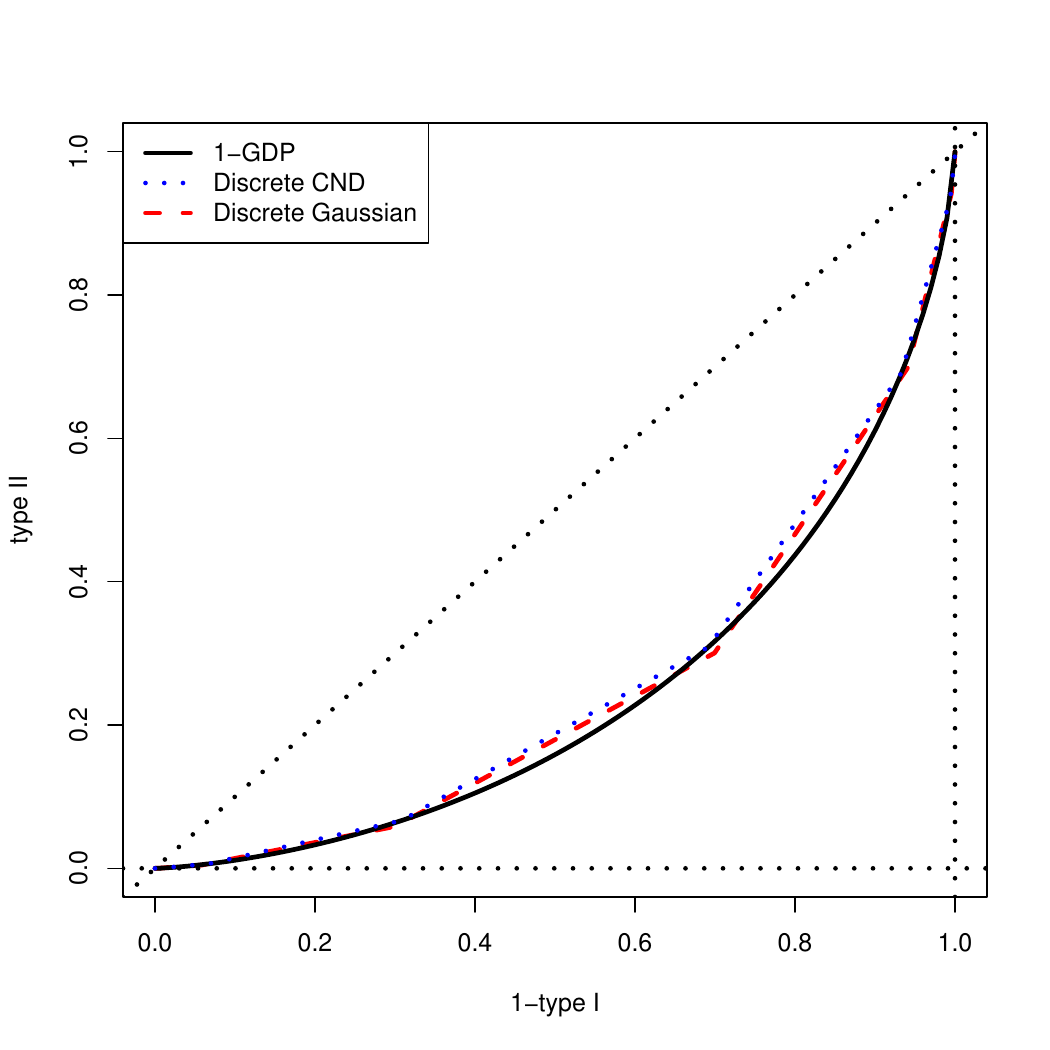}
        \caption{Tradeoff function for $T(N,N+1)$, where $N$ is a discrete Gaussian distribution with $\mu=0$ and $\sigma=1$, against the tradeoff function $G_1$, as well as the discrete CND for $G_1$ at sensitivity 1. Note that $T(N,N+1)$ is below $G_1$ at some values, violating the GDP guarantee.}
        \label{fig:discreteGaussian}
    \end{figure}

\begin{comment}
\begin{remark}
If $f$ is a log-concave/infinitely divisible tradeoff function, with family $\{f_s|s\geq 0\}$ (where $f=f_1$), then if $N_c$ is a log-concave CND for $f_{1/\Delta}$, then $\mathrm{round}(N_c)$ is a discrete CND for $f_1$. This follows from the fact that $\Delta N_c$ is a CND for $f$ and Proposition \ref{prop:round}. 
\end{remark}
\end{comment}

\begin{remark}
While a tradeoff function has a unique discrete CND for sensitivity 1, there may be multiple tradeoff functions with the same discrete CND for sensitivity 1. In particular, we see from Proposition \ref{prop:round} that the discrete CND only depends on $\{f^{\circ n}(1-c_f)\}_{n=0}^\infty$ or equivalently $\{f^{\circ n}(c_f)\}_{n=-1}^\infty$. So, if two tradeoff functions agree on these points, then they will have the same discrete CND for sensitivity 1. %Note that the tightest CND is the piece-wise linear (possibly infinitely many break points) interpolation of $\{f^{\circ n}(1-c_f\}_{n=0}^\infty$. Maybe we will call this a \emph{tight} discrete CND?
\end{remark}

%%%%%%%%%%%%%%%%%%%%%%%%%%%%%%%%%%%%%%%%%%%%%%%%%%%%%%%%%%
%%%  Optimal Discrete
%%%%%%%%%%%%%%%%%%%%%%%%%%%%%%%%%%%%%%%%%%%%%%%%%%%%%%%%%%%
\subsection{Optimality of Discrete CND at Sensitivity 1}\label{s:discreteOpt}
In this section, we establish a special case of Theorem \ref{thm:anti} for integer-valued noise, and show that the unique discrete CND at $\Delta=1$ matches the bound. Using this result, we establish that the discrete CND at $\Delta=1$ is stochastically smaller than any other integer-centered discrete noise.

While Theorem \ref{thm:anti} offers an anti-concentration inequality for any additive noise, the inequality is not enforced at all values of the support. For integer-valued noises for statistics with sensitivity 1, Theorem \ref{thm:anti} can give an inequality at every integer.

\begin{restatable}{corollary}{corinteger}\label{cor:integer}
Let $N$ be an integer-valued random variable such that $T(N,N+1)\geq f$, where $f$ is a nontrivial symmetric tradeoff function. Then for all $t\in \ZZ^{\geq 0}$, 
\[\sup_{a\in \ZZ}P(|N-a|\leq t)\leq 1-2f^{\circ t}(c_f).\]
\end{restatable}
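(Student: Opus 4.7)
The plan is to derive this directly from Theorem \ref{thm:anti} by specializing to the odd case and exploiting the integer-valued support of $N$.

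First, I would fix $t \in \ZZ^{\geq 0}$ and apply Theorem \ref{thm:anti} with the parameter choice $t' = 2t+1 \in \ZZ^{>0}$. Since $t'$ is odd with $k = t$, the theorem yields
\[
\sup_{a \in \RR} P\!\left(-\tfrac{2t+1}{2} < N - a \leq \tfrac{2t+1}{2}\right) \;\leq\; 1 - 2 f^{\circ t}(c_f).
\]

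The key observation is then to match the half-open interval on the left-hand side with the integer set $\{-t, -t+1, \ldots, t\}$. When $a \in \ZZ$ and $N$ is integer-valued, $N - a$ takes only integer values, so the event $\{-t - \tfrac12 < N - a \leq t + \tfrac12\}$ is precisely the event $\{N - a \in \{-t, -t+1, \ldots, t\}\} = \{|N - a| \leq t\}$. Consequently, restricting the supremum from $a \in \RR$ to $a \in \ZZ$ only strengthens the existing inequality, giving
\[
\sup_{a \in \ZZ} P(|N - a| \leq t) \;=\; \sup_{a \in \ZZ} P\!\left(-\tfrac{2t+1}{2} < N - a \leq \tfrac{2t+1}{2}\right) \;\leq\; 1 - 2 f^{\circ t}(c_f),
\]
which is exactly the claimed bound.

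There is no substantive obstacle here; the result is a clean specialization of Theorem \ref{thm:anti}. The only point to be careful about is correctly aligning the half-open window $(-t-\tfrac12,\, t+\tfrac12]$ with the symmetric closed integer window $\{-t,\ldots,t\}$, which relies entirely on the integrality of both $N$ and $a$. Note that the even case of Theorem \ref{thm:anti} is not needed: taking $t' = 2t$ would give a window $(-t, t]$ missing the endpoint $-t$, and would yield the weaker bound $1 - 2f^{\circ t}(1/2) \geq 1 - 2f^{\circ t}(c_f)$ (since $c_f \leq 1/2$ and $f$ is non-decreasing), so the odd choice $t' = 2t+1$ is both natural and sharper.
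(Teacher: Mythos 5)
Your proposal is correct and is essentially the paper's own argument: the paper likewise rewrites, for integer-valued $N$ and $a\in\ZZ$, the event $\{|N-a|\leq t\}$ as $\{-(2t+1)/2< N-a\leq (2t+1)/2\}$ and then applies the odd case of Theorem \ref{thm:anti} (with the supremum over $\ZZ$ bounded by that over $\RR$). Your additional remark explaining why the even case would only give the weaker bound $1-2f^{\circ t}(1/2)$ is a nice clarification but does not change the substance.
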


Similar to Lemma \ref{lem:CNDconcentration}, Lemma \ref{lem:sens1} shows that the unique discrete CND at sensitivity 1 matches the bound of Corollary \ref{cor:integer}.
\begin{restatable}{lemma}{lemsensOne}\label{lem:sens1}
    Let $N$ be the unique discrete CND for a symmetric nontrivial tradeoff function $f$ at sensitivity $\Delta=1$. Then for $t\in \ZZ^{\geq 0}$, 
    $P(|N|\leq t) = 1-2f^{\circ t}(c_f).$
\end{restatable}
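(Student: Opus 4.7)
The plan is to work directly with the probability mass function of $N$, which by Proposition \ref{prop:unique} is given by $P(N=x) = f^{\circ |x|}(1-c_f) - f^{\circ |x|}(c_f)$ for $x\in\ZZ$. Since $N$ is supported on $\ZZ$ and symmetric about $0$, and since $P(N=0) = (1-c_f)-c_f = 1-2c_f$, one immediately has
\[ P(|N|\leq t) \;=\; (1-2c_f) \,+\, 2\sum_{x=1}^t \bigl(f^{\circ x}(1-c_f)-f^{\circ x}(c_f)\bigr). \]
The goal therefore reduces to showing that the sum on the right equals $c_f - f^{\circ t}(c_f)$.

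The main step is the identity $f^{\circ k}(1-c_f) = f^{\circ(k-1)}(c_f)$ for every $k\geq 1$, which I would prove by a one-line induction: the base case $k=1$ is precisely the defining equation of $c_f$, namely $f(1-c_f)=c_f$, and the inductive step follows simply by applying $f$ to both sides. Once this identity is in hand, the summand simplifies to $f^{\circ(x-1)}(c_f)-f^{\circ x}(c_f)$, so the sum telescopes to $f^{\circ 0}(c_f) - f^{\circ t}(c_f) = c_f - f^{\circ t}(c_f)$. Substituting back gives
\[ P(|N|\leq t) \;=\; (1-2c_f) + 2\bigl(c_f - f^{\circ t}(c_f)\bigr) \;=\; 1 - 2 f^{\circ t}(c_f),\]
which is the desired equality. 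There is essentially no obstacle in this argument once Proposition \ref{prop:unique} is invoked: the defining equation of $c_f$ does all of the heavy lifting, and the telescoping disposes of the rest.
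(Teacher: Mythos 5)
Your proof is correct, and it takes a slightly different route from the paper's. The paper works at the level of the cumulative distribution function: by Proposition \ref{prop:unique} the discrete CND satisfies $F_N(t)=F_f(t+1/2)$ for the constructed continuous CND $F_f$ of Proposition \ref{prop:CNDsynthetic}, so $P(|N|\leq t)=1-2F_f(-t-1/2)$ by symmetry, and then the CND recurrence gives $F_f(-t-1/2)=f^{\circ t}(F_f(-1/2))=f^{\circ t}(c_f)$. You instead work directly with the pmf formula $P(N=x)=f^{\circ|x|}(1-c_f)-f^{\circ|x|}(c_f)$ from Proposition \ref{prop:unique}, observe the identity $f^{\circ k}(1-c_f)=f^{\circ(k-1)}(c_f)$ (an immediate induction from the fixed-point equation $f(1-c_f)=c_f$), and telescope the sum; your base case $t=0$ comes out right as well, since the empty sum gives $1-2c_f=1-2f^{\circ 0}(c_f)$. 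The two arguments encode the same underlying recurrence, but yours stays entirely at the discrete level and never needs to return to the continuous CND's cdf or its construction, which makes it marginally more self-contained; the paper's version is a touch shorter because the cdf recurrence delivers $f^{\circ t}(c_f)$ in one step. Either way the heavy lifting is done by Proposition \ref{prop:unique}, which both proofs invoke.
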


It follows from Lemma \ref{lem:sens1} that the unique discrete CND is stochastically smaller than any alternative DP noise, a similar result to Theorem \ref{thm:stochLog}.

%Typically, stochastic dominance also requires that there exists a $t^*$ such that $F_X(t^*)<F_F(t^*)$, but this is equivalent to requiring that $X$ and $Y$ are not identically distributed. So, we will not worry about this. 

\begin{restatable}{theorem}{thmstochDisc}[Stochastic Optimality of Discrete CND for Sensitivity 1]\label{thm:stochDisc}
Let $N$ be the unique discrete CND for a symmetric nontrivial tradeoff function $f$ at sensitivity $\Delta=1$, and let $N'$ be any integer-valued random variable such that $T(N',N'+1)\geq f$. Then, $|N'-a|$ stochastically dominates $|N|$ for all $a\in \ZZ$. This implies that,
\begin{itemize}
    \item $P(|N|\leq t)\geq P(|N'-a|\leq t)$ for all $a\in \ZZ$ and all $t\in \ZZ^{\geq 0}$,
    \item For any non-decreasing function $\phi$, $\EE_N\phi(|N|)\leq \EE_{N'}\phi(|N'-a|)$ for all $a\in \ZZ$,
    \item There exists a non-negative, integer-valued, random variable $W$ such that $|N'|\overset d =|N|+W$. If $N'$ is symmetric about zero, then $N'\overset d = N+\mathrm{sign}(N)\cdot W$. 
\end{itemize}
\end{restatable}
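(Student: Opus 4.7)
The plan is to reduce the theorem to establishing the single statement that for every $a \in \ZZ$, the random variable $|N'-a|$ first-order stochastically dominates $|N|$, meaning $P(|N| \leq t) \geq P(|N'-a| \leq t)$ for all $t \in \RR$. The three bullets will then follow directly from the standard equivalent characterizations of stochastic dominance listed in Section \ref{s:logCND}.

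First I would combine Lemma \ref{lem:sens1} with Corollary \ref{cor:integer}. Lemma \ref{lem:sens1} supplies the exact equality $P(|N| \leq t) = 1 - 2f^{\circ t}(c_f)$ for every $t \in \ZZ^{\geq 0}$, while Corollary \ref{cor:integer} supplies the matching upper bound $P(|N'-a| \leq t) \leq 1 - 2f^{\circ t}(c_f)$ for every $a \in \ZZ$ and every $t \in \ZZ^{\geq 0}$. Subtracting yields the required dominance at all non-negative integers. Extending to $t \in \RR$ is routine: for $t < 0$ both sides vanish, and for non-integer $t \geq 0$, integrality of $N$, $N'$, and $a$ gives $P(|N| \leq t) = P(|N| \leq \lfloor t \rfloor)$ and analogously for $|N'-a|$.

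With the dominance in hand, the first bullet is immediate. The second bullet follows from the characterization $\EE \phi(X) \leq \EE \phi(Y)$ for every non-decreasing $\phi$ whenever $Y$ dominates $X$. For the existence-of-$W$ half of the third bullet, I would invoke the Strassen-style coupling characterization to produce a non-negative, integer-valued $W$ such that $|N'| \overset d = |N| + W$.

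The hard part will be the signed coupling $N' \overset d = N + \mathrm{sign}(N) \cdot W$ under the symmetry assumption on $N'$. The plan is to introduce an auxiliary fair $\pm 1$ sign $\sigma$ independent of the coupled pair $(|N|, |N'|)$ and set $N = \sigma |N|$ and $N' = \sigma |N'|$; symmetry of both marginals makes these the correct distributions. Then $N' = \sigma(|N| + W) = N + \sigma W$, and on $\{N \neq 0\}$ we have $\sigma = \mathrm{sign}(N)$, matching the stated form. The delicate issue to pin down is the atom at zero: on $\{N = 0\}$, the factor $\mathrm{sign}(N)$ must be reinterpreted as the independent sign $\sigma$, but $\sigma W$ is symmetric about zero, so the joint law of $N'$ still comes out right.
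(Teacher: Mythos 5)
Your proposal is correct and follows essentially the same route as the paper's proof: the paper likewise obtains $\sup_{a\in\ZZ}P(|N'-a|\leq t)\leq 1-2f^{\circ t}(c_f)=P(|N|\leq t)$ by combining Corollary \ref{cor:integer} with Lemma \ref{lem:sens1}, and then deduces all three bullets from standard properties of stochastic dominance. Your treatment of the signed coupling via an independent fair sign, including the interpretation of $\mathrm{sign}(N)$ on the atom at zero, is a more careful spelling-out of a step the paper leaves implicit.
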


Theorem \ref{thm:stochDisc} is similar to the results of \citet{ghosh2012universally}. The key differences are 1) our results are for general $f$-DP instead of only $(\ep,0)$-DP, and 2) our results are phrased in terms of stochastic dominance rather than universal utility maximizers. In the case of $(\ep,0)$-DP both our Theorem \ref{thm:stochDisc} and the results of \citet{ghosh2012universally} suggest that the discrete Laplace mechanism, discussed in Example \ref{ex:dLap} is the optimal discrete noise for count statistics.

\begin{example}
Let $S$ be an integer-valued statistic with sensitivity 1. Suppose that we are considering two privacy mechanisms $M_1(D)=S(D)+N$ and $M_2(D)=S(D)+N'$, where $N$ is the discrete CND for $f$ at sensitivity 1 and $N'$ is another noise distribution satisfying $T(N',N'+1)\geq f$.
\begin{itemize}
\item Let $\psi$ be a symmetric loss function, which is increasing away from zero (e.g., $\psi(x)=|x|$ or $\psi(x)=x^2$). Then by Theorem \ref{thm:stochDisc}, $\EE \psi(M_1(D)-S(D))\leq \EE \psi(M_2(D)-S(D))$. This is a similar result to those in \citet{ghosh2012universally} and \citet{geng2015optimal}, indicating that the discrete CND at sensitivity 1 is a universal utility optimizer. 
\item If $\EE(N')=0$, then $\var(N)\leq \var(N')$, by taking $\phi(x)=x^2$ in Theorem \ref{thm:stochDisc}.
\end{itemize}
\end{example}

\begin{example}
    [Rounded Gaussian versus Discrete Gaussian, continued]\label{ex:dGauss2}
In Example \ref{ex:dGauss1}, we saw that the discrete Gaussian with $\sigma=1/\mu$ does not satisfy $\mu$-GDP for sensitivity 1 statistics. Even if there were a different scale parameter $\sigma(\mu)$, for which the discrete Gaussian does satisfy $\mu$-GDP, Theorem \ref{thm:stochDisc} tells us that the rounded Gaussian -- the discrete CND at sensitivity 1 for $\mu$-GDP -- will be stochastically smaller than the discrete Gaussian.
\end{example}

\begin{example}[Non-Integer-Valued Center]
If the noise $N'$ in Theorem \ref{thm:stochDisc} does not have an integer-valued mean, then the results of Theorem \ref{thm:stochDisc} no longer imply that  $\var(N)\leq \var(N')$. Recall that $U(-1,1)$ is a CND for $f_{0,1/2}$ \citep{awan2022log}. Then $N=\mathrm{round}(U)$ is the discrete CND for $f_{0,\delta}$ at sensitivity 1 and $N'=\lfloor U\rfloor$ is another integer-valued noise distribution which satisfies $T(N',N'+1)\geq f$ by postprocessing. We can calculate that $P(N=-1)=P(N=1)=1/4$, $P(N=0)=1/2$ and $P(N'=0)=P(N'=-1)=1/2$. From this, we see that $\var(N)=1/2>1/4=\var(N')$. 
 This counter-example demonstrates that the result of Theorem \ref{thm:stochDisc}
 cannot be extended to permit non-integer values of $a$, and that the variance may not be minimized by the discrete CND at sensitivity 1 if a competing discrete noise has a non-integer mean. %Thus, if one is willing to weaken Definition \ref{def:dCND} by sacrificing the symmetry about zero (which also introduces bias), there could be additional additive noises which may perform better in some settings.
\end{example}

\begin{example}[Non-Optimality for Continuous CNDs]
While Theorem \ref{thm:stochDisc} may seem intuitive, it is worth pointing out that the analogous result does not hold for continuous CNDs. Recall that $\mathrm{Tulap}(0,e^{-\ep},0)$ is the unique CND for $f_{\ep,0}$ \citep{awan2022log}. Recall further that $L\sim \mathrm{Laplace}(1/\ep)$ is an additive mechanism that can be used to satisfy $f_{\ep,0}$-DP. However, there is no stochastic dominance relationship between Tulap and Laplace. For example, at $\ep=5$, the variance of the Tulap is $\approx 0.097$, whereas the variance of the Laplace is $\approx 0.08$. On the other hand, $P(|N|\leq 1/2)$ is equal to $(\exp(\ep)-1)/(\exp(\ep)+1)\approx .987$ for Tulap and $1-\exp(-\ep/2) \approx .918$ for Laplace. Furthermore, \citet{geng2015optimal} derived the minimum variance and minimum mean absolute error additive (continuous) mechanisms for $(\ep,0)$-DP, which are not canonical noise distributions. So,  while a (continuous) CND optimizes the privacy budget, and matches the anti-concentration inequality (Lemma  \ref{lem:CNDconcentration}) at half-integer values, it does not necessarily optimize other objectives. %, such as the variance and are not generally stochastically optimal. 
\end{example}

\section{Discussion}\label{s:discussion}
Additive noise is a fundamental technique to achieve differential privacy, often either being employed by itself, or as part of a more complex privacy mechanism. Due to this, it is essential to understand the limits of what types of noise can be used to satisfy DP, and to optimize the noise. 

In this paper, we have explored the constraints of noise distributions under differential privacy, establishing upper bounds on the amount of mass that can be concentrated near the center of the distribution. We showed that canonical noise distributions (CNDs) match these bounds, which leads to their near-optimality. We also showed that log-concave CNDs are optimal compared to other noise distributions with the same privacy property. To address integer-valued statistics, we proposed a definition of a discrete CND which extended the original canonical noise distribution definition. We showed that discrete CNDs always exist and give an explicit construction. In the case that they are being added to a statistic of sensitivity 1, we showed that the discrete CND is unique and the smallest of any other integer-centered discrete noise. 

In addition to our theoretical contributions, the R code for this paper includes a general method to sample from the CND constructed in Proposition \ref{prop:CNDsynthetic} as well as functions to evaluate its cdf and quantile functions. These methods can enable researchers to implement and study CNDs in future works.\\

%Our results can also be applied to optimize the performance of more complex mechanisms which use noise addition as an intermediate step (e.g., stochastic gradient descent \citep{abadi2016deep}, objective perturbation \citep{chaudhuri2011differentially,kifer2012private}, report noisy max \citep{dwork2014algorithmic}). Future researchers may investigate these privacy mechanisms to determine whether optimizing the additive noise distributions results in improved performance of the complete privacy mechanism.

\noindent\textbf{Limitations and Future Work: } For continuous additive noise, the upper bounds on anti-concentration hold only at half-integer values. It would be worth exploring whether other bounds can be derived at other values. Alternatively, it would be interesting to investigate specific objective criteria, such as mean squared error, or mean absolute error, which have not been studied in the $f$-DP framework.

In terms of discrete noise, our definition of a discrete CND imposed that the distribution be symmetric about zero. The optimality result of Theorem \ref{thm:stochDisc} also only compared against other noise distributions centered at an integer value. Future research could weaken the discrete CND definition to not require symmetry about zero. While introducing a bias, this would allow for a wider range of noise mechanisms to be considered. 

We also point out that the optimality of the discrete CND, Theorem \ref{thm:stochDisc}, only applies in the sensitivity 1 case. Near-optimality results similar to Corollaries \ref{cor:CNDoptimal} and \ref{cor:CNDratio} can likely be derived for the discrete CND when when sensitivity is greater than 1. 

While this paper focused on additive noise mechanisms,  our results can be used to optimize the performance of more complex mechanisms which use noise addition as an intermediate step (e.g., stochastic gradient descent \citep{abadi2016deep}, objective perturbation \citep{chaudhuri2011differentially,kifer2012private}). Future researchers may investigate these privacy mechanisms to determine whether optimizing the additive noise distributions results in improved performance of the complete privacy mechanism.

This paper focused on the central DP model, where a mechanism is applied to the whole dataset $D$ which is held by a single curator. As \citet{dong2022gaussian} discussed, there is a natural version of $f$-local DP, where a mechanism is applied to a single entry at a time; this offers a stronger privacy protection because the mechanism can be applied to each individual's data before sending the results to the curator. With this formulation, all of our results still apply in the local DP setting. We leave it to future researchers to investigate any particular nuances that arise in the local DP setting, such as the design of iterative mechanisms.\\

\noindent\textbf{Multivariate Extension: } The results of this paper are limited to univariate noise distributions. Future researchers may develop a multivariate extension to Lemma \ref{lem:anti}, and explore whether there are optimality properties of the multivariate CNDs proposed by \citet{awan2022log}.

 While our techniques can be applied to multivariate distributions by restricting  to 1-dimensional projections, as demonstrated below, this has limited utility. Let $N\in \RR^d$ be a continuous  $f$-DP additive noise distribution with respect to the norm $\lVert \cdot \rVert$; that is, $T(N,N+v)\geq f$ for all $\lVert v \rVert\leq 1$ \citep{awan2022log}. Then,
\begin{align}
    P(\lVert N\rVert\leq t/2)&\leq \inf_{\lVert v \rVert =1}P(|\mathrm{Proj}_v N|\leq t/2)\\
    &\leq \inf_{\lVert v \rVert= 1} \mathrm{TV}(\mathrm{Proj}_v N,\mathrm{Proj}_v N+t)\label{eq:1d}\\
    &\leq \inf_{\lVert v\rVert= 1} \mathrm{TV}(N,N+tv)\label{eq:dpi}\\
    &\leq \begin{cases}
        1-2f^{\circ k}(c_f)&\text{if $t=2k+1$}\\
        1-2f^{\circ k}(1/2)&\text{if $t=2k$},
    \end{cases}\label{eq:last}
\end{align}
where $\mathrm{Proj}_v$ is the orthogonal projection onto the subspace spanned by the vector $v$, \eqref{eq:1d} applies Lemma \ref{lem:anti}, \eqref{eq:dpi} applies the data processing inequality for total variation, and \eqref{eq:last} uses the same argument as in the proof of Theorem \ref{thm:anti}. While this result does give some indication of the concentration of the distribution of $N$, it is generally not tight as the bound does not involve the dimension. The following example illustrates this gap in the case of Gaussian noise.
\begin{example}
The distribution $N(0,I)$ is a multivariate CND with respect to $\lVert \cdot\rVert_\infty$ for $G_{\sqrt d}$-DP \citep{awan2022log}. We can directly calculate
$P(\lVert N\rVert_\infty\leq t/2)=
%[P(|N_1|\leq t/2)]^d=
[1-2\Phi(-t/2)]^d$.
 On the other hand, 
the bound \eqref{eq:dpi} simplifies to 
$P(\lVert N\rVert_\infty\leq t/2)\leq 1-2\Phi(-t/2),$
and we see that the bound is tight only in the case that $d=1$. 
\end{example}
We believe that the reason our method does not produce a tight bound in multivariate settings is because it is based on a two-point hypothesis test, similar in spirit to the Le Cam method for minimax lower bounds. It is possible that using a packing argument, similar to the Fano or Assouad methods, could produce a tighter bound that incorporates the dimension. %We leave this direction for future work.  

%\pagebreak
\acks{We are thankful to Zhanyu Wang for feedback and discussion on an early draft of this manuscript. %We are also grateful to the anonymous reviewers, whose feedback has greatly improved the presentation of this paper.
This work was supported in part by the National Science Foundation,  Grant No. SES-2150615 to Purdue University. }

% Manual newpage inserted to improve layout of sample file - not
% needed in general before appendices/bibliography.

%\newpage

\appendix

%%%%%%%%%%%%%%%%%%%%%%%%%%%%%%%%%%%%%%%%%%%%%%
%% Single Appendix:                         %%
%%%%%%%%%%%%%%%%%%%%%%%%%%%%%%%%%%%%%%%%%%%%%%
\section{Proofs and Technical Details}%% if no title is needed, leave empty \section*{}.
 In this section, we include the proofs and technical details for the results in the paper.

\lemCfzero*
\begin{proof}
\begin{enumerate}
    \item Since $f(0)=0$, $f(1)\geq 0$, and $f$ is an increasing convex function, it follows that there is a unique solution $f(1-c_f)=c_f$. Since $f(\alpha)\leq \alpha$, it follows that $c_f\in [0,1/2]$. Furthermore, if $c_f=1/2$, then by convexity, it follows that $f(\alpha)=\alpha$ for all $\alpha$, implying that it is not nontrivial.
    \item Recall the equivalence between total variation and $f_{0,\delta}$: if $\delta$ is the smallest value such that $f_{0,\delta}\leq f$, then $\mathrm{TV}(P,Q)=\delta$. Recall that the $f_{0,\delta}$ curves have slope 1 and have value $1-\delta$ at $\alpha=1$. Since $f$ is symmetric, we see that the tightest $f_{0,\delta}$ curve supports $f$ at the point $(1-c_f,c_f)$. We calculate that the line with point $(1-c_f,c_f)$ with slope $1$ has value $2c_f$ at $\alpha=1$. We conclude that $\mathrm{TV}(P,Q)=1-2c_f$. 
    \item As discussed in the proof of part 2, $f_{0,1-2c_f}\leq f$.  %It is clear that $c_f$ is the value of $\alpha$ solving $f_{0,1-2c_f}(1-\alpha)=\alpha$. 
    For $f_{\ep,0}$, we call $c_\ep\defeq c_{f_{\ep,0}}$ as shorthand. We can calculate that $c_\ep = (1+e^\ep)^{-1}$, which implies that $\ep = \log\left(\frac{1-c_\ep}{c_\ep}\right)$. %Finally, since all tradeoff functions are convex, non-decreasing, and $f(0)=0$, it follows that $f_{\log((1-c_f)/c_f),0}$ is the highest tradeoff function mapping $(1-c_f)$ to $c_f$, and $f_{0,1-2c_f}$ is the smallest tradeoff function mapping $(1-c_f)$ to $c_f$.%\qedhere
    Note that $f_{\ep_f,0}$ is the continuous, piece-wise linear tradeoff function with break points $(0,0)$, $(1-c_f,c_f)$, and $(1,1)$. Since $f$ is a tradeoff function, we have $f(0)=0=f_{\ep_f,0}(0)$, $f(1-c_f)=c_f=f_{\ep_f,0}$, and $f(1)\leq 1=f_{\ep_f,0}$. Thus, since $f$ is convex, it follows that $f(\alpha)\leq f_{\ep_f,0}(\alpha)$ for all $\alpha\in [0,1]$.
\end{enumerate}
\end{proof}

\begin{comment}
\begin{restatable}{lemma}{lemTVfromF}\label{lem:TVfromF}
    Let $f=T(P,Q)$ be a symmetric tradeoff function. Then $\mathrm{TV}(P,Q)=1-2c_{f}$, where $c_f$ satisfies $f(1-c_f)=c_f$. 
\end{restatable}
\begin{proof}
    Recall the equivalence between total variation and $f_{0,\delta}$: if $\delta$ is the smallest value such that $f_{0,\delta}\leq f$, then $\mathrm{TV}(P,Q)=\delta$. Recall that the $f_{0,\delta}$ curves have slope 1 and has value $1-\delta$ at $\alpha=1$. Since $f$ is symmetric, we see that the tightest $f_{0,\delta}$ curve supports $f$ at the point $(1-c_f,c_f)$. We calculate that the line with point $(1-c_f,c_f)$ with slope $1$ has value $2c_f$ at $\alpha=1$. We conclude that $\mathrm{TV}(P,Q)=1-2c_f$. 
\end{proof}
\end{comment}

\begin{restatable}{lemma}{lemCNDcf}\label{lem:CNDcf}
Let $F$ be a CND for a symmetric nontrivial tradeoff function $f$. Then $F(-1/2)=c_f$. 
\end{restatable}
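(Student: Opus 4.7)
The plan is to derive a fixed-point equation for $y \defeq F(-1/2)$ that coincides with the defining equation $f(1-c_f)=c_f$ of $c_f$, and then appeal to uniqueness.

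First, I would invoke the recurrence of Lemma \ref{lem:recurrence} at the point $x = -1/2$. Since $F$ is a CND for a symmetric nontrivial tradeoff function, $F$ is continuous and, by property 4 of Definition \ref{def:CND}, $F(1/2) = 1 - F(-1/2) < 1$ (the strict inequality uses that $f$ nontrivial rules out $F(-1/2)=0$; if $F(-1/2)=0$ then symmetry would force $F$ to be concentrated at $0$, contradicting the continuity hypothesis). Hence the recurrence applies and gives
\[
F(-1/2) \;=\; f\bigl(F(1/2)\bigr).
\]

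Next, I would apply the symmetry property $F(1/2) = 1 - F(-1/2)$ from property 4 of Definition \ref{def:CND} to rewrite the above as
\[
F(-1/2) \;=\; f\bigl(1 - F(-1/2)\bigr).
\]
Setting $y = F(-1/2)$, this is precisely $f(1-y) = y$, which is the defining equation of $c_f$.

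The remaining step is to invoke uniqueness of the solution $y\in[0,1]$ to $f(1-y)=y$, which is exactly the content of property~1 of Lemma \ref{lem:Cfzero} together with the fact that $f$ is non-decreasing while $y \mapsto 1-y$ is strictly decreasing (so the two curves cross at a single point). This forces $y = c_f$, i.e., $F(-1/2) = c_f$, completing the proof. I do not expect any significant obstacle: the only mildly delicate point is verifying the hypothesis $F(1/2)<1$ needed to apply the recurrence, but this is immediate from continuity of $F$ combined with nontriviality of $f$.
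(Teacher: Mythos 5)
Your main line of argument (the recurrence of Lemma \ref{lem:recurrence} at $x=-1/2$, symmetry to rewrite $F(1/2)=1-F(-1/2)$, then uniqueness of the fixed point of $y=f(1-y)$) is exactly the paper's proof of its ``Case 2.'' The gap is in your justification that the recurrence is applicable, i.e.\ that $F(1/2)<1$. You claim that $F(-1/2)=0$ would force $F$ to be concentrated at $0$ and so contradict continuity, but this is false: $F(-1/2)=0$ together with $F(1/2)=1$ only says that $N$ is supported in $(-1/2,1/2]$, which is perfectly compatible with a continuous distribution. Moreover this case genuinely occurs within the lemma's hypotheses: the uniform distribution on $[-1/2,1/2]$ is a CND for the tradeoff function $f\equiv 0$ (perfect distinguishability), and $f\equiv 0$ is ``nontrivial'' under the paper's definition, since it satisfies $f(\alpha)<\alpha$ for some $\alpha\in(0,1)$. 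So the hypothesis $F(1/2)<1$ needed to invoke Lemma \ref{lem:recurrence} can actually fail, and your proof does not cover that situation.

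The repair is a short case analysis, which is what the paper does. If $F(-1/2)=0$, then by symmetry $F(1/2)=1$; testing $H_0\colon N$ versus $H_1\colon N+1$ with rejection region $[1/2,\infty)$ gives type I error $1-F(1/2)=0$ and type II error $F(-1/2)=0$, so $T(N,N+1)$ attains the point $(1,0)$, and since tradeoff functions are non-negative and non-decreasing this forces $f=T(N,N+1)\equiv 0$ on $(0,1)$, whence $c_f=0=F(-1/2)$. Having disposed of that case, $F(-1/2)>0$ may be assumed, the recurrence applies, and the rest of your argument (which coincides with the paper's) goes through.
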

\begin{proof}
Let $N\sim F$.  
    \begin{itemize}
        \item (Case 1) If $F(-1/2)=0$, then by symmetry we have that $F(1/2)=1$. Consider the rejection region $[1/2,\infty)$ to test $T(N,N+1)$, which has type I error $P(N>1/2)=1-F(1/2)=0$ and type II error $P(N+1<1/2)=P(N<-1/2)=F(-1/2)=0$. We see that the tradeoff function includes the point $(1-0,0)$, implying that $T(N,N+1)=0$. This implies that $c_f=0=F(-1/2)$.
        \item (Case 2) Suppose that $F(-1/2)>0$, which implies that $F(1/2)<1$. Then, by Lemma \ref{lem:recurrence}, we have that $F(-1/2)=f(F(1/2))=f(1-F(-1/2))$ by symmetry. We see that $F(-1/2)=c_f$, since this is the unique solution to $c_f=f(1-c_f)$. %\qedhere
    \end{itemize}
\end{proof}

\begin{restatable}{lemma}{lemgroup}[Lemma A.5: \citealp{awan2022log}]\label{lem:group}
    Let $F$ be a CND for a nontrivial symmetric tradeoff function $f$. Then $F(k\cdot)$ is a CND for $f^{\circ k}$ for any $k\in \ZZ^{>0}$.
\end{restatable}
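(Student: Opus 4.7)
The plan is to verify the four conditions of Definition \ref{def:CND} for the cdf $G(x):=F(kx)$ with respect to the tradeoff function $f^{\circ k}$. Observe that if $N\sim F$ then $M:=N/k$ has cdf $G$, so $T(M,M+m)=T(N,N+km)$ for every $m\in\mathbb{R}$. The symmetry condition 4 is immediate: $G(-x)=F(-kx)=1-F(kx)=1-G(x)$.

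The central computational ingredient is an iterated form of the CND recurrence of Lemma \ref{lem:recurrence}. The identity $F(y-1)=f(F(y))$ (valid whenever $F(y)<1$), applied $k$ times, yields $F(y-k)=f^{\circ k}(F(y))$; substituting $y=F^{-1}(\alpha)$ gives
\[F(F^{-1}(\alpha)-k)=f^{\circ k}(\alpha)\qquad\text{for all }\alpha\in(0,1).\]

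For condition 1, the CND property of $F$ gives $T(N,N+s)\geq f$ for all $s\in[0,1]$, so group privacy with group size $k$ yields $T(N,N+t)\geq f^{\circ k}$ for every $t\in[0,k]$. For $m\in[0,1]$ we have $km\in[0,k]$, hence $T(M,M+m)=T(N,N+km)\geq f^{\circ k}$. Specializing to $m=1$ produces the lower bound $T(N,N+k)\geq f^{\circ k}$. The matching upper bound comes from the threshold test $\phi(x)=\mathbf{1}(x>F^{-1}(\alpha))$ applied to the hypotheses $N$ versus $N+k$: its type I error is $1-\alpha$ and its type II error equals $F(F^{-1}(\alpha)-k)=f^{\circ k}(\alpha)$ by the displayed identity. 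Hence $T(N,N+k)(\alpha)\leq f^{\circ k}(\alpha)$, and combined with the lower bound this gives $T(N,N+k)=f^{\circ k}$, which is condition 2 for $G$. Condition 3 follows by direct substitution: since $G^{-1}(\alpha)=F^{-1}(\alpha)/k$, we get $G(G^{-1}(\alpha)-1)=F(F^{-1}(\alpha)-k)=f^{\circ k}(\alpha)=T(M,M+1)(\alpha)$.

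The main technical step is the tight equality $T(N,N+k)=f^{\circ k}$: the lower bound requires iterated group privacy, while the matching upper bound exploits the threshold-test realization guaranteed by property 3 of the CND definition combined with the iterated recurrence. Conditions 1 and 4 are then straightforward, and condition 3 reduces to the chain-rule computation for $G^{-1}$ in terms of $F^{-1}$.
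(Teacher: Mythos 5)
Your proof is correct. There is no in-paper argument to compare against: the paper imports this statement verbatim as Lemma A.5 of \citet{awan2022log}, so your self-contained verification is the right thing to supply, and it follows the canonical route one would use there: rescaling so that $T(M,M+m)=T(N,N+km)$, the iterated recurrence $F(F^{-1}(\alpha)-k)=f^{\circ k}(\alpha)$ obtained from Lemma \ref{lem:recurrence} (valid since $F(F^{-1}(\alpha))=\alpha<1$ and each successive value can only decrease), chained group privacy in the spirit of Lemma \ref{lem:A5}/Lemma \ref{lem:group2} with the shift $km\in[0,k]$ split into $k$ pieces of size $m\in[0,1]$ for the lower bound, the explicit threshold test for the matching upper bound $T(N,N+k)(\alpha)\leq f^{\circ k}(\alpha)$, and the immediate symmetry of $G(x)=F(kx)$. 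The only loose end, and it is cosmetic, is that Definition \ref{def:CND} is phrased for \emph{symmetric} tradeoff functions, so strictly speaking you should remark that $f^{\circ k}$ is itself a symmetric tradeoff function; this follows, for instance, from your established identity $T(N,N+k)=f^{\circ k}$ together with the symmetry of $N$, since the measurable bijection $x\mapsto k-x$ exchanges the distributions of $N$ and $N+k$ and hence gives $T(N+k,N)=T(N,N+k)$.
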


\begin{restatable}{lemma}{lemCf}\label{lem:Cf}
Let $f$ be a nontrivial tradeoff function, and let $t\in \ZZ^{>0}$. Then 
\[c_{f^{\circ t}} = \begin{cases}
f^{\circ k}(c_f)&\text{ if } t=2k+1,\\
f^{\circ k}(1/2)&\text{ if } t=2k,
    \end{cases}\]    
        where $k\in \ZZ^{\geq 0}$ and the notation $f^{\circ k}$ represents $f\circ \cdots \circ f$, where $f$ appears $k$ times.
\end{restatable}
\begin{proof}
    Let $F_f$ be the constructed CND for $f$, by Proposition \ref{prop:CNDsynthetic}. Then by Lemma \ref{lem:group}, we know that $F_f(t\cdot)$ is a CND for $f^{\circ t}$. By Lemma \ref{lem:CNDcf}, we have that $F_f(-t/2)=c_{f^{\circ t}}$. Using the recursive definition in Proposition \ref{prop:CNDsynthetic}, we can also write 
    \begin{align*}
        c_{f^{\circ t}}&=F_f(-t/2)\\
        &=\begin{cases}
            F_f(-k-1/2)&\text{if } t=2k+1,\\
            F_f(-k)&\text{if } t=2k,
        \end{cases}\\
        &=\begin{cases}
            f^{\circ k}(F_f(-1/2))&\text{if } t=2k+1,\\
            f^{\circ k}(F_f(0))&\text{if } t=2k,
        \end{cases}\\
        &=\begin{cases}
            f^{\circ k}(c_f)&\text{if } t=2k+1,\\
            f^{\circ k}(1/2)&\text{if }t=2k,
        \end{cases}
    \end{align*}
    where we substituted $F_f(-1/2)=c_f$ and $F_f(0)=1/2$ by Proposition \ref{prop:CNDsynthetic}.%\qedhere
\end{proof}

\lemanti*
\begin{proof}
Let $a\in \RR$ be given. Let $N'\overset d =N-a$. For the hypothesis $H_0: N'$ versus $H_1: N'+t$, consider the following rejection region $(t/2,\infty)$. The type I and type II errors are $P(N'>t/2)$ and $P(N'+t\leq t/2)=P(N'\leq -t/2)$ respectively. Then 
    \begin{align}
        P(-t/2<N-a\leq t/2)&=P(-t/2<N'\leq t/2)\\
        %&=P(N'\leq t/2)-P(N'\leq -t/2)\\
        &=1-P(N'>t/2)-P(N'\leq -t/2)\\
        %&=P(N\leq t/2)+P(N>-t/2)-1\\
        %&=(1-\text{type I})+(1-\text{type II})-1\\
        &=1-\text{type I}-\text{type II}\\
        &\leq \mathrm{TV}(N',N'+t)\label{eq:TV1}\\
        &=\mathrm{TV}(N,N+t)\label{eq:TV2},
        \end{align}
         where \eqref{eq:TV1} uses the inequality: $\text{type I} + \text{type II}\geq 1-\mathrm{TV}(N',N'+t)$, and \eqref{eq:TV2} uses the fact that total variation is translation-invariant. 
\end{proof}

\begin{lemma}[Lemma A.5: \citealp{dong2022gaussian}]\label{lem:A5}
    Let $f$ and $g$ be tradeoff functions. If $T(P,Q)\geq f$ and $T(Q,R)\geq g$, then $T(P,R)\geq g\circ f$. 
\end{lemma}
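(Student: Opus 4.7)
The plan is to unfold the definition of $T(P,R)$ pointwise and apply the two hypotheses in sequence by reinterpreting a single test $\phi$ first as a test between $P$ and $Q$, then as a test between $Q$ and $R$. Fix $\alpha \in [0,1]$ and take any measurable test $\phi$ that is admissible in the infimum defining $T(P,R)(\alpha)$, i.e., satisfies $\EE_P(\phi) \leq 1-\alpha$. The goal is to lower-bound $1-\EE_R(\phi)$ by $g(f(\alpha))$, and then pass to the infimum over $\phi$ to obtain $T(P,R)(\alpha) \geq (g \circ f)(\alpha)$.

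The first step is to view $\phi$ as a test for $H_0:P$ vs $H_1:Q$. Its specificity is $1-\EE_P(\phi) \geq \alpha$, so $\phi$ is feasible in the infimum defining $T(P,Q)(\alpha)$. Hence $1-\EE_Q(\phi) \geq T(P,Q)(\alpha) \geq f(\alpha)$ by the first hypothesis, which rearranges to $\EE_Q(\phi) \leq 1-f(\alpha)$. The second step is to reuse the very same test, now as a test for $H_0:Q$ vs $H_1:R$. Its specificity against $Q$ is $1-\EE_Q(\phi) \geq f(\alpha)$, so $\phi$ is feasible in the infimum defining $T(Q,R)(f(\alpha))$. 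The second hypothesis then yields $1-\EE_R(\phi) \geq T(Q,R)(f(\alpha)) \geq g(f(\alpha))$. Taking the infimum over all admissible $\phi$ gives $T(P,R)(\alpha) \geq (g \circ f)(\alpha)$, as desired.

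There is no essential obstacle beyond being careful with the flipped convention used in this paper, where $\alpha$ denotes specificity (one minus type I error) rather than type I error itself. The argument works because the feasibility constraint in the definition of the tradeoff function is monotone: tightening the specificity requirement only shrinks the feasible set of tests, so $T(P,Q)$ is nondecreasing and the bound $\EE_Q(\phi) \leq 1-f(\alpha)$ translates directly into a specificity bound for the $Q$-vs-$R$ problem. The only minor edge cases are when $f(\alpha)=0$ or $g(f(\alpha))=0$, where the inequality is trivial, and the endpoints $\alpha \in \{0,1\}$, where continuity of tradeoff functions and the fact that $T(P,Q)(0)=0$ and $T(P,Q)(1)\leq 1$ make the claim immediate. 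No appeal to randomized-versus-deterministic tests, measurability subtleties, or structural properties of $P,Q,R$ beyond those already folded into the definition of $T$ is required.
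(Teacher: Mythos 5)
Your proof is correct. The paper itself gives no proof of this lemma---it simply cites Lemma A.5 of Dong et al.\ (2022)---and your chaining argument (reusing the same test $\phi$, first as a test of $P$ versus $Q$ to get $\EE_Q(\phi)\leq 1-f(\alpha)$, then as a test of $Q$ versus $R$), correctly adapted to the paper's flipped specificity convention so that the composition comes out as $g\circ f$ rather than involving a $1-\cdot$ flip, is exactly the standard proof of the cited result.
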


\begin{lemma}\label{lem:group2}
    Let $N$ be a real-valued random variable, and call $f=T(N,N+1)$ (not necessarily symmetric). Then for $t\in \ZZ^{>0}$, $T(N,N+t)\geq f^{\circ t}$.% where the notation $f^{\circ k}$ represents $f\circ \cdots \circ f$, where $f$ appears $k$ times.
\end{lemma}
\begin{proof}
    We consider the sequence of tradeoff functions $T(N,N+1), T(N+1,N+2), \ldots T(N+t-1,N+t)$, which all have the tradeoff function $T(N,N+1)\geq f$. By \citet[Lemma A.5]{dong2022gaussian}, $T(N,N+t)\geq f^{\circ t}$.
\end{proof}

\thmanti*
\begin{proof}
    We begin with Lemma \ref{lem:anti}:
    \begin{align}
        \sup_{a\in \RR}P(-t/2<N-a\leq t/2) &\leq \mathrm{TV}(N,N+t)\\
        &\leq 1-2c_{f^\circ t}\label{eq:A5}\\
        &=\begin{cases}
            1-2f^{\circ k}(c_f)&\text{ if }t=2k+1,\\
            1-2f^{\circ k}(1/2)&\text{ if } t=2k,
        \end{cases}\label{eq:antiFinal}
    \end{align}
    where  \eqref{eq:A5} uses Lemma \ref{lem:group2} to lower bound $T(N,N+t)\geq f^{\circ t}$ and property 2 of Lemma \ref{lem:Cfzero} to get $\mathrm{TV}(N,N+t)\leq 1-2c_{f^\circ t}$. Finally, \eqref{eq:antiFinal} applies Lemma \ref{lem:Cf}.
\end{proof}

\lemCNDconcentration*
\begin{proof}
   We calculate
    \begin{align*}
        P(|N|\leq t/2)&=P(-t/2\leq N\leq t/2)\\
        %&=F_N(t/2)-F_N(-t/2)\\
        %&=1-F_N(-t/2)-F_N(-t/2)\\
        &=1-2F_N(-t/2)\\
        &=1-2\begin{cases}
            f^{\circ k}(F_N(-1/2))&\text{ if }t=2k+1,\\
            f^{\circ k}(F_N(0))&\text{ if }t=2k,
        \end{cases}\\
        &=\begin{cases}
             1-2f^{\circ k}(c_f)&\text{ if }t=2k+1,\\
    1-2f^{\circ k}(1/2)&\text{ if }t=2k,\end{cases}
    \end{align*}
    where in the last line we used Lemma \ref{lem:CNDcf} to replace $F_N(-1/2)=c_f$, and symmetry to justify that $F_N(0)=1/2$.
\end{proof}

\corCNDoptimal*
\begin{proof}
    The result follows by combining Lemma \ref{lem:CNDconcentration} with Theorem \ref{thm:anti}.
\end{proof}

\corCNDratio*
\begin{proof}
    We reparametrize the problem in terms of $s=2t$ and consider the supremum of the ratio:
    \begin{align*}
        \sup_{s\in \RR^{\geq 0}} \left(\frac{\sup_{a\in \RR} P(-s/2<N'\leq s/2)}{P(|N|\leq (s+1)/2)}\right)
        &\leq \sup_{s\in \RR^{\geq 0}} \left(\frac{\sup_{a\in \RR} P(-\lceil s\rceil/2<N'\leq \lceil s \rceil/2)}{P(|N|\leq \lceil s\rceil/2)}\right)\\
        &\leq \sup_{n\in \ZZ^{\geq 0}}\left(\frac{\sup_{a\in \RR} P(-n/2<N'\leq n/2)}{P(|N|\leq n/2)}\right)\\
        &\leq 1,
    \end{align*}
    where we used the facts that $s\leq \lceil s\rceil$ and $s+1\geq \lceil s\rceil$, we reparametrized the supremum in terms of $n=\lceil s\rceil$, and the final inequality follows from Corollary \ref{cor:CNDoptimal}.
\end{proof}

Recall that a random variable $X$ is called $(\sigma^2,b)$-sub-exponential with mean $\mu$ if $\EE\exp(\lambda(X-\mu))\leq \exp(\sigma^2\lambda^2/2)$ for all $|\lambda|<1/b$. Lemma \ref{lem:Bernstein} is a sufficient condition to to establish that a random variable is sub-exponential. 

\begin{lemma}
    [Bernstein Condition]\label{lem:Bernstein}
    If $\EE |X-\mu|^k\leq \frac 12 \sigma^2 b^{k-2} k!$, for all $k\geq 2$, then 
    \begin{enumerate}
    \item $\EE\exp(\lambda (X-\mu))\leq \exp\left(\frac{\lambda^2 \sigma^2}{2(1-b|\lambda|)}\right)$ for all $|\lambda|<1/b$, and
        \item $X$ is $(2\sigma^2,2b)$-sub-exponential.
    \end{enumerate}
\end{lemma}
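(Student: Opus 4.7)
The plan is to prove Part 1 by expanding the moment generating function as a Taylor series and applying the Bernstein hypothesis term-by-term, and then derive Part 2 as a direct corollary by restricting $\lambda$ to a smaller range. I would implicitly assume $\mu = \EE X$, which is forced by Part 2's appeal to the sub-exponential definition stated in Section~\ref{s:subExp}, where $\mu$ is the mean. This makes the first-order Taylor coefficient vanish, which is essential for getting the clean quadratic-in-$\lambda$ bound.

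First I would write $\EE \exp(\lambda(X-\mu)) = 1 + \sum_{k=2}^\infty \frac{\lambda^k \EE(X-\mu)^k}{k!}$, dropping the $k=1$ term. For each $k\geq 2$, bound $\lambda^k \EE(X-\mu)^k \leq |\lambda|^k \EE|X-\mu|^k$ and apply the hypothesis $\EE|X-\mu|^k \leq \tfrac{1}{2}\sigma^2\alpha^{k-2}k!$ to cancel the $k!$. After the substitution $j=k-2$ this collapses to
\[
\EE \exp(\lambda(X-\mu)) \leq 1 + \frac{\sigma^2 \lambda^2}{2}\sum_{j=0}^\infty (\alpha|\lambda|)^j.
\]
For $|\lambda| < 1/\alpha$ the geometric series sums to $(1-\alpha|\lambda|)^{-1}$, which gives $\EE\exp(\lambda(X-\mu)) \leq 1 + \sigma^2\lambda^2/(2(1-\alpha|\lambda|))$. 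Applying the elementary inequality $1+x \leq e^x$ then yields the desired bound in Part 1.

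For Part 2, I would restrict to $|\lambda| < 1/(2\alpha)$, so that $\alpha|\lambda|<1/2$ and hence $(1-\alpha|\lambda|)^{-1}<2$. Substituting back into the Part 1 bound gives $\EE\exp(\lambda(X-\mu))\leq \exp(\sigma^2\lambda^2)$ on this range, matching the sub-exponential definition with the appropriate parameters. The main obstacle is essentially just bookkeeping and ensuring the interchange between sum and expectation is valid inside the radius of convergence (which follows from absolute convergence via the Bernstein bound plus Fubini/dominated convergence). The only subtle step is the sign handling in the Taylor series for odd $k$, which is cleanly dealt with by the pointwise bound $\lambda^k\EE(X-\mu)^k \leq |\lambda|^k \EE|X-\mu|^k$; once this is observed, the proof is mechanical.
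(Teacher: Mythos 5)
Your argument is correct, and it is more self-contained than the paper's. For Part 1 the paper simply cites \citet[Corollary 5.2]{zhang2020concentration}, whereas you reproduce the standard argument behind that citation: expand $\EE\exp(\lambda(X-\mu))$ in a Taylor series (with $\mu=\EE X$, so the linear term vanishes --- an assumption the paper also leaves implicit), bound each term using the Bernstein moment condition, sum the resulting geometric series for $|\lambda|<1/\alpha$, and finish with $1+x\leq e^x$; the Fubini/monotone-convergence point you flag is the only technical step and goes through exactly as you say. For Part 2 your argument is identical to the paper's: restrict to $|\lambda|<1/(2\alpha)$ so that $(1-\alpha|\lambda|)^{-1}<2$ and conclude $\EE\exp(\lambda(X-\mu))\leq\exp(\sigma^2\lambda^2)$. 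One caution about your closing phrase ``matching the sub-exponential definition with the appropriate parameters'': under the paper's definition in Section \ref{s:subExp}, the bound $\exp(\sigma^2\lambda^2)=\exp(\lambda^2(2\sigma^2)/2)$ valid for $|\lambda|<1/(2\alpha)$ shows that $X$ is $(2\sigma^2,2\alpha)$-sub-exponential, not $(\sigma^2/2,\alpha/2)$ as the lemma states (the latter would require the larger range $|\lambda|<2/\alpha$ and a smaller exponent). This appears to be a parameter bookkeeping slip in the statement itself --- the proof of Theorem \ref{thm:sub-exponential} invokes the lemma in the $(2\sigma^2,2b)$ form --- so your derivation is the right one; just state the resulting parameters explicitly rather than gesturing at them.
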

\begin{proof}
    Part 1 is proved in \citet[Corollary 5.2]{zhang2021concentration}. For part 2, let $|\lambda|<1/(2b)$. Then, $1/(1-b|\lambda|)<2$. With part 1, we have 
    $\EE\exp(\lambda(X-\mu))<\exp\left(\sigma^2\lambda^2\right)$. 
\end{proof}

\thmsubexponential*
\begin{proof}
    \begin{enumerate}
    \item Recall, by part 3 of Lemma \ref{lem:Cfzero}, that $f\leq f_{\ep_f,0}$, where $\ep_f=\log\left(\frac{1-c_f}{c_f}\right)$. Then, $f(1/2)\leq f_{\ep_f,0}(1/2)=(1/2)\exp(-\ep_f)$, since $c_f\leq 1/2$. Recall that by Lemma \ref{lem:recurrence} and symmetry, all CNDs satisfy $F_N(-z)=f^{\circ z}(F(0))=f^{\circ z}(1/2)$ for $z\in \ZZ^{>0}$.  So, 
    \[ F_N(-z)=f^{\circ z}(1/2)
        \leq f_{\ep_f,0}^{\circ z}(1/2)
        =(1/2)(\exp(-\ep_f))^z
        =(1/2)\exp(-\ep_f z).\]  
    
  %  Call $a=f(1/2)$. Then for $0\leq x\leq 1/2$, $f(x)\leq g(x)\defeq 2ax$. Then,
%\[ F_N(-z)=f^{\circ z}(1/2)
%        \leq g^{\circ z}(1/2)
%        =(2a)^z(1/2)
%        =(1/2)\exp(z\log(2a)).\]  
%    Next, by part 3 of Lemma \ref{lem:Cfzero} $f\leq f_{\ep_f,0}$, where $\ep_f=\log\left(\frac{1-c_f}{c_f}\right)$. It follows that $a=f(1/2)\leq \exp(-\ep_f)(1/2)=\frac{c_f}{2(1-c_f)}$. Using this as an upper bound for $a$ gives
  %  \[F_N(-z)\leq (1/2)\exp(-\ep_f z).\]
    Now let $t\geq 0$ (not necessarily an integer). Then,
    \begin{align}
        P(|N|>t)&\leq P(|N|>\lfloor t\rfloor)\\
        %&=1-P(|N|\leq \lfloor t\rfloor)\\
        %&=1-(P(N\leq \lfloor t\rfloor)-P(N\leq -\lfloor t\rfloor))\\
        %&=1-(1-P(N\leq -\lfloor t\rfloor)-P(N\leq -\lfloor t\rfloor))\\
        %&=1-(1-2F_N(-\lfloor t\rfloor))\\
        &=2F_N(-\lfloor t\rfloor)\label{eq:sym}\\
        &\leq 2(1/2)\exp(-\ep_f \lfloor t\rfloor)\label{eq:earlier}\\
        %&=\exp(-\ep_f \lfloor t\rfloor)\\
        &\leq \exp(-\ep_f (t-1)),
        %&=\ep \exp(-t\ep).
    \end{align}
    where \eqref{eq:sym} used symmetry of CNDs and \eqref{eq:earlier} used our earlier result.
    \item Next, we calculate for $n\in \ZZ^{>0}$,
    \begin{align}
        \EE|N|^n&=\int_0^\infty n x^{n-1} P(|N|>x) \ dx\label{eq:Ross}\\
        &\leq \int_0^\infty nx^{n-1}  \exp(-\ep_f (x-1))\ dx\label{eq:part1}\\
        &=\ep_f^{-1}\exp(\ep_f)n \EE X^{n-1},\quad \text{where } X\sim \mathrm{Exp}(\ep_f)\\
        &=\ep_f^{-1}\exp(\ep_f)n\cdot(n-1)!\ep_f^{-(n-1)}\\
        &=\ep_f^{-n}\exp(\ep_f)n!,
    \end{align}
    where \eqref{eq:Ross} uses \citet[Self-Test Exercise 7.20]{ross2019first} and \eqref{eq:part1} uses the result from part 1. %By Jensen's inequality, we also have $|\EE N^n|\leq \frac{\exp(\ep_f)n!}{\ep_f^n}$. 
    \item Recall that the Bernstein condition for mean-zero sub-exponential random variables states that if 
    $\EE |N|^n\leq (1/2) n! \sigma^2 b^{n-2},$ 
    then $N$ is $( 2\sigma^2, 2 b)$-sub-exponential. Comparing our result with this, we identify $\sigma^2=2\exp(\ep_f)/\ep_f^2$ and $b=1/\ep_f$ and we have that the CND $N$ is $(4\exp(\ep_f)/\ep_f^2, 2/\ep_f)$-sub-exponential. %\qedhere
    \end{enumerate}
\end{proof}
%%%%%%%%%%%%%%%%%%%%%%%%%%%%%%%%%%%%%%%%%%%
%%%   Log-concave
%%%%%%%%%%%%%%%%%%%%%%%%%%%%%%%%%%%%%%%%%%%
\begin{comment}
\begin{restatable}{lem}{lemlogconcave}\label{lem:logconcave}
    Let $N$ be a continuous log-concave random variable. Then for all $t\in \RR^{\geq 0}$, 
    \[\sup_{a\in \RR} P(|N-a|\leq t/2)= \mathrm{TV}(N,N+t).\]
\end{restatable}
\begin{proof}
    By the monotone likelihood ratio property of log-concave distributions, we know that the optimal rejection region to minimize the sum of type I and type II errors for the hypothesis $H_0: N$ versus $H_1: N+t$ is of the form $(s,\infty)$ for some $s\in \RR$. Setting $a=s-t/2$, we see that 
    \begin{align*}
        P(|N-a|\leq t/2)&=1-(P(N>a+t/2)+P(N+t\leq a+t/2))\\
        &=1-(P(N>s)+P(N+t\leq s))\\
        &=\mathrm{TV}(N,N+t),
    \end{align*}
    where we used the fact that $N$ is continuous and that $(s,\infty)$ is the  rejection region that achieves the total variation distance between $N$ and $N+t$.
\end{proof}
\end{comment}

\lemlogConcaveConstruction*
\begin{proof}
    Because $F$ is a log-concave CND for $f_1$, we have that $F_t(\cdot)\defeq F(t\cdot)$ is a CND for $f_t$ \citep{awan2022log}. Since $F(0)=1/2$, we have that $F(-t)=F_t(-1)=f_t(1/2)$. It remains to establish the connection with $c_{f_{2t}}$. 

    Case 1: If $f_t(1/2)=0$, then by symmetry of $f_t$, we have that $f_t(1)\leq 1/2$. This is because if $f=T(P,Q)=T(Q,P)$, then the type I  error $\alpha$ and type II error $\beta$ can be interchanged: $\beta=f(1-\alpha)$ if and only if $\alpha=f(1-\beta)$. It follows that 
    \[f_{2t}(1-0)=f_t^{\circ 2}(1)\leq f_t(1/2)=0.\]
    Since tradeoff functions are non-negative, we have that $f_{2t}(1-0)=0$ and conclude $c_{f_{2t}}=0=f_{t}(1/2)$. 

    Case 2: If $f_t(1/2)>0$, then consider the following:
    \begin{align}
        f_{2t}(1-f_t(1/2))
        &=f_t^{\circ 2}(1-F_t(-1))\\
        &=f_t^{\circ 2}(F_t(1))\\
        &=F_t(-1)\label{eq:logCf}\\
        &=F(-t)\\
        &=f_t(1/2),
    \end{align}
    where in \eqref{eq:logCf}, we use the fact that $F_t(1)=1-F_t(-1)=1-F(-t)=1-f_{t}(1/2)<1$ and apply the second recursion formula from Lemma \ref{lem:recurrence}. 
\end{proof}

\thmstochLog*
\begin{proof}
Let $t\in \RR^{\geq 0}$ and let $a\in \RR$. Then, 
    \begin{align}
        P(|N'-a|<t)&\leq P(-t<N'-a\leq t)\label{eq:log1}\\
        &\leq \mathrm{TV}(N',N'+2t)\label{eq:log2}\\
        &\leq 1-2c_{f_{2t}}\label{eq:log3}\\
        &=1-2F_N(-t)\label{eq:log4}\\
        &=-F(-t)+F(t)\label{eq:log5}\\
        &=P(|N|<t),\label{eq:log6}
    \end{align}
    where in \eqref{eq:log2} we apply Lemma \ref{lem:anti}, in \eqref{eq:log3} we use the assumption that $T(N',N'+2t)\geq f_{2t}$ which implies that $\mathrm{TV}(N',N'+2t)\leq 1-2c_{f_{2t}}$, in \eqref{eq:log4} we use Lemma \ref{lem:logConcaveConstruction}, and \eqref{eq:log5} and \eqref{eq:log6} use the symmetry and continuity of $F_N$. 
    Finally, we have that 
    \[P(|N'-a|\leq t)=\lim_{s\downarrow t} P(|N'-a|<s)
    \leq \lim_{s\downarrow t} P(|N|<s)=P(|N|\leq t).\]
    The other statements follow as standard properties of stochastic dominance. 
\end{proof}

%%%%%%%%%%%%%%%%%%%% discrete CNDS
\propround*
\begin{proof}
Property 1 of Definition \ref{def:dCND} follows from fact that $T(\Delta N_c,\Delta N_c+\Delta)\geq f$ and the postprocessing property of tradeoff functions. For property 2 of Definition \ref{def:dCND}, note that 
\begin{align*}
    P(N=x)&=P(x-1/2\leq \Delta N_c< x+1/2)\\
    %&=F_{N_c}((x+1/2)/\Delta)-F_{N_c}((x-1/2)/\Delta)\\
    &=F_{\Delta N_c}(x+1/2)-F_{\Delta N_c}(x-1/2),
\end{align*}
since $N_c$ is a  continuous random variable. It follows that $F_N(t) =P(N\leq t) = F_{\Delta N_c}(t+1/2)$ for all integers $t$. By Lemma \ref{lem:recurrence}, we know that $f(F_{N_c}(t+1))=F_{N_c}(t)$ for all $t\in \RR$ such that $F_{N_c}(t+1)<1$, or equivalently, $f(F_{\Delta N_c}(t+\Delta ))=F_{\Delta N_c}(t)$ for all $t\in \RR$ such that $F_{\Delta N_c}(t+\Delta )<1$. Property 3 follows immediately from the symmetry of $N_c$.  %Property 3 follows from the connection between $F_N$ and $F_{N_c}$:  $F_N(t)= F_{\Delta N_c}(t+1/2)$. 
\end{proof}

\propunique*
\begin{proof}
 First we will show that properties 2 and 3 of Definition \ref{def:dCND} uniquely determine the cdf, so there is at most one discrete CND. Notice that given $F(0)$, the recursion in property 2 of Definition \ref{def:dCND} fully determines the cdf. Property 3 of Definition \ref{def:dCND} implies that $F(-1)=P(N<0)=P(N>0)=1-F(0)$. Combining this with $f(F(0))=F(-1)$, which is from property 2, we have that $f(F(0))=1-F(0)$, which implies that $F(0)=1-c_f$. We see that there is at most one discrete CND at sensitivity one. Since Proposition \ref{prop:round} established that the rounding of any continuous CND gives a discrete CND, it follows that this construction results in the unique discrete CND at sensitivity 1. 

Let $N_c$ be any CND for $f$. Note that $F_{N_c}(1/2)=1-c_f$ and $F_{N_c}(-1/2)=c_f$, by Lemma \ref{lem:CNDcf}. By the recursion in Lemma \ref{lem:recurrence} it follows that for an integer $x<=0$, $F_{N_c}(x+1/2)=f^{\circ x}(1-c_f)$ and $F_{N_c}(x-1/2) = f^{\circ x}(c_f)$. Using the expression from the proof of Proposition \ref{prop:round}, we have that for an integer $x\leq 0$, 
\begin{align*}
P(N=x-1) = F_{N_c}(x-1+1/2)-F_{N_c}(x-1-1/2)=f^{\circ x}(1-c_f)-f^{\circ x}(c_f).
\end{align*}
Finally, we have that $P(N=x)$ is symmetric with the desired formula:
\begin{align*}
P(N=-x) &=F_{ N_c}(-x+1/2)-F_{ N_c}(-x-1/2)\\
&=1-F_{N_c}(x-1/2)-1+F_{N_c}(x+1/2)\\
&=F_{N_c}(x+1/2)-F_{N_c}(x-1/2)\\
&=P(N=x),
\end{align*}
where we used the fact that $F_{N_c}$ is symmetric (i.e., $F(-x)=1-F(x)$).
\end{proof}

\corinteger*
\begin{proof}
Since $N$ takes integer values, we can write
\begin{align*}
\sup_{a\in \ZZ} P(|N-a|\leq t)=\sup_{a\in \ZZ} P\Big(-(2t+1)/2<N-a\leq (2t+1)/2\Big)\leq 1-2f^{\circ t}(c_f),
\end{align*}
where the inquality uses Theorem \ref{thm:anti}.
\end{proof}

\lemsensOne*
\begin{proof}
Let $F_f$ be the CND constructed in Proposition \ref{prop:CNDsynthetic}, and recall that $F_N(t) = F_f(t+1/2)$ for $t\in \ZZ$, by Proposition \ref{prop:unique}. Then, 
\begin{align}
    P(|N|\leq t)&=P(-t\leq N\leq t)\\
    %&=P(N\leq t) - P(N\leq -t-1)\\
    %&=F_f(t+1/2)- F_f(-t-1/2)\\
    &=1-2F_f(-t-1/2)\label{eq:sym1}\\
    &=1-2f^{\circ t}(F(-1/2))\label{eq:recurrence1}\\
    &=1-2f^{\circ t}(c_f),\label{eq:cf1}
\end{align}
where in \eqref{eq:sym1}, we use the fact that $F_f$ is the cdf of a symmetric continuous random variable, in \eqref{eq:recurrence1} we use the recurrence in Proposition \ref{prop:CNDsynthetic}, and in \eqref{eq:cf1} we use the construction in Proposition \ref{prop:CNDsynthetic} that $F_f(-1/2)=c_f$. 
\end{proof}

\thmstochDisc*
\begin{proof}
    Let $t\in \ZZ^{\geq 0}$. Then, 
       $ \sup_{a\in \ZZ}P(|N'-a|\leq t)\leq 1-2f^{\circ t}(c_f)\label{eq:stochD1}
        =P(|N|\leq t)$, 
    where the inequality used Corollary \ref{cor:integer} and the equality used Lemma \ref{lem:sens1}. The first two bullets follow from properties of stochastic dominance. Using $a=0$, we get the third bullet.
\end{proof}

% Acknowledgements and Disclosure of Funding should go at the end, before appendices and references

\subsection{Derivation of the Cauchy-DP tradeoff function}
In Example \ref{ex:cauchy}, we implement the CND via Proposition \ref{prop:CNDsynthetic} for $C_1=T(\mathrm{Cauchy}(0,1),\mathrm{Cauchy}(1,1))$. Sampling from this distribution is straightforward provided that we can evaluate $C_1$ and have $c_1=c_{C_1}$, using Lemma \ref{lem:F6} below:

\begin{lemma}[Proposition F.6: \citealp{awan2023canonical}]\label{lem:F6}
   Let $f$ be a symmetric nontrivial
   
   \noindent tradeoff function and let $F_f$ be as in
Proposition \ref{prop:CNDsynthetic}. Then the quantile function 
$F^{-1}_f : (0, 1) \rightarrow \RR$ for $F_f$ can be expressed as 
\[F^{-1}_f(u) = \begin{cases}
F^{-1}_f(1-f(1-u))& u<c_f,\\
\frac{u-1/2}{1-2c_f}&c_f\leq u\leq 1-c_f,\\
F^{-1}_f(f(u))+1&u>1-c_f.
\end{cases}\]
Furthermore, for any $u\in (0,1)$, the $F^{-1}_f(u)$ takes a finite number of recursive steps to evaluate. Thus, if $U\sim U(0,1)$ then $F^{-1}_f(U)\sim F_f$.
\end{lemma}

Recall that in Example \ref{ex:cauchy}, we calculated that $c_1=1/2-(1/\pi)\arctan(1/2)$, so it only remains to numerically evaluate $C_1$. By the Neyman-Pearson Lemma, we know that the family of optimal hypothesis tests reject when the likelihood ratio is above a given threshold. Given an observation $x\in \RR$ from either $\mathrm{Cauchy}(0,1)$ or $\mathrm{Cauchy}(1,1)$, the likelihood ratio statistic is 
\[\mathrm{LRT}(x) = \frac{\pi(1+x^2)}{\pi(1+(x-1)^2)}=1+\frac{2x-1}{x^2-2x+2},\]
For rejection threshold $t+1\geq 1$, we have the corresponding rejection region $[x_1(t),x_2(t)]$, where
\[x_1(t)=\frac{t+1-\sqrt{-t^2+t+1}}{t},\]
\[x_2(t) = \frac{t+1+\sqrt{-t^2+t+1}}{t}.\]
The type I error is $1-\alpha(t) = F_C(x_2(t))-F_C(x_1(t))$, where $F_C$ is the cdf of $\mathrm{Cauchy}(0,1)$, and the type II error is $C_1(\alpha(t)) = 1-F_C(x_2(t)-1)+F_C(x_1(t)-1)$. Since $\alpha(t)$ is a monotone function, given $\alpha\in [1-c,1]$ we can numerically solve for $t$ such that $\alpha(t)=\alpha$ and then evaluate $C_1(\alpha(t))$. 

So far, we have a method of evaluating $C_1(\alpha)$ for $\alpha\in [1-c,1]$. Since $\mathrm{LRT}(1-x)=1/\mathrm{LRT}(x)$, we have that for $\alpha\in [0,1-c]$, the rejection region is of the form $(-\infty,1-x_2(t)]\cup[1-x_1(t),\infty)$, where $t$ is chosen to satisfy $1-\alpha = F_C(1-x_2(t))+1-F_C(1-x_1(t))$. Then $C_1(\alpha) = F_C(-x_1(t))-F_C(-x_2(t))$. 

Code to evaluate this tradeoff function is provided, along with code to implement Lemma \ref{lem:F6} above, enabling us to sample from the CND of Proposition \ref{prop:CNDsynthetic}. 

% Note: in this sample, the section number is hard-coded in. Following
% proper LaTeX conventions, it should properly be coded as a reference:

%In this appendix we prove the following theorem from
%Section~\ref{sec:textree-generalization}:

\vskip 0.2in
\bibliography{bibliography}

\end{document}